\DeclareMathOperator{\E}{\mathbb{E}}
\renewcommand\expandafter\subsection\expandafter
  \newcommand\@fb@secFB{\FloatBarrier
    \gdef\@fb@afterHHook{\@fb@topbarrier \gdef\@fb@afterHHook{}}}%
  \g@addto@macro\@afterheading{\@fb@afterHHook}%
  \gdef\@fb@afterHHook{}%
\renewcommand\expandafter\subsection\expandafter{%
    \expandafter\@fb@secFB\subsection
  }%
\newtheorem{prop}{Proposition}
\begin{document}

\begin{onehalfspace} 

\title{\textbf{Dynamic Effects of Persistent Shocks}\\
}

\author{ \begin{tabular}{ccc}
Mario Alloza\footnotemark[1] & Jes\'{u}s Gonzalo\footnotemark[1] & Carlos Sanz\footnotemark[1]\thanks{We thank Fabio Canova, Jes\'us Fern\'andez-Villaverde, Alessandro Galesi, Gergely G\'anics, Juan F. Jimeno, Mikkel Plagborg-M{\o}ller, Juan Rubio-Ram\'irez, Enrique Sentana,  and seminar participants at the I Workshop of the Spanish Macroeconomics Network (Universidad P\'ublica de Navarra), Bank of Spain, CFE 2018 (University of Pisa), II Workshop in Structural VAR models (Queen Mary University of London), VII Workshop on Empirical Macroeconomics (Ghent University), 2019 American Meeting of the Econometric Society (University of Washington), and 2019 edition of the Padova Macro Talks for insightful comments.
Alloza: \href{mailto:m.alloza@bde.es}{m.alloza@bde.es}. Gonzalo: \href{mailto:jesus.gonzalo@uc3m.es}{jesus.gonzalo@uc3m.es}. Sanz: \href{mailto:carlossanz@bde.es}{carlossanz@bde.es}.}\\
\begin{footnotesize}Bank of Spain \end{footnotesize} & 
\begin{footnotesize}Universidad Carlos III de Madrid \end{footnotesize} &  \begin{footnotesize}Bank of Spain \end{footnotesize} \\ 
\end{tabular} }

\date{\vspace{1cm}  
 \today }

\maketitle

\begin{abstract}
We provide evidence that many narrative shocks used by prominent literature are persistent. We show
that the two leading methods to estimate impulse responses to an independently identified shock (local projections and distributed lag models) treat persistence differently, hence identifying different objects. We propose corrections to re-establish the equivalence between local projections and distributed lag models, providing applied researchers with methods and guidance to estimate their desired object of interest. We apply these methods to well-known empirical work and find that how persistence is treated has a sizable impact on the estimates of dynamic effects.
\end{abstract}

\thispagestyle{empty}

\noindent \textit{Keywords}: impulse response function, local projection, shock, fiscal policy, monetary policy. \\

\noindent \textit{JEL} classification: C32, E32, E52, E62.

\end{onehalfspace}



\newpage
\section{Introduction}
Estimating the impact of economic shocks is a crucial aspect of macroeconomics. 
To identify economically meaningful shocks, the literature has traditionally relied on systems of equations coupled with restrictions implied by economic theory. 
Recently, researchers are increasingly using narrative identification, e.g., looking at written official documentation or newspapers and exploiting arguably exogenous variation in these series.\footnote{See \citet{romer2004measure}, \citet{romer2010macroeconomic}, or \citet{ramey2018government} for prominent examples of narrative identification.} While its focus on identifying exogenous variation is appealing, the lack of restrictions in narrative methods yields objects with less standard time series properties.

In this paper, we analyze how the presence of persistence in narrative shocks affects the identification and estimation of their dynamic effects, providing empirical researchers with methods and guidance to deal with this issue.\footnote{Throughout the paper we use the term \textit{persistence} as a phenomenon captured or reflected by \textit{serial correlation}, a testable condition. We use both terms interchangeably. }

We begin by showing that many narrative shocks used by prominent literature are serially correlated. In particular, we systematically test for serial correlation in eight shocks used in leading economics journals. We find evidence of serial correlation in seven of them. The presence of persistence in the shock does not necessarily preclude these variables from being categorized as ``shocks'' following standard definitions of aggregate shocks. More concretely, according to \citet{ramey2016macroeconomic}, a shock should represent unanticipated movements. What this condition implies is that shocks are unforecastable, i.e., they are forecast errors. In particular, when the forecasting loss function is not quadratic, for instance, the check function, the forecasting errors may not be a martingale difference sequence (m.d.s) and therefore could be serially correlated. However, serial correlation poses additional challenges for the identification of the macroeconomic experiment of interest.

When estimating the dynamic response of some variable to a serially correlated shock, some part of this persistence may be passed on to the impulse response function (IRF). Hence, a researcher may want to identify two objects of interest: the response \textit{as if} the shock were uncorrelated, i.e., to a counter-factual serially uncorrelated shock ($\mathcal{R}(h)^{*}$), or the response to the shock as it is, i.e.,  including the effect of persistence in the IRF ($\mathcal{R}(h)$). 
Deciding for one or the other depends on what specific question the researcher is trying to address. On the one hand, $\mathcal{R}(h)^{*}$ allows to compare effects with those obtained from a theoretical or empirical model, and facilitates comparisons across different types of shocks (e.g., monetary versus fiscal shocks) or across countries. On the other hand, $\mathcal{R}(h)$ is more appropriate if  the researcher is interested in evaluating the \textit{most likely dynamic response} of a variable to a shock based on historical data. Regardless of which object is preferred by the researcher, the difference between $\mathcal{R}(h)$  and $\mathcal{R}(h)^{*}$ is informative about how much of the dynamic transmission of a shock is due to the presence of persistence.

We consider the two most popular methods to estimate impulse responses when a shock has already been identified (e.g., using narrative methods). These are local projections (LPs) (\citet{jorda2005estimation}) and distributed lag models (DLMs).\footnote{By DLMs  we refer to single-equation regressions of an outcome variable against the contemporaneous value and lags of the shock with or without an autoregressive component. These methods are also known as truncated moving average regressions. These specifications are frequent in the applied literature---see, e.g., \citet{romer2004measure}, \citet{cerra2008myth}, \citet{romer2010macroeconomic}, \citet{alesina2015output}, \citet{arezki2017news}, and \citet{coibion2018cyclical}.} We show that, if there is no serial correlation, the two methods identify the same object. However, we demonstrate that this equivalence breaks down in the presence of serial correlation. In this case, LPs identify $\mathcal{R}(h)$ while DLMs regressions identify  $\mathcal{R}(h)^{*}$. The intuition is that LPs compute the response at horizon $h$ by regressing the outcome variable in $t+h$ against the shock in time $t$. Since the standard setting does not account for how the shock evolves between $t$ and $t+h$, the responses include two components: an economic effect (the economic impact of the shock on the endogenous variables) and an effect that exclusively depends on the degree of serial correlation of the shock. By contrast, DLMs  implicitly account for the evolution of the shock, hence identifying the effect as if the shock were not persistent. 

While this result might seem discouraging, we then show that it is possible to adjust both estimating methods to obtain the desired object of interest. Consider a researcher who wants to use LPs and is interested in identifying $\mathcal{R}(h)^{*}$. As mentioned, if she runs standard LPs with a persistent shock, she will identify $\mathcal{R}(h)$ instead. Perhaps surprisingly, the most obvious solution of including lags of the shock will not address this issue. However, we show that, by including \textit{leads} of the shock, she will recover $\mathcal{R}(h)^{*}$.  Likewise, we show how standard DLMs can be adapted so that they identify $\mathcal{R}(h)$.

To illustrate how our methods work, we consider an actual empirical application, which also serves to assess the quantitative relevance of persistence in a real case by comparing estimates of $\mathcal{R}(h)$ and $\mathcal{R}(h)^{*}$. In particular, we consider \citet{ramey2018government}'s LPs estimation of the dynamic effects to a shock constructed from news about future changes in defense spending. We find that, after two years, the responses that exclude the effect of persistence in the shock are about 40\% lower than the original Ramey and Zubairy (2018)'s estimates. The effect of serial correlation also seems to have an effect on the short-run response of fiscal multipliers during recessions.
In the appendix, we consider additional applications, based on \citet{guajardo2014expansionary}, \citet{romer2004measure}, \citet{gertler2015monetary}, and \citet{romer2010macroeconomic}. Overall, we find that how persistence is treated can have a sizable impact on the estimated effects.
 

The results of this paper generalize in at least three important aspects. First, the results of the (lack of) equivalence between LPs and DLMs when the shock is persistent carry over to multivariate settings popularly used in the empirical literature. Building on a result by \citet{plagborg2018local}, we show that the dynamic response from a VAR with the shock embedded as an endogenous variable is equivalent to that of a VAR with the shock included as an exogenous variable only when that shock has no serial correlation.\footnote{This result arises because a VAR with a shock as an exogenous variable (often known as VAR-X) can be seen as  multivariate generalization of a DLM (see  \citet{mertens2012empirical} or \citet{favero2012tax} for examples of VAR-X specifications). Furthermore, \citet{plagborg2018local} show that, under some assumptions, LPs are equivalent to a VAR when the shock is included as an endogenous variable (as in  \citet{bloom2009uncertainty} or \citet{ramey2011identifying}).} We believe this result has relevant practical implications for applied macroeconomic researchers. Second, we also show that our results generalize to specific contexts where a researcher employs an instrument in a LP setting (also known as LP-IV). 
Lastly, a researcher interested in using LPs to uncover  the dynamic relations of two variables may be interested in including leads of a third variable to construct counterfactual responses as if the behavior of that third variable had remained constant over the response horizon. This can be seen as the LP counterpart of constructing counterfactual responses in a VAR that allow to separate a direct effect of a regressor on a dependent variable from other indirect effects. This procedure has been frequently used in the empirical VAR literature.\footnote{See, for example, \citet{bernanke1997systematic}, \citet{sims2006does}, or \citet{bachman2012confidence}.  In recent research,  \citet{cloyne2020decomposing} propose an alternative method based on a Blinder-Oaxaca-type decomposition.}

Our paper makes four contributions to the literature. First, we formally and systematically test for the presence of serial correlation in shocks used by previous work. Although the issue of persistence in shocks has been noted before,\footnote{\citet{ramey2016macroeconomic} finds that the time aggregation required to convert the shock in \citet{gertler2015monetary} to monthly frequency, inserts serial correlation. \citet{miranda2018transmission} corroborate this finding, by regressing the shock on four lags and testing their joint significance. They also find that other measures of monetary shocks such as \citet{romer2004measure} exhibit serial correlation.} we believe we are the first to formally and systematically test for serial correlation in prominent narratively-identified shocks.

Our second contribution is to show that, while both LPs and DLMs  identify the same object if the shock is serially uncorrelated, this equivalence breaks down in the presence of persistence. \citet{plagborg2018local} prove that LPs and VAR methods identify the same impulse responses when both methods have an unrestricted lag structure. This result formalizes some of the examples provided in \citet{ramey2016macroeconomic}, which implies that different identification schemes in a VAR setting can be implemented in a LP context. Our result builds on a different premise: we consider the cases where the shock has already been identified using narrative measures and the researcher wants to use LPs or DLMs  to estimate dynamic effects. 

Our third contribution is to provide methods to re-establish the LP-DLM equivalence when there is persistence, providing applied researchers with a menu of options to identify their desired object of interest. In this regard, our method of adding leads to LPs is related to the tradition in factor analysis by \citet{geweke1981maximum} and on the DOLS estimation of cointegration vectors (\citet{stock1993simple}). \citet{dufour1998causality}  introduce leads in some of their IRFs to study causality at different horizons. 
 \citet{faust2011efficient} find that including ex-post forecast errors  results in an accuracy improvement when  forecasting excess bond and equity returns. More recently, \citet{teulings2014economic} find that  estimating dynamic effects of a dummy variable (e.g., banking crisis) in a panel data context with fixed effects and LPs suffers from a negative small-sample bias, since the estimation of the fixed effect picks up the value of future realization of the dummy variable. The authors show that this bias is attenuated either by increasing the sample size or by including future realizations of the dummy variable over the response horizon.\footnote{By contrast, the difference between LPs and DLMs  that we identify is not due to a bias in the estimates, but instead to differences in identification due to the persistence of the shock. Since our problem still persists asymptotically, increasing the sample does not reduce the LP-DLM difference. Additionally, this difference is not necessarily negative, but will depend on the nature of the data generating process that drives the persistence.}

Finally, we speak to some recent and well-known empirical work on the effects of monetary and fiscal policy (\citet{ramey2018government} \citet{guajardo2014expansionary}, \citet{romer2004measure}, and \citet{gertler2015monetary}). Our contribution is to apply our methods to these works and re-assess their empirical evidence. We do not claim that any of these papers is ``wrong''. Rather, what our results indicate is that the correct interpretation of their results depends on the desired object of interest and the employed estimating method.

The rest of the paper proceeds as follows. Section \ref{sec:evidence} provides evidence of serial correlation in shocks used by previous work. Section \ref{sec:econometric} describes that LPs and DLMs  treat persistence differently, and proposes a solution to re-establish the equivalence between them. It also provides simulations to help understand the results. Section~\ref{sec:discussion} discusses the previous findings and the options available to applied researchers working with a persistent shock. Section~\ref{sec:applications} lays out an application. Section~\ref{sec:conclusions} concludes. The online appendix contains proofs of the theoretical results and further material, including the generalization of the results to VAR and IV settings, additional robustness exercises, and other empirical applications.
\section{Evidence and implications of serial correlation in shocks}\label{sec:evidence}



When shocks are identified from within an empirical model, the researcher imposes  a set of restrictions to recover shocks that can be economically meaningful. 
Typically, this implies that the resulting shocks are well-behaved and display some statistical features that might be seen as desirable---in particular, no persistence. Alternatively, shocks may be identified without the use of a model, for example, by using \textit{narrative} methods. This alternative identification relies on the existence of historical sources, such as official documentation, periodicals, etc., from which a shock variable is constructed. In this section, we provide evidence that it is common that shocks identified this way are persistent. We then take stock on this finding in light of \citet{ramey2016macroeconomic}'s canonical definition of a shock.

We study eight aggregate shocks used by prominent literature on monetary and fiscal policy. Some of these shocks are identified using narrative methods, while some employ alternative strategies such as timing restrictions using high-frequency methods.\footnote{ In particular, \citet{romer2010macroeconomic} and \citet{cloyne2013tax} construct measures of exogenous tax changes for the US and the UK, respectively. The authors classify legislated tax measures according to the motivation, as reflected in official documentation, and consider those tax changes that are the result of causes non-related to the state of the economy. In a similar vein, \citet{ramey2018government} construct a measure of government spending shocks by looking at the announcements of future changes in defense spending. \citet{guajardo2014expansionary} 
construct a series of fiscal consolidations in OECD countries motivated by a desire to reduce the deficit (as opposed to motivated by current or prospective economic conditions). \citet{romer2004measure} and \citet{cloyne2016monetary} identify exogenous changes in monetary policy by looking at the minutes and discussion of the monetary policy committees of the Federal Reserve and Bank of England, respectively (they also orthogonalize the resulting series using forecastable information available at that time). Alternatively, \citet{gertler2015monetary} identify a proxy of monetary policy shocks using  high frequency surprises around policy announcements. Lastly, \citet{arezki2017news} construct a measure of news shocks based on the date and size of worldwide giant oil discoveries. While some of these papers employ auxiliary regressions to isolate forecastable information, all have in common that the shocks have not been exclusively identified from a time series model.}

To test for the presence of persistence we use a \emph{portmanteau}-type test following \citet{box1970distribution}.\footnote{We implement the small sample correction following \citet{ljung1978measure}. For the cases of  \citet{arezki2017news} and \citet{guajardo2014expansionary}, which refer to panel data, we test serial correlation using a generalized version of the autocorrelation test proposed by \citet{arellano1991tests} that specifies the null hypothesis of no autocorrelation at a given lag order.} The null hypothesis is that the data are not serially correlated. We test for the presence of autocorrelation in 40 periods, although results are robust to different horizons (see Table~\ref{tab:survey_rob}).


The results from these tests are displayed in Table~\ref{tab:survey}.  Out of the eight considered shocks, six show very large test statistics that result in rejections of the hypothesis of serial uncorrelation for any level of significance. One of them (\citet{romer2004measure}) displays some degree of serial correlation which leads to failure to reject the null hypothesis only for significance levels above 5\%.\footnote{The hypothesis of serial uncorrelation is rejected for significance levels below 5\% when considering fewer lags in the test or when considering a longer series (with updated data) from \citet{coibion2012monetary}. The presence of some degree of autocorrelation is shown in Panel E of Figure~\ref{fig:ac}.} As further evidence of the presence of serial correlation in the above series, Figure~\ref{fig:ac}  plots the associated correlograms. \citet{romer2010macroeconomic} constitutes the only considered shock for which we fail to detect the presence of persistence.\footnote{Persistence may have different origins. In some instances, it arises because of the method used to convert a nominal series into real terms. For example, \citet{cloyne2013tax} and \citet{arezki2017news} divide their series by lagged GDP, while  \citet{ramey2018government} use the GDP deflator and a measure of trend GDP.  In other instances, the serial correlation arises because of the mapping between different time frequencies. This is usually the case with the identification of monetary policy shocks, such as \citet{romer2004measure}, \citet{gertler2015monetary}, or \citet{cloyne2016monetary}, where daily monetary changes are converted into monthly series.  Finally, there are other shocks that are more likely to appear together, because of their multi-period nature (for example, episodes of fiscal consolidations, as identified by \citet{guajardo2014expansionary}, tend to be spread over the course a few years) or because  they cluster around events like wars (as in \citet{ramey2018government}). }

\begin{table}
\renewcommand{\arraystretch}{1.5} 
\caption{Persistence in macroeconomic shocks} 
\label{tab:survey}
\begin{center}
{\small
\begin{tabular}{ cccc } 
 \hline
  \\[-1.4em] 
paper&type of shock&Box-Pierce (40) test&p-value\\
  \\[-1.4em]
\hline
\citet{arezki2017news}&news about oil discoveries&177.903&0.000\\
\citet{cloyne2013tax}&tax  (UK)&98.751&0.000\\
\citet{cloyne2016monetary}&monetary policy  (UK)&84.422&0.000\\
\citet{gertler2015monetary}&monetary policy  (US)&124.568&0.000\\
\citet{guajardo2014expansionary}&fiscal consolidations&185.810&0.000\\
\citet{ramey2018government}&government spending&182.950&0.000\\
\citet{romer2004measure}&monetary policy  (US)&53.758&0.072\\
\citet{romer2010macroeconomic}&tax  (US)&19.023&0.998\\
 \hline
\end{tabular}
 }
\end{center}
\vspace{-0.2cm}
 {\footnotesize The third column implements the \citet{box1970distribution} test of serial correlation using the small sample correction following \citet{ljung1978measure}. The null hypothesis of this test assumes that the data are not serially correlated within 40 periods. For \citet{arezki2017news} and \citet{guajardo2014expansionary}, which refer to panel data, we use a generalized version of the autocorrelation test proposed by \citet{arellano1991tests}. The serial correlation test yields p-values smaller than 0.05 when testing the shocks of \citet{romer2004measure} with fewer lags or when using the updated data from \citet{coibion2012monetary} (p-value drops to 0.0041). \citet{ramey2018government} use extended data from \citet{ramey2011identifying}.}
\end{table}



According to the canonical definition (\citet{ramey2016macroeconomic}), empirical shocks should (i) be exogenous to current and lagged endogenous variables, (ii) be uncorrelated to other exogenous shocks, and (iii) represent unanticipated movements (or news about future shocks). While one might think that the presence of persistence violates the third condition, this is not necessarily the case. When the forecasting loss function is the quadratic one, it is well known that the forecasting errors must be a m.d.s with respect to some information set and therefore uncorrelated.\footnote{ See \citet{granger2006forecasting} and \citet{lee2008loss} for a description and analysis of loss functions.} This is the case when the shocks come directly from a conditional expectation model, like a VAR model. When the forecasting loss function is not quadratic, for instance, the check function (popular in quantile regressions), the forecasting errors are not a m.d.s and therefore they could be serially correlated. They still are forecasting errors (satisfy (iii)) but are serially correlated. 

This indicates that serially-correlated shocks can still be labeled ``shocks'' according to the previous definition. However, even if a researcher always operates under the quadratic loss function and considers that serially-correlated shocks should not be called ``shocks'', in the rest of the paper we show that such shocks can still provide valuable information for empirical analysis. 

\section{Theoretical framework} \label{sec:econometric}

We consider the following VAR as the data generating process: 
\begin{eqnarray}  \label{eq:DGP_big}
	\bm {y_t} &=& \sum_{\ell=1}^\infty \bm{A_\ell y_{t-\ell}} + \sum_{q=0}^\infty \bm{\delta_q} x_{t-q} + \bm{u_t} \nonumber \\ 
	x_t &=& \sum_{r=1}^\infty \gamma_r x_{t-r} + \varepsilon_t,
\end{eqnarray}

where $\bm{y_t}$ is a vector of endogenous time series, $x_t$ is a strictly exogenous variable such that $\E \left( \bm{u_t} | \bm {y_{t-s}}, x_{t-p}  \right) $ $\forall s>0$, $p \gtreqless 0$, and $\bm{u_t}$ and $\varepsilon_t$ are a vector and a scalar i.i.d. variables, with mean and variance given by $\bm{u_t} \sim (\bm{0},\,\bm{\Sigma_u}^{2})$ and $\varepsilon_t \sim (0,\sigma_\varepsilon^{2})$, respectively. Following the evidence discussed in the previous section, $x_t$ is considered to be a shock identified using narrative methods and is allowed to be persistent.

This general framework encompasses several empirical specification often found in the literature. For example, when ignoring the second equation, system~\eqref{eq:DGP_big}  becomes a VAR with an exogenous variable (or VAR-X).\footnote{See, for example, \citet{mertens2012empirical} or \citet{favero2012tax}, which assume $\ell$ and $q$ are finite numbers.} Additionally, when $\bm{y_t}$ is a scalar and $\bm{A_\ell} = \bm{0}$ $\forall \ell$, system ~\eqref{eq:DGP_big}  becomes a DLM.\footnote{As in \citet{romer2004measure} or \citet{romer2010macroeconomic}.} Alternatively, when $x_t$ is instead included in the vector of endogenous variables $\bm{y_t}$, system~\eqref{eq:DGP_big} becomes a standard VAR.\footnote{As in \citet{bloom2009uncertainty}  or \citet{ramey2011identifying}.} We explore the implications of this last representation in Appendix~\ref{app:VAR}.

Without loss of generality, we consider a simpler version of system~\eqref{eq:DGP_big} with $\bm A_\ell=\bm{0} $ $\forall \ell$, $\bm \delta_q=0 $ $\forall q>0$ and  $\gamma_r=0 $ $\forall r>1$: 
\begin{eqnarray} \label{eq:DGP}
	y_t &=& \delta x_{t} + u_t \nonumber \\ 
	x_t &=& \gamma x_{t-1} + \varepsilon_t,
\end{eqnarray}
where $y_t$ is now the economic outcome variable for interest (for example, GDP), $x_t$ is an economic shock (e.g., a fiscal or monetary policy shock) which is strictly exogenous $\E \left( {u_t} |  x_{t-p}  \right) $ $\forall p \gtreqless 0$, and $u_t$ and $\varepsilon_t$ are i.i.d variables with mean and variance given by $u_t \sim (0,\,\sigma_u^{2})$ and $\varepsilon_t \sim (0,\sigma_\varepsilon^{2})$, respectively.  $\delta$ measures the contemporaneous impact of variable $x_t$ on $y_t$ and is the main parameter of interest.

The data generating process described by system~\eqref{eq:DGP} is intentionally simple to illustrate how the dynamic relationship between the dependent variable $y_t$ and the shock $x_t$ depends on the persistence of the latter. Importantly,  the obtained results also arise in more complex settings when we incorporate more general characteristics as in system~\eqref{eq:DGP_big}.\footnote{For example, in Subsection 3.3, we consider  models that also include persistence in the dependent variable and lagged effects of the shock. Appendix \ref{sec:IV} proposes a DGP that calls for the use of instruments in LP regressions. Appendix~\ref{app:simulreal} provides an alternative specification where the degree of serial correlation in the shock $x_t$ is taken from the actual data, instead of following an autoregressive process.}

We are interested in recovering the response of our variable of interest $y_t$ when a shock $x_t$ hits the system in period $t$. We consider two different IRFs. The first one, denoted by $\mathcal{R}(h)$ for period $h$, is:

\begin{equation} \label{eq:IRF_per}
	\mathcal{R}(h) = \E \left[ y_{t+h} | x_t=1, \Omega_{t-1} \right]  - \E \left[ y_{t+h} | x_t=0, \Omega_{t-1} \right],
\end{equation}

where $\Omega_{t-1}$ represents all the history of previous realizations of $\varepsilon_t$ and $x_t$ up to period $t-1$. Importantly, note that the above definition does not condition for future realizations of $x_t$. Hence, if $\gamma \neq 0$, an initial unit impulse in $x_t$ does not imply that $x_{t+j} = 0$.\footnote{This impulse response is equivalent to $\mathcal{R}(h) = \E \left[ y_{t+h} | \varepsilon_t=1, \varepsilon_{t+1}=0,...,\varepsilon_{t+h}=0, \Omega_{t-1} \right]  - \E \left[ y_{t+h} | \varepsilon_t=0, \varepsilon_{t+1}=0,...,\varepsilon_{t+h}=0, \Omega_{t-1} \right]$. See, for example, \citet{koop1996impulse}.} In other words, equation~\eqref{eq:IRF_per} describes dynamic responses that include the possible persistence of the shock $x_t$. For example:
\begin{eqnarray*}
\mathcal{R}(0) &=&  \frac{\partial y_{t}} {\partial x_{t}} = \delta \\
\mathcal{R}(1) &=&  \frac{\partial y_{t+1}} {\partial x_{t}} = \delta \gamma\\
\mathcal{R}(2) &=&  \frac{\partial y_{t+2}} {\partial x_{t}} = \delta \gamma^2\\
&&\dots
\end{eqnarray*}

However, the researcher might also be interested in the response to the shock \textit{as if} the shock had no persistence. We call this second IRF $\mathcal{R}(h)^{*}$ and define it as:
\begin{equation} \label{eq:IRF_iid}
\mathcal{R}(h)^* = \E \left[ y_{t+h} | x_t=1, x_{t+1},..., x_{t+h}, \Omega_{t-1} \right]  - \E \left[ y_{t+h} | x_t=0, x_{t+1},..., x_{t+h}, \Omega_{t-1} \right].
\end{equation}

Contrary to $\mathcal{R}(h)$, $\mathcal{R}(h)^*$ explicitly controls for future realizations of $x_t$ so that it describes dynamic responses that do not incorporate the effect of persistence (regardless of the value of $\gamma$), i.e., the responses are observationally equivalent to those that would arise from a data generating process with $\gamma=0$:\footnote{The definition of $\mathcal{R}(h)^*$ is not new. When $x_t$ is the shock variable of interest, this impulse response is referred to as the ``traditional impulse response function'' by \citet{koop1996impulse}: $\mathcal{R}(h)^* = \E \left[ y_{t+h} | x_t=1, x_{t+1}=0,...,x_{t+h}=0, \Omega_{t-1} \right]  - \E \left[ y_{t+h} | x_t=0, x_{t+1}=0,...,x_{t+h}=0, \Omega_{t-1} \right]$. It provides an answer to the question ``what is the effect of a shock of size 1 hitting the system at time $t$ on the state of the system at time $t+h$ given that no other shocks hit the system?''.}
\begin{eqnarray*}
\mathcal{R}(0)^* &=&  \frac{\partial y_{t}} {\partial x_{t}} = \delta \\
\mathcal{R}(1)^* &=&  \frac{\partial y_{t+1}} {\partial x_{t}} \Bigr|_{\substack{x_{t+1}}} = 0\\
\mathcal{R}(2)^* &=&  \frac{\partial y_{t+2}} {\partial x_{t}} \Bigr|_{\substack{x_{t+1},x_{t+2}}} = 0\\
&&\dots
\end{eqnarray*}

Note that, if $\gamma=0$ (the shock is not persistent), then $\mathcal{R}(h)=\mathcal{R}(h)^*$  $\forall$ $h$. By contrast, if $\gamma \neq 0$, then $\mathcal{R}(h) \neq \mathcal{R}(h)^*$  $\forall$ $h>0$.

\subsection{Differences between DLMs  and LPs under persistence}
We now consider the two most frequently used methods to estimate impulse responses when a shock is independently identified, DLMs  and LPs, and compare the objects that they identify when the shock is persistent. We first consider the case of DLMs. The use of these models is widespread in applied macroeconomics.\footnote{See, for example, \citet{romer2004measure}, \citet{cerra2008myth}, \citet{romer2010macroeconomic}, \citet{alesina2015output}, \citet{arezki2017news}, \citet{coibion2018cyclical} for interesting applications based on DLM methods, or  \citet{baek2019abcs} for a discussion of their properties. As mentioned in the introduction,  these methods are also a special case of more general specifications such as VARs with exogenous variables (or VAR-X). We develop this point further in Appendix~\ref{app:VAR}, when generalizing some of the results of the paper.} In the case of system~\eqref{eq:DGP},  note that we can recover the response function $\mathcal{R}(h)^{DLM}$ using the following regression:\footnote{This regression should include as many lags as the response horizon $h=0,1,\ldots,H$.}
\begin{equation}
\label{eq:MA}
y_t = \theta_0 x_t + \theta_1 x_{t-1} + \theta_2 x_{t-2} + \theta_3 x_{t-3} + \theta_4 x_{t-4} + \ldots + e_t,
\end{equation}

\noindent and it follows that $\mathcal{R}(h)^{DLM} = \frac{\partial y_{t+h}} {\partial x_{t}} =\theta_h$ $\forall$ $h$. 

The second main method to compute impulse responses is LPs, proposed by \citet{jorda2005estimation}. LPs are more robust to certain sources of misspecification and for this reason, their use  has increased in recent times (see \citet{ramey2016macroeconomic} for examples). LPs compute impulse responses  by estimating an equation for each response horizon $h=0,1,\ldots,H$:
\begin{equation}
\label{eq:LP}
	y_{t+h} = \delta_{h} x_{t} + \xi_{t+h}, 
\end{equation}
\noindent where the sequence of coefficients $\{\delta_{h}\}_{h=0} ^{H}$ determines the response of the variable of interest $\mathcal{R}(h)^{LP}=\delta_h$ for each horizon $h$.\footnote{Unrelated to our case at hand, note that  the structure of the LPs induce serial correlation in the residuals  $\xi_{t+h}$. This is usually corrected by computing autocorrelation-robust standard errors (\citet{jorda2005estimation}). See \citet{olea2020local}  for a recent contribution on inference in LPs.} 

We now consider under which conditions both methods identify the same objects.

\begin{prop} \label{prop:equivalence}
Given the data generating process described by system~\eqref{eq:DGP}, if the shock $x_t$ is serially uncorrelated, then the response functions identified by DLMs  and LPs are equal for all response horizons, that is:  \\
\indent If $\gamma = 0$, then $\mathcal{R}(h)^{DLM} =  \mathcal{R}(h)^{LP}=  \mathcal{R}(h)^{*} = \mathcal{R}(h)$ $\forall h$.
\\
If the shock is serially correlated, then the response functions identified by DLMs  and LPs are different for all $h>0$: \\
\indent If $\gamma \neq 0$ and $h=0$, then $\mathcal{R}(h)^{DLM} =  \mathcal{R}(h)^{LP}=  \mathcal{R}(h)^{*} = \mathcal{R}(h)$. \\
\indent If $\gamma \neq 0$ and $h\geq 1$, then $\mathcal{R}(h)^{DLM} =    \mathcal{R}(h)^{*} \neq \mathcal{R}(h)^{LP}= \mathcal{R}(h)$.
\end{prop}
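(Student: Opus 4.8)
The plan is to show that, irrespective of the value of $\gamma$, the DLM regression \eqref{eq:MA} always recovers $\mathcal{R}(h)^{*}$ while the LP regression \eqref{eq:LP} always recovers $\mathcal{R}(h)$. Once these two identities are in hand, the three cases of the proposition reduce to comparing the closed forms of the two population IRFs. From the definitions \eqref{eq:IRF_per} and \eqref{eq:IRF_iid} together with $y_{t+h}=\delta x_{t+h}+u_{t+h}$, I would first record that $\mathcal{R}(h)=\delta\gamma^{h}$ for all $h$ (taking conditional expectations and using $\E[x_{t+h}\mid x_t,\Omega_{t-1}]=\gamma^{h}x_t$, since future $\varepsilon$'s and $u_{t+h}$ have mean zero), whereas $\mathcal{R}(0)^{*}=\delta$ and $\mathcal{R}(h)^{*}=0$ for $h\ge 1$ (conditioning on $x_{t+1},\dots,x_{t+h}$ fixes $x_{t+h}$, so perturbing $x_t$ leaves $\delta x_{t+h}$ unchanged). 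These immediately imply: if $\gamma=0$ then $\mathcal{R}(h)=\mathcal{R}(h)^{*}$ for all $h$; if $\gamma\neq 0$ the two agree only at $h=0$ and differ at every $h\ge 1$.

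For the DLM, I would compute the population projection coefficients in \eqref{eq:MA} through the normal equations. Substituting $y_t=\delta x_t+u_t$, the candidate vector $(\theta_0,\theta_1,\theta_2,\dots)=(\delta,0,0,\dots)$ leaves residual $e_t=u_t$, and strict exogeneity $\E(u_t\mid x_{t-p})=0$ for all $p\gtreqless 0$ makes $u_t$ orthogonal to every regressor $x_{t-k}$. Hence this vector satisfies all the normal equations, and since the regressor covariance matrix is nonsingular (the stationary AR(1) autocovariance matrix is positive definite for $|\gamma|<1$) it is the unique solution. Reading off $\mathcal{R}(h)^{DLM}=\theta_h$ then gives $\mathcal{R}(0)^{DLM}=\delta$ and $\mathcal{R}(h)^{DLM}=0$ for $h\ge 1$, i.e.\ $\mathcal{R}(h)^{DLM}=\mathcal{R}(h)^{*}$ for every $h$ and regardless of $\gamma$.

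For the LP, the single-regressor projection in \eqref{eq:LP} yields $\delta_h=\mathrm{Cov}(y_{t+h},x_t)/\mathrm{Var}(x_t)$. Writing $y_{t+h}=\delta x_{t+h}+u_{t+h}$ and using $\mathrm{Cov}(u_{t+h},x_t)=0$ (strict exogeneity) leaves $\delta_h=\delta\,\mathrm{Cov}(x_{t+h},x_t)/\mathrm{Var}(x_t)$. Because $x_t$ is a stationary AR(1), its lag-$h$ autocovariance equals $\gamma^{h}\mathrm{Var}(x_t)$, so $\delta_h=\delta\gamma^{h}=\mathcal{R}(h)$ for every $h$ and regardless of $\gamma$. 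Combining the DLM and LP identities with the closed forms above delivers the three cases: all four objects coincide when $\gamma=0$; they coincide at $h=0$ when $\gamma\neq 0$; and for $\gamma\neq 0$, $h\ge 1$ one obtains $\mathcal{R}(h)^{DLM}=\mathcal{R}(h)^{*}=0\neq\delta\gamma^{h}=\mathcal{R}(h)^{LP}=\mathcal{R}(h)$.

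I expect the main obstacle to be the DLM step rather than the LP step. The intuitively tempting worry is that the serial correlation among the lagged regressors $x_t,x_{t-1},\dots$ might contaminate $\theta_h$ and push it away from zero. The argument above dispels this by appealing to orthogonality of the error to all leads and lags of $x$ together with uniqueness of the projection, so that multicollinearity among the regressors is immaterial to the coefficient \emph{values}. The only place stationarity is genuinely needed is to guarantee invertibility of the regressor covariance matrix and to evaluate the autocovariance $\gamma^{h}\mathrm{Var}(x_t)$ in the LP step, so I would state $|\gamma|<1$ explicitly as the maintained assumption.
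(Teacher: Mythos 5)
Your proof is correct and follows essentially the same route as the paper's: compute the LP coefficient as $\mathrm{cov}(y_{t+h},x_t)/\mathrm{var}(x_t)=\delta\gamma^{h}=\mathcal{R}(h)$ via the AR(1) autocovariance, show the DLM coefficients are $(\delta,0,0,\dots)=\mathcal{R}(h)^{*}$, and compare. The only difference is that where the paper establishes the DLM coefficients through partial-covariance notation (arguing the covariance of $x_{t+h}$ with $x_t$ vanishes once $x_{t+1},\dots,x_{t+h}$ are controlled for), you guess the coefficient vector and verify the normal equations using strict exogeneity of $u_t$ plus positive definiteness of the regressor covariance matrix --- a somewhat cleaner justification of the same step, and your explicit flagging of $|\gamma|<1$ as a maintained assumption is a fair point the paper leaves implicit.
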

\begin{proof}
See Appendix~\ref{app:proof1}.
\end{proof}



Following the above proposition, when $\gamma\neq0$, LPs recover a dynamic response that includes three dynamic effects: (i) the effect that $x_t$ has directly on $y_{t+h}$ (due to a lagged impact of the shock), (ii) the effect that $x_t$ has through the persistence of $y_t$, and (iii) the effect that $x_t$ has on $y_{t+h}$ through $x_{t+h}$ (since $cov(x_t,x_{t+h}) \neq 0$ when $\gamma\neq0$). The first two effects are independent of $\gamma$ and are shut down in our simple specification of system~\eqref{eq:DGP} (we will incorporate them in our simulation exercises in the next subsection). The last effect (the \textit{persistence} effect of $x_t$) drives the difference between $\mathcal{R}(h)^{DLM}$ and  $\mathcal{R}(h)^{LP}$. In particular,  $\mathcal{R}(h)^{LP} = \mathcal{R}(h)^{} = \delta \gamma^h$, while $\mathcal{R}(h)^{DLM} =\mathcal{R}(h)^{*} = 0$ for all $h \geq 1$. 

To understand why LPs, unlike DLMs, incorporate this third effect, consider the LPs when $h=1$:
\begin{equation}
\label{eq:LPh1}
	y_{t+1} = \delta_{1} x_{t} + \xi_{t+1}, 
\end{equation}

\noindent where $\delta_{1}=\mathcal{R}(1)^{LP}$. The direct effect of $x_t$ on $y_{t+1}$ is 0. If $x_t$ had no persistence, then $\delta_1$ would be 0. However, when $\gamma \neq 0$, we can use system~\eqref{eq:DGP} to express $y_{t+1}$ as a function of $x_t$:
\begin{eqnarray*}
	y_{t+1} &=& \delta x_{t+1} + u_{t+1} \\
		&=& \delta \left(\gamma x_{t} + \varepsilon_{t+1}  \right)+ u_{t+1} \\
		&=& \delta \gamma x_{t} + u^*_{t+1}, 
\end{eqnarray*}
\noindent where $u^*_{t+1}=\delta \varepsilon_{t+1}+u_{t+1}$. This shows that the coefficient $\delta_1$ in equation~\eqref{eq:LPh1} will also recover the persistence effect of $x_t$: $\delta_1  = \delta\gamma$. 
The intuition is that between period $t$ and period $t+1$, $x_t$ affects $x_{t+1}$ when $\gamma \neq 0$. Since $x_{t+1}$ is not a regressor in equation~\eqref{eq:LPh1}, then this effect is absorbed by $\delta_1$.\footnote{This omitted variables problem is also briefly mentioned in \citet{alesina2015output} in the particular context of fiscal consolidation plans.} 

When impulse responses are identified using DLMs, the treatment of the persistence of $x_t$ is different. Consider a version of equation~\eqref{eq:MA} expressed in terms of $t+1$:
\begin{equation}
\label{eq:MA_fw1}
y_{t+1} = \theta_0 x_{t+1} + \theta_1 x_{t} + \theta_2 x_{t-1} + \theta_3 x_{t-2} + \theta_4 x_{t-3} +  \ldots + e_{t+1}. 
\end{equation}

\noindent As noted earlier, the sequence of coefficients $\theta_h$ determines the response function. Consider the response when $h=1$, i.e., $\mathcal{R}(1)^{DLM} = \theta_1$. Note that, while we know from system~\eqref{eq:DGP} that  $\frac{\partial y_{t+1}} {\partial x_{t}} = \delta \gamma$, the coefficient recovered by $\theta_1$ is indeed $\left.  \frac{\partial y_{t+1}} {\partial x_{t}} \right | _{x_{t+1}} = 0$. That is, since the DLM controls for $x_{t+1}$, the persistence effect of $x_t$ is accounted for.

In other words, DLMs  identify: 
\begin{equation*}
	\mathcal{R}(h) ^{DLM} = \E \left[ y_{t+h} | x_t=1, \Omega_{t-1}, x_{t+h-1},..., x_{t+1}  \right]  - \E \left[ y_{t+h} | x_t=0, \Omega_{t-1}, x_{t+h-1},..., x_{t+1} \right],
\end{equation*}
while LPs identify:
\begin{equation*}
	\mathcal{R}(h) ^{LP} = \E \left[ y_{t+h} | x_t=1, \Omega_{t-1} \right]  - \E \left[ y_{t+h} | x_t=0, \Omega_{t-1} \right].
\end{equation*}

Note that the difference between $\mathcal{R}^{LP}$ and $\mathcal{R}^{DLM}$  is positive (negative) when $\gamma >0$ ($\gamma <0$). In empirical applications, $\gamma$ may be positive or negative.\footnote{For example, $\gamma$ seems to be positive in \citet{ramey2018government}, and negative in \citet{romer2004measure}.}

\subsection{Reestablishing the equivalence between DLMs  and LPs}

In this subsection we lay out two methods that can render the responses from DLMs  and LPs identical, even under the presence of persistence. 
\subsubsection{Adapting LPs to exclude the effect of serial correlation}
A researcher may be interested in recovering responses as if the shock were serially uncorrelated ($\mathcal{R}(h)^{*}$). (We discuss in Section \ref{sec:discussion} when the object of interest may be $\mathcal{R}(h)^{*}$, or $\mathcal{R}(h)$ instead.) However, we have shown that $\mathcal{R}^{LP} (h) \neq \mathcal{R}(h)^{*}$ if $\gamma \neq 0$ and $h\geq1$. 

Two apparent methods to avoid LPs picking up the effect of persistence in $x_t$ are: (i) to include lags in the regression~\eqref{eq:LP}, or (ii) to replace $x_t$ with the error term that purges out the persistence:
\begin{equation}\label{eq:epshat}
 {\varepsilon}_t = x_t-{\gamma}x_{t-1}.
 \end{equation}
However, neither of these methods  yields $\mathcal{R}^{*} (h)$. 
The reason is that  replacing $x_t$ with ${\varepsilon}_t$ does not include any further information between $t$ and $t+h$, so the responses of the dependent variable will still be affected by $x_{t+h}$. 
This point is further developed in Appendix~\ref{sec:app_lags}. 

A third potential method to exclude the effect of persistence would be recasting system~\eqref{eq:DGP} as a VAR that includes the shock as an endogenous variable. However, since in this case LPs and a VAR would identify the same impulse responses (see \citet{plagborg2018local}) the VAR responses would also include an effect due to the persistence of the shock---we explore this in more detail in Appendix~\ref{app:VAR}.

Instead, we propose a method based on the inclusion of leads of the persistent shock variable. In particular, given that the DGP of system \eqref{eq:DGP} poses an AR(1) for $x_t$, one should regress:
\begin{equation} \label{eq:LPleads}
y_{t+h}= \delta_{h,0}x_{t} + \delta_{h,1}x_{t+1} + \xi_{t+h},
\end{equation}

\noindent where $\delta_{h,0}$ is the $h$-horizon response identified by LPs that include leads of the shock $x_t$, which we denote as $\mathcal{R}^{F} (h)$. In more general processes, in which the autocorrelation of the shock may be of an order larger than one, the optimal choice of leads can be derived adapting the procedure from \citet{choi2012model}.\footnote{See also \citet{lee2020lag} for lag order selection in LPs.} The most conservative procedure would be to include $h$ leads of the shock in each period $h$. This is the choice implemented in Section~\ref{sec:applications}, when considering  empirical applications.

\begin{prop} \label{prop:LPtoMA}
Given the data generating process described by system~\eqref{eq:DGP}, the response function identified by modified LPs to a shock $x_t$ as described in equation~\eqref{eq:LPleads} is equal to the response as if the shock had no persistence (and to the response obtained from DLMs  as in equation~\eqref{eq:MA}), that is:  \\
\indent $\mathcal{R}(h)^{F} =  \mathcal{R}(h)^{*}  = \mathcal{R}(h)^{DLM} $ $\forall$ $\gamma$ and $h$. 
\end{prop}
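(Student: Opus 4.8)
The plan is to compute the population probability limits of the OLS coefficients in regression~\eqref{eq:LPleads} and to show that the coefficient on the contemporaneous shock, $\delta_{h,0}$, coincides with $\mathcal{R}(h)^{*}$; the remaining equality $\mathcal{R}(h)^{*}=\mathcal{R}(h)^{DLM}$ is then inherited directly from Proposition~\ref{prop:equivalence}. Because regression~\eqref{eq:LPleads} has only two regressors, the cleanest route is to write down the $2\times 2$ population normal equations for the pair $(\delta_{h,0},\delta_{h,1})$ and solve them, or equivalently to apply the Frisch--Waugh--Lovell theorem and partial $x_{t+1}$ out of $x_t$.

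All the required moments follow from the stationary AR(1) structure of $x_t$ (assuming $|\gamma|<1$, so that $\E x_t=0$ and $\mathrm{Var}(x_t)=\sigma_x^2=\sigma_\varepsilon^2/(1-\gamma^2)$) together with the strict exogeneity of the shock. I would first record $\mathrm{Cov}(x_t,x_{t+k})=\gamma^k\sigma_x^2$, and then, using $y_{t+h}=\delta x_{t+h}+u_{t+h}$ with $u_{t+h}$ uncorrelated with every $x_s$, the cross moments $\mathrm{Cov}(y_{t+h},x_t)=\delta\gamma^h\sigma_x^2$ and $\mathrm{Cov}(y_{t+h},x_{t+1})=\delta\gamma^{h-1}\sigma_x^2$ for $h\ge 1$. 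For $h\ge 1$, substituting these into the normal equations and dividing through by $\sigma_x^2$ produces the system $\delta_{h,0}+\gamma\delta_{h,1}=\delta\gamma^h$ and $\gamma\delta_{h,0}+\delta_{h,1}=\delta\gamma^{h-1}$, whose solution is $\delta_{h,1}=\delta\gamma^{h-1}$ and $\delta_{h,0}=0$: the entire persistence is absorbed by the lead coefficient, giving $\mathcal{R}(h)^{F}=\delta_{h,0}=0=\mathcal{R}(h)^{*}$. The horizon $h=0$ must be treated separately, since there $\mathrm{Cov}(y_t,x_t)=\delta\sigma_x^2$ and $\mathrm{Cov}(y_t,x_{t+1})=\delta\gamma\sigma_x^2$; the same algebra then returns $\delta_{0,0}=\delta$ and $\delta_{0,1}=0$, i.e. $\mathcal{R}(0)^{F}=\delta=\mathcal{R}(0)^{*}$.

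The Frisch--Waugh--Lovell route makes the cancellation transparent and I would include it as the economic intuition. The projection of $x_t$ on $x_{t+1}$ has slope $\mathrm{Cov}(x_t,x_{t+1})/\mathrm{Var}(x_{t+1})=\gamma$, so the partialled-out regressor is $\tilde{x}_t=x_t-\gamma x_{t+1}=(1-\gamma^2)x_t-\gamma\varepsilon_{t+1}$. A direct computation then gives $\mathrm{Cov}(y_{t+h},\tilde{x}_t)=0$ for $h\ge 1$, because the contribution that flows through $x_t$, equal to $\delta\gamma^h(1-\gamma^2)\sigma_x^2=\delta\gamma^h\sigma_\varepsilon^2$, is exactly offset by the contribution that flows through $\varepsilon_{t+1}$, equal to $-\delta\gamma^h\sigma_\varepsilon^2$ (the coefficient of $\varepsilon_{t+1}$ inside $x_{t+h}$ being $\gamma^{h-1}$).

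The one genuine subtlety, and the point I would make explicit, is why a single lead $x_{t+1}$ suffices at every horizon $h$, rather than the $h$ leads suggested by the conservative rule stated after equation~\eqref{eq:LPleads}. This is precisely a consequence of the AR(1) assumption: writing $x_{t+h}=\gamma^{h-1}x_{t+1}+\sum_{j=2}^{h}\gamma^{h-j}\varepsilon_{t+j}$ shows that $x_{t+h}$ depends on $x_t$ only through $x_{t+1}$, so conditioning on $x_{t+1}$ severs the unique channel by which $x_t$ propagates into $y_{t+h}$. A closing remark would note that under a higher-order autocorrelation structure this collapse fails and one must include the additional leads prescribed by the selection procedure referenced after equation~\eqref{eq:LPleads}.
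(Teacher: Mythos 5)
Your proposal is correct and follows essentially the same route as the paper's proof: both arguments reduce to showing that the coefficient on $x_t$ after partialling out $x_{t+1}$ vanishes for $h\geq 1$ because, under the AR(1) assumption, $x_{t+h}$ depends on $x_t$ only through $x_{t+1}$, and both then import $\mathcal{R}(h)^{DLM}=\mathcal{R}(h)^{*}$ from Proposition~\ref{prop:equivalence}. Your explicit solution of the $2\times 2$ normal equations and the Frisch--Waugh--Lovell computation simply make rigorous what the paper expresses with its conditional-covariance notation $\left.\tfrac{cov(\cdot,\cdot)}{var(\cdot)}\right|_{x_{t+1}}$, and your closing remark on why one lead suffices only in the AR(1) case matches the paper's own caveat following equation~\eqref{eq:LPleads}.
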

\begin{proof}
See Appendix~\ref{app:proof2}.
\end{proof}

Intuitively, leads of $x_t$ in equation~\eqref{eq:LPleads} act as controls for the persistence of the shock throughout the response horizon, so that the parameter $\delta_{h,0}$ reflects the dynamic response to a counterfactual serially-uncorrelated shock, that is, controlling for the effect due to $\frac{\partial x_{t+1}}{\partial x_t}\neq0$ built in system~\eqref{eq:DGP} when $\gamma \neq 0$.

\subsubsection{Adapting DLMs  to include the effect of persistence} \label{sec:MAtoLP}
As noted earlier, $\mathcal{R}(h)^{DLM} =  \mathcal{R}(h)^{*}$ regardless of the value of $\gamma$. However, in some instances, the researcher may be interested in the response that includes the effect of persistence ($\mathcal{R}(h)^{} $). In this subsection, we show how to adapt DLMs  to recover these responses. Intuitively, the idea is to compute the impulse responses in system~\eqref{eq:DGP} with respect to $\varepsilon_t$ instead of $x_t$.

Consider a recursive substitution of $x_t$ in system~\eqref{eq:DGP}:
\begin{equation}\label{eq:recursion}
	y_t = \delta \gamma^t x_0 + \delta \sum_{i=0}^t \gamma^i {\varepsilon}_{t-i} + {u}_t.
\end{equation}

The responses of $y_t$ to $\varepsilon_t$, which we denote by $\mathcal{R}(h)^{DLM-per}$, can be obtained from the coefficients $\tilde{\theta}_h$ in:
\begin{equation}
\label{eq:MA_eps}
y_t = \tilde{\theta}_0 \varepsilon_t + \tilde{\theta}_1 \varepsilon_{t-1} + \tilde{\theta}_2 \varepsilon_{t-2} + \tilde{\theta}_3 \varepsilon_{t-3} + \tilde{\theta}_4 \varepsilon_{t-4} +  \ldots + e_t. 
\end{equation}

\begin{prop} \label{prop:MAtoLP}
Given the data generating process described by system~\eqref{eq:DGP}, the response function identified by DLMs  of $y_t$ to the innovation $\varepsilon_t$ as described in equation~\eqref{eq:MA_eps} is equivalent to the response that includes the effects of persistence (and to the response obtained from LPs as in equation~\eqref{eq:LP}):  \\
\indent $  \mathcal{R}(h)^{DLM-per} = \mathcal{R}(h)^{} = \mathcal{R}(h)^{LP} $ $\forall$ $\gamma$ and $h$. 
\end{prop}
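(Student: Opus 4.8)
The plan is to exploit the recursive moving-average representation already displayed in equation~\eqref{eq:recursion} and then read the regression coefficients $\tilde{\theta}_h$ off directly. First I would take the stationary limit of~\eqref{eq:recursion}, so that $x_t = \sum_{i=0}^{\infty} \gamma^i \varepsilon_{t-i}$ and hence $y_t = \delta \sum_{i=0}^{\infty} \gamma^i \varepsilon_{t-i} + u_t$. This is precisely a one-sided infinite-order MA representation of $y_t$ in the innovations $\varepsilon$, carrying coefficient $\delta \gamma^i$ on $\varepsilon_{t-i}$, with $u_t$ entering only contemporaneously. The whole argument then reduces to showing that the population least-squares coefficients in regression~\eqref{eq:MA_eps} coincide with these MA coefficients.

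The key step is an orthogonality argument. Because $\varepsilon_t$ is i.i.d.\ with mean zero, the regressors $\{\varepsilon_{t-j}\}$ are mutually uncorrelated, so the population projection is diagonal and the coefficient on $\varepsilon_{t-h}$ is simply $\tilde{\theta}_h = \operatorname{Cov}(y_t,\varepsilon_{t-h})/\operatorname{Var}(\varepsilon_{t-h})$. Substituting the MA representation and using $\operatorname{Cov}(\varepsilon_{t-i},\varepsilon_{t-h}) = \sigma_\varepsilon^{2}\,\mathbf{1}\{i=h\}$, together with strict exogeneity giving $\operatorname{Cov}(u_t,\varepsilon_{t-h}) = 0$, yields $\operatorname{Cov}(y_t,\varepsilon_{t-h}) = \delta \gamma^h \sigma_\varepsilon^{2}$ and therefore $\tilde{\theta}_h = \delta \gamma^h$. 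The term $u_t$, being orthogonal to every current and lagged innovation, is absorbed into the regression error $e_t$ and does not perturb the coefficients.

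It then remains to identify $\delta \gamma^h$ with the two target objects. On the one hand, Proposition~\ref{prop:equivalence} records $\mathcal{R}(h)^{LP} = \delta \gamma^h$, so $\tilde{\theta}_h = \mathcal{R}(h)^{LP}$ follows immediately. On the other hand, by the innovation-based characterization in the footnote to~\eqref{eq:IRF_per}, the response $\mathcal{R}(h)$ that includes persistence is the response of $y_{t+h}$ to a unit $\varepsilon_t$ impulse with subsequent innovations set to zero; differentiating~\eqref{eq:recursion} gives $\partial y_{t+h}/\partial \varepsilon_t = \delta \gamma^h$. Chaining these equalities delivers $\mathcal{R}(h)^{DLM-per} = \tilde{\theta}_h = \mathcal{R}(h) = \mathcal{R}(h)^{LP}$, as claimed.

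I expect the main obstacle to be conceptual rather than computational: making precise why the regression in~\eqref{eq:MA_eps}, which conditions only on current and past innovations, recovers the response \emph{including} persistence, whereas the $x$-regression in~\eqref{eq:MA} nets it out. The point is that an $\varepsilon_t$ impulse propagates into $x_{t+1}, x_{t+2}, \dots$ through the AR(1) dynamics, and because the MA representation of $y_t$ loads on $\varepsilon$ with exactly the weights $\delta \gamma^i$ that encode this propagation, the $\varepsilon$-regression inherits the persistence. A minor technical point to handle cleanly is the transient initial-condition term $\delta \gamma^t x_0$ in~\eqref{eq:recursion}: in the stationary population regression it either vanishes under $|\gamma| < 1$ or is orthogonal to the centered regressors, so it leaves the coefficients unchanged.
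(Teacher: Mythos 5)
Your proof is correct and follows essentially the same route as the paper's: both arguments use the fact that the i.i.d.\ innovations make the population projection in~\eqref{eq:MA_eps} diagonal, so that $\tilde{\theta}_h = \operatorname{cov}(y_{t+h},\varepsilon_t)/\operatorname{var}(\varepsilon_t) = \delta\gamma^h$, which is then matched to the LP coefficient $\delta_h = \delta\gamma^h$ from Proposition~\ref{prop:equivalence}. Your explicit MA($\infty$) expansion and the remark on the initial-condition term are presentational refinements of the paper's substitution of $x_{t+h}=\gamma^h x_t+\sum_{j=0}^{h-1}\gamma^j\varepsilon_{t+h-j}$, not a different argument.
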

\begin{proof}
See Appendix~\ref{app:proof3}.
\end{proof}

%


Proposition \ref{prop:MAtoLP} establishes a direct equivalence between the coefficients obtained from equation~\eqref{eq:MA_eps} and those obtained from LPs in equation~\eqref{eq:LP}: $\tilde{\theta}_h = \delta_h$ $\forall$ $h$. The former are also related to the coefficients estimated from the DLM in terms of $x_t$, as in equation~\eqref{eq:MA}: $\theta_0= \tilde{\theta}_0 = \delta$, $\theta_1= \tilde{\theta}_1 -{\gamma}  \tilde{\theta}_0, \ldots, \theta_h= \tilde{\theta}_h -{\gamma}  \tilde{\theta}_{h-1}$. Intuitively, the response of $y_{t+1}$ to $x_t$ has an overall effect of $\delta_1=\tilde{\theta}_1$, which includes (i) the direct effect of $x_t$ on $y_{t+1}$ (0, in our simple case) and (ii) the effect on $y_{t+1}$ that is due to the persistence in $x_t$ (given by ${\gamma} \delta$). The \textit{standard} DLM estimation from equation~\eqref{eq:MA}, since it accounts for the evolution of $x_t$ over the response horizon,  is implicitly subtracting the part of the response that is  given by the persistence of $x_t$ from the overall effect. 

\subsection{Examples} \label{sec:simul}

In this subsection, we perform stochastic simulations of the asymptotic  behavior of the impulse response functions using both LPs and DLMs. Our goal is twofold. First, to evaluate quantitatively the conclusions reached in the previous subsection using a plausible calibration of the parameters that determine the model. Second, to consider a slightly more complex (and realistic) version of the data generating process that includes richer features frequently present in real empirical applications. In particular, we consider the following process:
\begin{eqnarray} \label{eq:DGPsimul}
y_t &=& \rho y_{t-1} + B_0 x_t + B_1 x_{t-1} + u_t \nonumber \\
x_t &=& \gamma x_{t-1} + \varepsilon_t,
\end{eqnarray}

\noindent where $\E(\varepsilon_{t-s} u_{t-r})=0$ $\forall s,r \gtreqless 0$, and $u_t$ and $\varepsilon_t$ follow $\mathcal{N}(0,1)$ distributions. We set $B_0 = 1.5$, $B_1 = 1$ and  $\rho = 0.9$.

Compared to system~\eqref{eq:DGP}, the new DGP described in system~\eqref{eq:DGPsimul} includes persistence in the outcome variable through $\rho$, and allows the shock $x_t$ to have lagged effects on $y_t$ through $B_1$.\footnote{We introduce this extra lag of the shock to make explicit the distinction between the effect due to the persistence of the shock and the effect of lagged values of the shock on current outcomes.}

We simulate system~\eqref{eq:DGPsimul} for 100 million periods and recover the dynamic responses of $y_t$ to the shock $x_t$ using LPs:
\begin{eqnarray} \label{eq:LPleadsSIMUL}
y_{t+h}=\rho y_{t-1} + \beta_{h,0}x_{t}+ \beta_{h,1}x_{t-1} + \beta_{h,f}x_{t+1} + \xi_{t+h}.
\end{eqnarray}

We consider three cases: (i) no persistence ($\gamma= 0$), without including leads in the estimation (i.e., setting $\beta_{h,f}=0$); (ii) some persistence ($\gamma = 0.2$) and still $\beta_{h,f}=0$; (iii) some persistence ($\gamma = 0.2$) and including a lead of the explanatory  variable (i.e., allowing $\beta_{h,f}\neq0$).\footnote{The choice of $\gamma=0.2$ is based on an empirical application that we will present in Section \ref{sec:applications}. Of course, larger values of $\rho$ would yield higher biases due to the persistence of the process.} 

Note that equation~\eqref{eq:LPleadsSIMUL} must include a lag of shock $x_{t}$ to capture the effect of $B_1$ in system~\eqref{eq:DGPsimul}. However, this does \emph{not} control for the potential persistence of shock $x_{t}$, as will be apparent in the simulations.


Figure~\ref{fig:simul} shows the results of our simulations. In case (i) (dark-blue solid line), the response has a contemporaneous effect of  $\hat{\beta}_{1,0}=1.5$ and peaks at the following period due to the the fact that both $\rho$ and $B_1$ have positive values. Using the language of the previous section, the impulse response function estimated by LPs with no persistence is asymptotically equivalent to the one obtained directly  from equation~\eqref{eq:LPleadsSIMUL}, that is, $\hat{\mathcal{R}}(h)^{LP}  \rightarrow  \mathcal{R}(h)^{*}$. 

\begin{figure} 
\caption{Simulated responses using LPs}\label{fig:simul}
\begin{center}
		{\includegraphics[width=10cm]{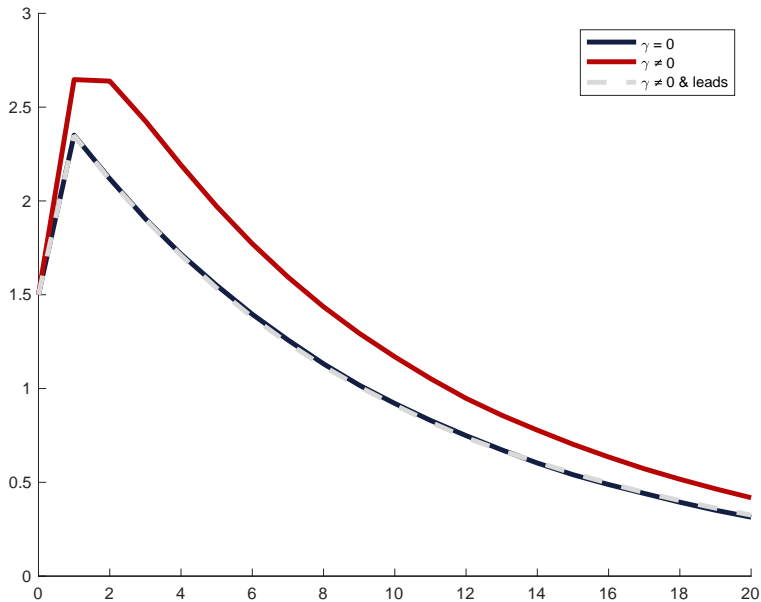}}	
\end{center}
\footnotesize This figure shows the response of a simulated outcome variable to a shock with different degrees of persistence, using LPs. The dark blue line shows the results of estimating equation~\eqref{eq:LPleadsSIMUL} assuming $\gamma = 0$ in equation~\eqref{eq:DGPsimul}. The red line shows the same estimation when $\gamma = 0.2$. The dashed grey line shows the response after including leads of the shock as in equation~\eqref{eq:LPleadsSIMUL} and still assuming $\gamma= 0.2$.
\end{figure}

In case (ii) (red solid line), the introduction of persistence in the shock $x_t$ results in a larger effect on $y_t$ on all horizons after impact. This has potentially important implications:  if a macroeconomist is interested in the effects of a serially-uncorrelated shock (as in most general equilibrium models), but naively estimates equation~\eqref{eq:LPleadsSIMUL}, implicitly setting $\beta_{h,f}=0$, then the dynamic response is upwardly biased due to the persistence of the shock, i.e., $\hat{\mathcal{R}}(h)^{LP}  >  \mathcal{R}(h)^{*}$ for $h >0$. Given the assumptions on the autocorrelation of the process $x_t$, the bias is particularly large in the short and medium run. Higher values of the persistence parameters $\gamma$ and $\rho$ would increase the difference between both responses (blue and red lines in Figure~\ref{fig:simul}).

In case (iii) (dashed grey line in Figure~\ref{fig:simul}), we see that the inclusion of leads of $x_t$ renders the response of the outcome variable to a persistent shock identical to the one obtained when considering a shock without persistence, i.e., $\hat{\mathcal{R}}(h)^{F}  \rightarrow  \mathcal{R}(h)^{*}$. In Appendix~\ref{app:simulreal} we provide an alternative simulation where the shock $x_t$ in \eqref{eq:DGPsimul} is not assumed to follow an AR(1) process but it is instead taken from  actual data.

Next, we use these simulations to show that the computation of impulse responses using DLMs  always yields the same estimates regardless of the persistence in $x_t$, that is, $\hat{\mathcal{R}}^{DLM} (h) \rightarrow  \mathcal{R}^{*} (h)$ for any value of $\gamma$. 

First, note that, since $\rho<1$, system~\eqref{eq:DGPsimul} can be inverted and re-written as:

\begin{equation}
\label{eq:MAinvert}
y_t = \left(1-\rho L\right)^{-1} \left( B_0 + B_1 L \right)x_t + \left(1-\rho L\right)^{-1} u_t,
\end{equation}

\noindent where  $L$ represents the lag operator. 

Given the independence of $u_t$ and $x_t$, the representation from equation~\eqref{eq:MAinvert} suggests that the dynamic responses of $y_t$ from $x_t$ can be obtained from the coefficients $\vartheta_h$ in the following regression:

\begin{equation}
\label{eq:MAsimul}
y_t = \vartheta_0 x_t + \vartheta_1 x_{t-1} + \vartheta_2 x_{t-2} + \vartheta_3 x_{t-3} + \ldots + \vartheta_H x_{t-H} + \xi_t,
\end{equation}

\noindent where $H$ is the response horizon.\footnote{\citet{baek2019abcs} show that for autoregressive distributed lag models, setting the lag order to H is a necessary condition to achieve consistency. }

We estimate equation~\eqref{eq:MAsimul} fo three different cases: (i) assuming that $\gamma=0$ in the data generating process described in system~\eqref{eq:DGPsimul}, (ii) assuming that $\gamma=0.2$ and (iii) replacing $x_t$ with $\hat{\varepsilon_t}$ in  equation~\eqref{eq:MAsimul} (i.e., following equation~\eqref{eq:MA_eps}). 

\begin{figure}
\caption{Simulated responses using DLMs }\label{fig:simul_genregMA}
\begin{center}
		{\includegraphics[width=10cm]{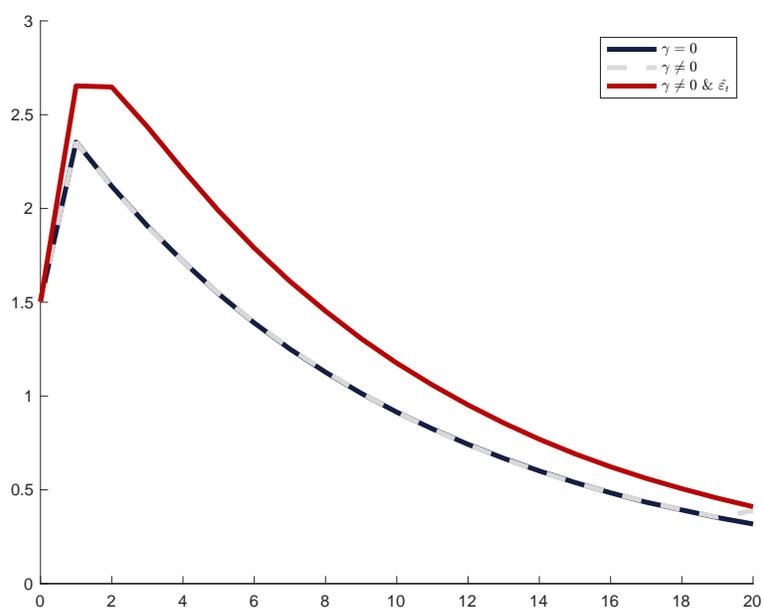}}	
\end{center}
\footnotesize This figure shows the response of a simulated outcome variable to a shock with different degrees of persistence, using DLMs . The dark blue line shows the results of estimating equation~\eqref{eq:MAsimul} assuming $\gamma = 0$ in system~\eqref{eq:DGPsimul}. The dashed grey line shows the same estimation when $\gamma = 0.2$. The red line shows the response when substituting $x_t$ in equation~\eqref{eq:MAsimul} by $\hat{\varepsilon}_t$, an OLS estimate of $\varepsilon_t$ (see equation \eqref{eq:epshat}), where serial correlation has been removed.
\end{figure}

The results are shown in Figure~\ref{fig:simul_genregMA}. Cases (i) and (ii) are displayed in blue and dashed grey lines, respectively. As argued earlier, since equation~\eqref{eq:MAsimul} controls for all potential dynamic effects of $x_t$, including its persistence, the coefficients $\vartheta_h$ reflect the responses to a shock as if the variable $x_t$ showed no persistence, regardless of the value of $\gamma$. Hence, we have that  $\hat{\mathcal{R}}(h)^{DLM} \rightarrow  \mathcal{R}(h)^{*}$ for any  $\gamma$. Note that these impulse response functions are the same as those obtained with LPs ($\hat{\mathcal{R}}(h)^{LP}$) when $\gamma = 0$, or when we include leads in the LPs ($\hat{\mathcal{R}}(h)^{F}$).

Case (iii) is shown in the red line in Figure~\ref{fig:simul_genregMA}. As argued in the previous subsection, when computing the impulse response with respect to $\varepsilon_t$, we are allowing the DLMs  to pick up the effect that is due to the persistence in $x_t$. In other words, since we do not implicitly control for the leads of $x_t$ but for those of $\varepsilon_t$ in the DLM, we are not taking into account the persistence of $x_t$. In this case, the responses are equal to those obtained from LPs when $\gamma \neq 0$: $\hat{\mathcal{R}}(h)^{DLM-per} = \hat{\mathcal{R}}(h)^{LP} \rightarrow \mathcal{R}(h)$.


\section{Discussion: A guide to practitioners} \label{sec:discussion}

In the presence of a persistent shock, a researcher needs to determine what object to identify. Table~\ref{tab:options} summarizes the adjustments required in LPs and DLMs  depending on the choice of the object of interest.

\begin{table}
\renewcommand{\arraystretch}{1.5} 
\caption{Adapting LPs and DLMs  when shocks are persistent} 
\label{tab:options}
\begin{center}
\begin{tabular}{l | c c}
Object of interest / Method                        & LPs                   & DLMs                   \\
\hline
Response as if no persistence ($ \mathcal{R}(h)^{*} $) & include leads     & no action needed \\
Response with persistence ($\mathcal{R}(h)$)  & no action needed & replace $x_t$ with ${\varepsilon}_{t}$
\end{tabular}
\end{center}
\end{table}

The researcher faces two options: to identify the response \textit{as if} the shock were uncorrelated ($\mathcal{R}(h)^{*}$) or the response that includes the effect of persistence ($\mathcal{R}(h)$). There are arguments in favor of both. 
Ultimately, deciding for one or the other may depend on what specific question the researcher is trying to address.


Since $\mathcal{R}(h)^{*}$ can be understood as the IRF resulting from a \textit{standardized} shock (so that it becomes serially uncorrelated), it should be the desired object when the researcher wants to establish comparisons across dynamic responses. There are at least three instances when $\mathcal{R}(h)^{*}$ can facilitate comparisons. First, a shock identified from within a model (say, a structural VAR) or the innovation to a stochastic process in a DSGE model are, by construction, a m.d.s. (they are non-persistent). Given the absence of serial correlation, the thought experiment carried out in such cases  is equivalent to constructing and IRF such as the shock takes the value of 1 on impact and 0 afterwards. Contrary to VAR-identified shocks or innovations in a DSGE model, narratively-identified shocks may display serial correlation. If this is the case, $\mathcal{R}(h)$ (resulting, for example, from standard LP) will identify a different object, since the effect of serial correlation is included in the IRFs. In this instance, $\mathcal{R}(h)^{*}$ will provide the same macroeconomic experiment as, for example, a DSGE model.\footnote{
As mentioned earlier, when $x_t$ is the shock of interest, $\mathcal{R}(h)^{*}$ is defined as the ``traditional impulse response''  in \citet{koop1996impulse}.}

Second, $\mathcal{R}(h)^{*}$ can also be an object of interest when the researcher wants to compare the effects of different shocks, e.g., whether fiscal or monetary policy is more effective in stimulating output. For example, it may be the case that fiscal  shocks tend to show more persistence or that a given identification procedure tends to generate shocks with different degree of serial correlation. If the effect of persistence amounts to a non-negligible amount of the dynamic response, this could wrongly lead to the conclusion that one shock is more effective than the other when the true underlying cause is that the DGP of both shocks is different. Since $\mathcal{R}(h)^{*}$ effectively standardizes the dynamic responses to shocks with different data generating processes, this would facilitate such comparison. 

Third, in a similar vein,  $\mathcal{R}(h)^{*}$ can be useful  when the researcher wants to compare the effects of the same shock using data from different countries. This is because $\mathcal{R}(h)^{*}$ provides a standardization of the data generating processes of the shocks, which may be heterogeneous across countries.\footnote{Consider the following example: we want to compare the effects of fiscal policy in the US (using a news variable) and in another country (where we have availability of an alternative news variables). Consider the case that the news variables have different amounts of serial correlation and we obtain estimates of the government spending multipliers in both countries. Could we conclude that government spending is more effective in one country versus the other? Potentially, both policies could be equally effective but their sources of identification (news variables) may have different DGPs (one with more serial correlation than other), what leads to different multipliers.}

On the other hand, $\mathcal{R}(h)$ should be the object of interest when the researcher is interested in estimating the \textit{most likely dynamic response} of a variable to a shock according to the historical data. This argument is similar to the one posed by \citet{fisher2010stock} and  \citet{ramey2018government} to support the use of the cumulative multiplier (the ratio of the integral of the output response to that of the government spending response) to evaluate the effectiveness of fiscal policies. If we consider the effects of a monetary policy shock that cuts the policy rate by one percentage point, it is important to note that, if that shock displays persistence, then the total monetary policy action (the evolution of the nominal interest following the initial tightening) may be different to what would occur if the shock were non-persistent.


Importantly, and regardless of the experiment that one wants to run, looking at the difference between $\mathcal{R}(h)$ and $\mathcal{R}(h)^{*}$ is informative by itself, as it speaks about how much of the dynamic response is due to the implied DGP of the shock variable. Put differently, it informs the researcher of a propagation mechanism: $\mathcal{R}(h)$ includes the propagation through the persistence of $x_t$ while $\mathcal{R}(h)^{*}$ does not. 

Further to this, the methods that underlie the construction of $\mathcal{R}(h)^{*}$ when using local projections can be exported to more general uses. Hence the inclusion of leads of different variables can help in decomposing an IRF in different channels of propagation (where serial correlation is just one of them). This avenue could be particularly informative in highlighting what economics models can bring the dynamic responses closer to the data.

\section{Application} \label{sec:applications}

In this section we use the empirical work of \citet{ramey2018government} to show the quantitative relevance of serial correlation in an actual example. We do so by computing two types of IRFs, $\mathcal{R}(h)$ and $\mathcal{R}(h)^{*}$, as described above.\footnote{In the appendix, we consider additional applications, based on \citet{guajardo2014expansionary}, \citet{romer2004measure}, \citet{gertler2015monetary}, and \citet{romer2010macroeconomic}.} 



\citet{ramey2018government}, building on previous work by \citet{ramey2011identifying} and \citet{owyang2013government}, produce a series of announces about future defense spending between 1890q1-2014q1, scaled by previous quarter trend real GDP.\footnote{\citet{ramey2018government} estimate trend GDP as sixth degree polynomial for the logarithm of GDP and multiplier by the GDP deflator. In fact,  it is the use of the GDP deflator and trend GDP as a way to scale the shocks what seems to induce the persistence. The persistence is also present when the shock is scaled by previous-quarter GDP, as in  \citet{owyang2013government}.} This series, plotted in panel D of Figure~\ref{fig:series}, has a positive autocorrelation of $18.4\%$ (47.0\% in the subsample after WWII).\footnote{This positive autocorrelation is significant at a confidence level of 90\% when considering standard errors that are robust to the presence of heteroskedasticity and persistence (with more than one lag) for the whole sample. For the subsample starting after WWII, the autocorrelation is significant at any level. } 


\citet{ramey2018government} use LPs to estimate the response of output and government spending to a shock in future defense spending.
We follow their same approach and sample and estimate the following equations for output ($y_t$) and government spending ($g_t$):
\begin{eqnarray} \label{eq:RameyLP}
y_{t,h} &=& \beta^y_h shock_t  + \sum_{j=1}^P\rho_{j,h}^z z_{t-j}  + \sum_{f=1}^h\gamma_{f,h} shock_{t+f} + \xi_t \nonumber \\
g_{t,h} &=& \beta^g_h shock_t  +  \sum_{j=1}^P\rho_{j,h}^z z_{t-j}  + \sum_{f=1}^h\gamma_{f,h} shock_{t+f}  + \varepsilon_t,
\end{eqnarray}

\noindent where $z_{t}$ includes $P$ lags of  $y_t$, $g_t$ and $shock_t$. Note that, following the discussion in previous sections, we include $h$ leads of the variable $shock_t$. 
In particular, for each horizon $h$ we include $h$ leads.

To replicate \citet{ramey2018government}'s estimates, we set $\gamma_{f,h}=0$, $\forall$ $f,h$. The  black, solid line in Figure~\ref{fig:Ramey_responses} represents the estimated responses of output (left panel) and government spending (right panel) to the shock.\footnote{Figure~\ref{fig:Ramey_responses_doubleCI} also replicates the original 95\% confidence intervals computed using the Newey-West correction.} As noted in Section~\ref{sec:econometric}, these dynamic responses are the equivalent to the $\mathcal{R}(h)$ as defined in equation \eqref{eq:IRF_per} (with the only difference being that $\Omega_{t-1}$ includes now the past history of $z_t$). The results closely resemble those in  \citet{ramey2018government} (Figure 5 of their paper).\footnote{We drop the last $h$ observations of the sample, so that the specifications with and without leads can be fully comparable. This does not have any discernible effect when replicating the original results from \citet{ramey2018government}.} 

\begin{figure}
\caption{Output and government spending responses, with and without leads} \label{fig:Ramey_responses}
\begin{center}
		{\includegraphics[width=10cm]{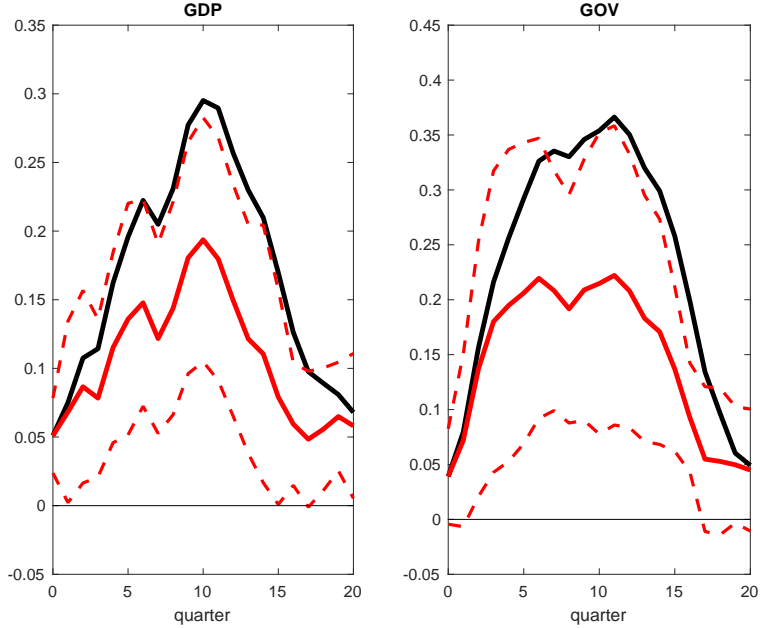}}	
\end{center}
\footnotesize Black lines show the results of estimating the system~\eqref{eq:RameyLP} without including any lead (as in \citet{ramey2018government}). 
Red solid lines represent the results of estimations when including $h$ leads of the \citet{ramey2018government} news variable (with 95\% confidence intervals).
\end{figure}

Next, we allow $\gamma_{f,h}\neq0$. As discussed in Section~\ref{sec:econometric}, this amounts to estimating $\mathcal{R}(h)^*$ as defined in equation~\eqref{eq:IRF_iid}. In the red lines in Figure~\ref{fig:Ramey_responses}, we observe that the dynamic responses change considerably when the leads are included. For example, after two years, output and government spending are 40\% lower than in \citet{ramey2018government}'s estimates. The large observed difference between $\mathcal{R}(h)$ and $\mathcal{R}(h)^*$ suggests that the persistence of the news variable plays a non-negligible role in explaining the dynamic transmission of the fiscal shock to output and government spending.

Whether to include leads or not also has implications for inference. The 95\% confidence intervals when leads are included (shown in dashed lines in Figure~\ref{fig:Ramey_responses}) are substantially narrower than when they are not (grey areas in Figure~\ref{fig:Ramey_responses_doubleCI}). The latter are around 50\% broader after two years, and more than twice as big after three years.

The dynamic responses of output and government spending are informative about the expected path of these variables after a shock. To obtain a measure of the efficiency of fiscal policy (i.e., the increase of output per each dollar increase in government spending), \citet{ramey2018government}  use the cumulative multiplier, computed as:\footnote{\citet{ramey2018government} show that the cumulative multiplier can be obtained in one step yielding identical results to those obtained combining equations \eqref{eq:RameyLP} and \eqref{eq:multiplier}.} 

\begin{equation} \label{eq:multiplier}
	M_{t,h} = \frac{\sum_{i=1}^h \beta^y_h}{\sum_{i=1}^h \beta^g_h}.
\end{equation}

We find that this statistic is not substantially affected by persistence of the shock (Figure~\ref{fig:Ramey_multiplier}). Given that both output and government spending react similarly when including leads of the shock, taking the ratio of the two variables attenuates the differences between both specifications.\footnote{Even though the multiplier does not change much when accounting for persistence, the fact that the expected responses of output and government spending do change substantially is very relevant from a policy-maker point of view. For example, a higher response of government spending can affect other important variables such as public debt or future changes in tax liabilities.}


\paragraph{Non-linear effects.} We now investigate whether the effect of persistence in the shock can affect the responses in a non-linear setting, i.e., if government spending multipliers are different in expansions and recessions.\footnote{See \citet{ramey2018tenyears} for a recent summary of this debate. For example, an influential study by \citet{auerbach2012measuring} finds that government spending multipliers are higher during recessions using a non-linear VAR. \citet{alloza2017fiscal}  highlights the role of the information used to define a period of recession, and finds that output responds negatively to government spending shocks in a post-WWII sample under different identification and estimation approaches.} For this, we follow \citet{ramey2018government} and estimate a series of non-linear LPs:

\begin{align}  \label{eq:LPnonli}
				x_{t+h} =  S_{t-1} \left[ \alpha_{A,h} + \sum_{j=1}^P\rho_{A,j,h} z_{t-j} + \beta_{A,h} shock_t + \sum_{f=1}^h\delta_{A,f,h} shock_{t+f} \right] &+&  \notag \\ 
					  (1-S_{t-1}) \left[ \alpha_{B,h} + \sum_{j=1}^P\rho_{A,j,h} z_{t-j} + \beta_{B,h} shock_t +\sum_{f=1}^h\delta_{B,f,h} shock_{t+f} \right] &+&  \xi_{t+h},
\end{align}

\noindent where $x_t$ is either output or government spending and $S_t$ is a binary variable indicating the state of the economy. When $S_t=1$, the economy is booming and, when $S_t=0$, the economy is in recession, which is defined as when the unemployment rate is above the threshold of 6.5. In this setting, all the variables (and the constant), are allowed to have differential effects during expansions and recessions. 

We first replicate the non-linear responses of output and government spending during booms and recessions obtained by \citet{ramey2018government}. Hence, we estimate equation~\eqref{eq:LPnonli} setting $\delta_{A,f,h}=\delta_{B,f,h}=0$ $\forall f,h$, which identifies $\mathcal{R}(h)$. Our results, shown in Figure~\ref{fig:G_nonlin_resp} in black lines, resemble very closely those from the authors. Next, we repeat the experiment accounting for potential persistence, that is, including leads of the shock (identifying $\mathcal{R}(h)^*$). The results are shown in red lines in Figure~\ref{fig:G_nonlin_resp}.  While relatively similar in the case of expansions, the responses are quantitatively different during recessions.  The estimates that include leads lie outside of the 95\% confidence bands during much of the response horizon. The results suggest that ignoring the effect of persistence could yield responses during recessions that, after 2--3 years,  are twice as large as the responses that account for the effect of persistence. Or, in other words, persistence in the shock is responsible for up to 50\% of the dynamic transmission of the shock during recessions.

\begin{figure}  \caption{Responses during expansions and recessions, with and without leads}  \label{fig:G_nonlin_resp}
\begin{center}
		{\includegraphics[width=10cm]{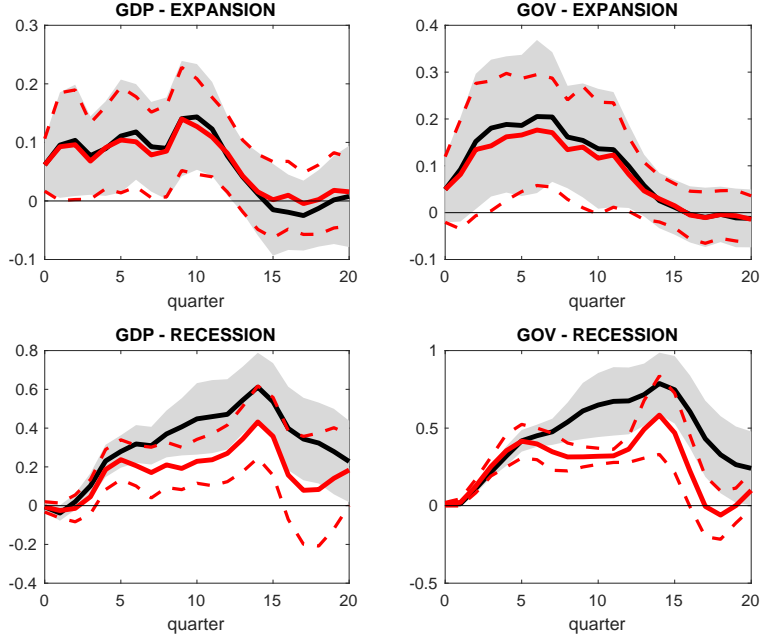}}	
\end{center}
\footnotesize Black lines show the results from system of equations~\eqref{eq:LPnonli} without including any lead (as in \citet{ramey2018government}). Grey areas represent 68 and 95\% Newey-West confidence intervals for these estimates. Red solid lines represent the results of estimations when including $h$ leads of the Ramey's news variable. Red dashed lines represent the 95\% Newey-West confidence intervals for these estimates.
\end{figure}

In Figure~\ref{fig:G_nonli_mults}, we show how these responses map into estimates of non-linear fiscal multipliers. In the case of expansions, the results do not change much depending on whether the persistence is accounted for (red solid line, $\mathcal{R}(h)^*$) or not (black solid line, $\mathcal{R}(h)$). In either case, they resemble those in \citet{ramey2018government} (see Figure 6 of their paper). In recessions, however, the results change substantially depending on whether the persistence is controlled for or not. If it is not (black solid line), the multiplier has a negative value upon impact and substantially falls in the following quarter to a value of -2. It becomes positive before the end of the first year, and fully converges to the value of the multiplier during expansions after six quarters. 
If the persistence is excluded from the dynamic responses (red dashed line), the cumulative multiplier is -1 (instead of -2) and becomes positive after the first year. Furthermore, the multiplier during recessions remains lower than the multiplier during expansions for a much longer period. When the persistence is not accounted for, this convergence is achieved after 6 quarters, as mentioned above. However, when including leads of the shock, this convergence is not fully reached during our considered response horizon. 
These results suggest that during the short and medium-run the government spending multiplier could be lower during recessions than during expansions, and part of  this difference may be attributable to the presence of persistence in the shock.

\begin{figure} \caption{Government spending multiplier during expansions and recessions, with and without leads}  \label{fig:G_nonli_mults}
\begin{center}
		{\includegraphics[width=10cm]{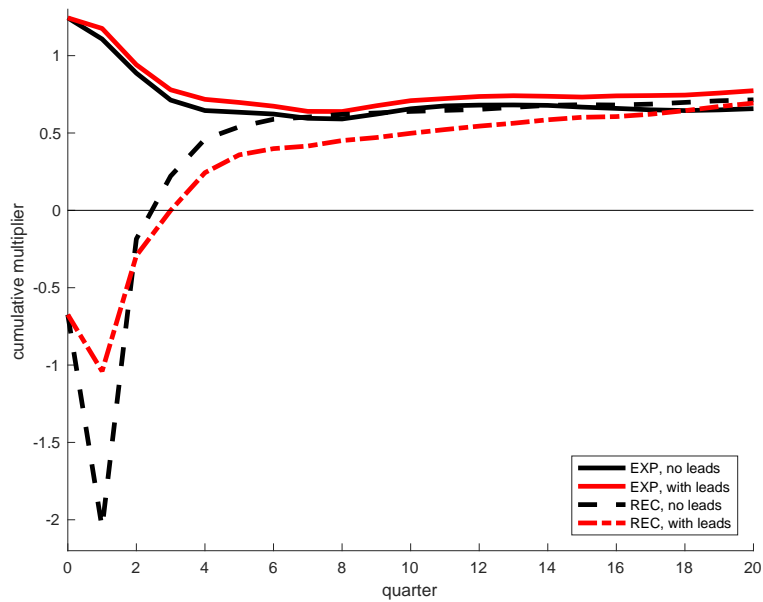}}	
\end{center}
\footnotesize The black solid and dashed lines show the cumulative multiplier during periods of expansion and recession, respectively,  without including any lead (as in \citet{ramey2018government}). The red solid and dashed lines show the cumulative multiplier during periods of expansion and recession, respectively, when including leads of the shock.
\end{figure}

One of the main advantages of LPs is that they allow to accommodate non-linear settings,  as those in equation \eqref{eq:LPnonli}. This is particularly useful since, contrary to threshold VARs, LPs do not impose any restriction on the evolution of state $S_t$ (while non-linear VARs  that interact the shock with a state dummy do assume that $S_t$ remains fixed during the response horizon). The framework explained in the previous section allows to consider additional macroeconomic experiments that can help understand how restrictive this condition is. In particular, by including leads of the state $S_t$ in equation \eqref{eq:LPnonli} we are identifying the counterfactual response to a fiscal shock when the underlying state of the economy is not allowed to change (as in threshold VARs). We perform this experiment and report the multipliers during booms in recessions in green lines in Figure~\ref{fig:G_nonli_mults_withSTATE}. We observe that, when the state is not allowed to change, the multiplier during recessions is slightly higher in the short run, but essentially unchanged at medium and longer horizons. This exercise allows us to illustrate how the use of leads of variables in conjunction with LPs can help understand interesting counterfactual exercises and shed light on the dynamic transmission of shocks.
\section{Conclusions} \label{sec:conclusions}

We have shown that persistence results in the estimation of different responses when using LPs \emph{versus} traditional methods based on DLMs . For a researcher interested in the response  \textit{as if} the shock were not persistent, DLMs  yield the desired object, but LPs need to be adapted. The opposite is true if the object of interest is the response to the shock ``as it is''. Regardless of which is the thought experiment that the researcher seeks to carry out, the difference between both types of responses is informative about how much of the dynamic transmission of a shock is due to the presence of persistence.



The use of leads  can be generalized to other interesting contexts, as it allows to shut down channels of transmission. For example, one may be interested in the effects of monetary policy shocks on output due to a particular instrument while holding other variables (e.g., changes to fiscal policy) constant. In the context of LPs, leads of a selected variable (e.g., tax changes) will deliver responses holding that variable constant. This methodology allows to separate the direct  (due to the impact through the regressor of interest) and indirect effects (due to other variables in the regression). This has often been used in the context of VARs, by imposing restrictions on the coefficients of selected impulse responses. The inclusion of leads achieves a similar goal in LPs, hence allowing to construct interesting macroeconomic experiments. We leave these questions for future research.


%
%

\bibliographystyle{apalikeMario} 
\bibliography{bibfile_persistent_z}

\newpage


\clearpage

\newpage
\clearpage
\setcounter{page}{1}
\begin{appendices}
\section*{Online Appendices}
\setcounter{equation}{0}
\renewcommand{\theequation}{A.\arabic{equation}}

\section{Proofs}\label{app:proofs}
\setcounter{figure}{0}
\setcounter{page}{1}
\setcounter{equation}{0}
\setcounter{footnote}{0}
\renewcommand{\thefigure}{A\arabic{figure}}
\renewcommand{\thesubsection}{A.\arabic{subsection}}
\renewcommand{\theequation}{A.\arabic{equation}}

\subsection{Proof of Proposition~\ref{prop:equivalence}} \label{app:proof1}
Consider equation~\eqref{eq:LP} (rewritten here for convenience):
\begin{equation} \label{eq:LPapp}
	y_{t+h} = \delta_{h} x_{t} + \xi_{t+h},
\end{equation}

where $\delta_{h} = \mathcal{R}(h)^{LP}  $ represents the impact of variable $x_t$ on $y_{t+h}$ (the response function). Since $\delta_{h}$ is the linear projection coefficient of equation~\eqref{eq:LPapp}: 

\begin{equation} \label{eq:LP_coef}
	\delta_{h}  = \frac{cov(y_{t+h},x_t)}{var(x_t)}.
\end{equation}

The dynamic effect of $x_t$ on $y_{t+h}$ can also be obtained from DLMs  as in equation~\eqref{eq:MA}:
\begin{equation*}
y_t = \theta_0 x_t + \theta_1 x_{t-1} + \theta_2 x_{t-2} + \theta_3 x_{t-3} + \theta_4 x_{t-4} \ldots + u_t.
\end{equation*}

Since this expression holds $\forall t$, it can be written as:
\begin{equation*}
y_{t+h} = \theta_0 x_{t+h} + \theta_1 x_{t+h-1} + \theta_2 x_{t+h-2} + \theta_3 x_{t+h-3} + \ldots + \theta_h x_{t}  + u_t, 
\end{equation*}
where the coefficient $\theta_h=\mathcal{R}(h)^{DLM}$ represents the impulse response in period $h$, obtained from:
\begin{equation} \label{eq_MA_coef}
	\theta_{h}  = \frac{cov(y_{t+h},x_t)}{var(x_t)} \Bigr|_{\substack{x_{t+1},...,x_{t+h}}}.
\end{equation}

When $\gamma = 0$ in the process described by system~\eqref{eq:DGP}, we have that $x_t=\varepsilon_t \sim \text{\emph{white noise}}(\mu_\varepsilon,\,\sigma_\varepsilon^{2})$, so:
\begin{equation*}
	\theta_{h}  = \frac{cov(y_{t+h},x_t)}{var(x_t)} \Bigr|_{\substack{x_{t+1},...,x_{t+h}}} = \frac{cov(y_{t+h},x_t)}{var(x_t)}. 
\end{equation*}

In this case, $\delta_{h} = \theta_h$ and LPs and DLMs yield the same responses: $\mathcal{R}(h)^{LP} = \mathcal{R}(h)^{DLM}$ $\forall h$. Note that $\mathcal{R}(h)^{DLM}=\mathcal{R}(h)^{*}$ $\forall h,\gamma$ since (under linearity):
\begin{equation*}
\mathcal{R}(h)^* =  \frac{\partial y_{t+h}} {\partial x_{t}} \Bigr|_{\substack{x_{t+1},...,x_{t+h}}}  = \frac{cov(y_{t+h},x_t)}{var(x_t)} \Bigr|_{\substack{x_{t+1},...,x_{t+h}}}. 
\end{equation*}

When $\gamma \neq 0$, equation~\eqref{eq:LP_coef} becomes (using the equations in system~\eqref{eq:DGP}):
\begin{equation} \label{eq:LP_ols_final}
	\delta_{h} = \frac{cov(y_{t+h},x_t)}{var(x_t)} = \frac{cov(\delta x_{t+h}+u_{t+h},x_t)}{var(x_t)} =  \delta\frac{cov( x_{t+h},x_t)}{var(x_t)}=\delta \gamma^h,
\end{equation}
using the expression:
\begin{equation}\label{eq:x_back}
x_{t+h}= \gamma^h x_t + \sum_{j=0}^{h-1}\gamma^j \varepsilon_{t+h-j}.
\end{equation}
However, the dynamic response obtained from DLMs  is:
\begin{equation*}
	\theta_{h}  = \frac{cov(y_{t+h},x_t)}{var(x_t)} \Bigr|_{\substack{x_{t+1},...,x_{t+h}}} = \delta\frac{cov(x_{t+h},x_t)}{var(x_t)} \Bigr|_{\substack{x_{t+1},...,x_{t+h}}}. 
\end{equation*}

When $h=0$, the above expression becomes $\theta_{0}  = \delta\frac{cov(x_{t},x_t)}{var(x_t)} =\delta$. For $h>0$, we have that $\theta_{h}  = \delta\frac{cov(x_{t+h},x_t)}{var(x_t)} \Bigr|_{\substack{x_{t+1},...,x_{t+h}}} =0$. This shows that when $\gamma \neq 0$, we have that $\mathcal{R}(h)^{LP} = \mathcal{R}(h)^{DLM}$ if and only if $h=0$. 
\quad $\blacksquare$

\subsection{Proof of Proposition~\ref{prop:LPtoMA}}  \label{app:proof2} 

Consider equation~\eqref{eq:LPleads} (rewritten here for convenience):
\begin{equation} \label{eq:LPleads_app}
y_{t+h}= \delta_{h,0}x_{t} + \delta_{h,1}x_{t+1} + \xi_{t+h},
\end{equation}

where $\delta_{h,0} = \mathcal{R}(h)^{F}  $ represents the impact of the shock $x_t$ on $y_{t+h}$ when including leads of the former. Since $\delta_{h,0}$ is the linear projection coefficient of equation~\eqref{eq:LPleads_app}, then:  

\begin{equation} \label{eq:LP_coef_F}
	\delta_{h,0}  = \frac{cov(y_{t+h},x_t)}{var(x_t)} \Bigr|_{\substack{x_{t+1}}} = \delta\frac{cov(x_{t+h},x_t)}{var(x_t)} \Bigr|_{\substack{x_{t+1}}}.
\end{equation}

Note that  the data generating process in system~\eqref{eq:DGP} considers that $x_t$ is an AR(1) so it can be represented in terms of $x_{t+1}$ (see equation~\eqref{eq:x_back}).\footnote{Note also that the results easily generalize to cases when $x_t$  is an autoregressive process of  higher order.} Then, we have that DLMs  and LPs with leads recover the same object:
\begin{equation*}
	\theta_{h}  = \frac{cov(y_{t+h},x_t)}{var(x_t)} \Bigr|_{\substack{x_{t+1},...,x_{t+h}}} = \delta\frac{cov(x_{t+h},x_t)}{var(x_t)} \Bigr|_{\substack{x_{t+1}}} = \delta_{h,0}. 
\end{equation*}

To see this, note that in period $h=0$ we have that:
\begin{equation*}
	\theta_{0}  = \frac{cov(y_{t},x_t)}{var(x_t)} \Bigr|_{\substack{x_{t+1},...,x_{t+h}}} = \frac{cov(y_{t},x_t)}{var(x_t)}  = \delta = \delta_{0,0}.
\end{equation*}

In periods $h>0$, we can rewrite equation~\eqref{eq:LP_coef_F} as:
\begin{equation} 
	\delta_{h,0}  = \delta\frac{cov(x_{t+h},x_t)}{var(x_t)} \Bigr|_{\substack{x_{t+1}}} = \delta \gamma^{h-1} \frac{cov(x_{t+1},x_t)}{var(x_t)} \Bigr|_{\substack{x_{t+1}}} = 0.
\end{equation}

Similarly, equation~\eqref{eq_MA_coef} becomes: 
\begin{equation} 
	\theta_{h}  = \frac{cov(y_{t+h},x_t)}{var(x_t)} \Bigr|_{\substack{x_{t+1},...,x_{t+h}}} = \delta \gamma^{h-1} \frac{cov(x_{t+1},x_t)}{var(x_t)} \Bigr|_{\substack{x_{t+1}}} = 0.
\end{equation}

So we have $ \mathcal{R}(h)^{F}= \mathcal{R}(h)^{DLM} $ $\forall$ $h$,$\gamma$. And we know that  $ \mathcal{R}(h)^{DLM}= \mathcal{R}(h)^{*}$ , from the section above.
\quad $\blacksquare$

\subsection{Proof of Proposition~\ref{prop:MAtoLP}}  \label{app:proof3} 
Consider a version of equation~\eqref{eq:MA_eps} rewritten here for convenience:

\begin{equation}
\label{eq:MA_eps_app}
y_t = \tilde{\theta}_0 \varepsilon_t + \tilde{\theta}_1 \varepsilon_{t-1} + \tilde{\theta}_2 \varepsilon_{t-2} + \tilde{\theta}_3 \varepsilon_{t-3} + \tilde{\theta}_4 \varepsilon_{t-4} \ldots + u_t, 
\end{equation}

where $\tilde{\theta}_0  = \mathcal{R}(h)^{DLM-per}  $ represents the impact of variable $\varepsilon_t$ on $y_{t+h}$. Note that $\varepsilon_t$ is not observable but can be obtained if we know the data generating process described in system~\eqref{eq:DGP}. Since equation~\eqref{eq:MA_eps_app} represents the linear projection of $y_t$ on $\varepsilon_t$ and its lags, with  $\varepsilon_t \sim \text{\emph{iid}}(\mu_\varepsilon,\,\sigma_\varepsilon^{2})$, we have:  

\begin{equation} \label{eq:MA_OLS_app}
	\tilde{\theta}_{h}  = \frac{cov(y_{t+h},\varepsilon_t)}{var(\varepsilon_t)} \Bigr|_{\substack{\varepsilon_{t+1},...,\varepsilon_{t+h}}} = \frac{cov(y_{t+h},\varepsilon_t)}{var(\varepsilon_t)} = \delta\frac{cov(x_{t+h}, \varepsilon_t)}{var(\varepsilon_t)}.
\end{equation}

This expression is equivalent to equation~\eqref{eq:LP_coef} which implies that $\tilde{\theta}_{h} = \delta_h$ and  $\mathcal{R}(h)^{DLM-per}  =\mathcal{R}(h)^{LP}$. To see this, substitute for $x_{t+h}$ in equation~\eqref{eq:MA_OLS_app} using expression~\eqref{eq:x_back} and  system~\eqref{eq:DGP}:
\begin{equation}  \label{eq:MA_OLSclean_app}
	\tilde{\theta}_{h}  = \delta \gamma^h \frac{cov(  x_t + \sum_{j=0}^{h-1}\gamma^j \varepsilon_{t+h-j},\varepsilon_t)}{var(\varepsilon_t)} = \delta \gamma^h \frac{cov(  x_t,\varepsilon_t)}{var(\varepsilon_t)} =
	\delta \gamma^h.
\end{equation}

Note that the above expression yields the same result as equation~\eqref{eq:LP_ols_final}, which shows that $\tilde{\theta}_{h} = \delta_h$ $\forall$ $h$.
\quad $\blacksquare$

\newpage
\section{Additional results}
\setcounter{figure}{0}
\setcounter{equation}{0}
\setcounter{footnote}{0}
\renewcommand{\thefigure}{B\arabic{figure}}
\renewcommand{\thesubsection}{B.\arabic{subsection}}
\renewcommand{\theequation}{B.\arabic{equation}}

\subsection{Including the shocks as endogenous variables in a VAR}  \label{app:VAR}

A researcher may consider including a shock with persistence as an endogenous variable in a VAR. Does this approach eliminate the effect of the persistence of the shock on the impulse responses?  A VAR, since it explicitly models the persistence of the shock, includes this effect in the estimated impulse responses and, hence, yields the same dynamic effects as LPs (contrary to what is obtained when including the shock as a distributed lag structure within a VAR).\footnote{\citet{bloom2009uncertainty}, \citet{romer2010macroeconomic}, and \citet{ramey2011identifying} are examples of studies that include shocks as endogenous variables in a VAR. These specifications are also known as hybrid VARs (see \citet{coibion2012monetary}). \citet{plagborg2018local} formally show that VARs and LPs identify the same impulse responses. Here we illustrate that when one of the endogenous variables in the VAR is a persistent shock, this effect will be carried over to the dynamic responses.}

To see this in an intuitive way, consider the data generating process given by system~\eqref{eq:DGPsimul} and rewritten here for convenience (with a slightly different notation):
\begin{eqnarray} \label{eq:DGPsimul_forVAR}
y_t &=& \rho y_{t-1} + \delta_0 x_t + \delta_1 x_{t-1} + \varepsilon^y_t \nonumber  \\
x_t &=& \gamma x_{t-1} + \varepsilon^x_t.
\end{eqnarray}

This process can be recast as a structural VAR of the form $\bm A_0 \bm Y_t = \bm B^* \bm Y_{t-1} +\bm \varepsilon_t$, with:

\begin{equation}
\begin{bmatrix}
 1 & 0 \\ 
 -\delta_0 & 1 
\end{bmatrix}
\begin{bmatrix}
 x_t \\ 
y_t
\end{bmatrix} =
\begin{bmatrix}
 \gamma & 0 \\ 
 \delta_1 & \rho 
\end{bmatrix}
\begin{bmatrix}
 x_{t-1} \\ 
y_{t-1}
\end{bmatrix} +
\begin{bmatrix}
\varepsilon^x_t \\ 
\varepsilon^y_t
\end{bmatrix}.
\end{equation}

An econometrician would estimate the following reduced-form VAR:
 \begin{equation} \label{eq:appVAR}
  \bm Y_t = \bm B\bm Y_{t-1} +\bm u_t,
 \end{equation} 

\noindent where  $\bm B = \bm A_0^{-1} \bm B^*$; and $\bm u_t= \bm A_0^{-1} \bm \varepsilon_t$ are reduced-form residuals. Since the data generating process given by equation \eqref{eq:DGPsimul_forVAR} already incorporates restrictions on the contemporaneous behavior of the variables, a researcher may identify the structural impulse responses by computing the Choleski decomposition (when $x_t$ is ordered first) of the variance-covariance matrix of reduced-form residuals $\bm u_t$.

However, note that, when $\gamma \neq 0$, the response of $y_t$ to $\varepsilon_t^x$ will include an effect due to the  persistence of the shock $x_t$. Intuitively, consider the case of $\rho=\delta_1=0$. In this scenario, a researcher may be interesting in recovering a \textit{one-off} shock to $x_t$. However, the response of $y_t$ will be given by $\mathcal{R}(h)^{VAR}=\delta_0 \sum_{k=0}^\infty \gamma^k \varepsilon^x_{t-k}$, that is, the \textit{one-off} shock will still have effects along the response horizon because of the peristence of $x_t$ (when $\gamma \neq 0$). 

To see this point in a more general way, consider the same calibration as used in the main text (i.e., $\rho=0.9$, $\delta_0=1.5$, $\delta_1=1$, and either $\gamma=0$ or $\gamma=0.2$). We compute the impulse responses of variables $y_t$ and $x_t$ (the measured shock) to $\varepsilon^x_t$, shown in Figure~\ref{fig:IRF_VAR}. We also estimate the same impulse-response functions for $y_t$ using LPs, as in Section~\ref{sec:simul}.

\begin{figure} \caption{Responses in a VAR to  $\varepsilon_t^x$}
\label{fig:IRF_VAR}
\vspace{0.5cm}
\begin{tabular}{c c}
Panel A) Response of $y_t$ to $\varepsilon_t^x$ & Panel B) Response of $x_t$ to $\varepsilon_t^x$  \\ 
		{\includegraphics[width=8cm]{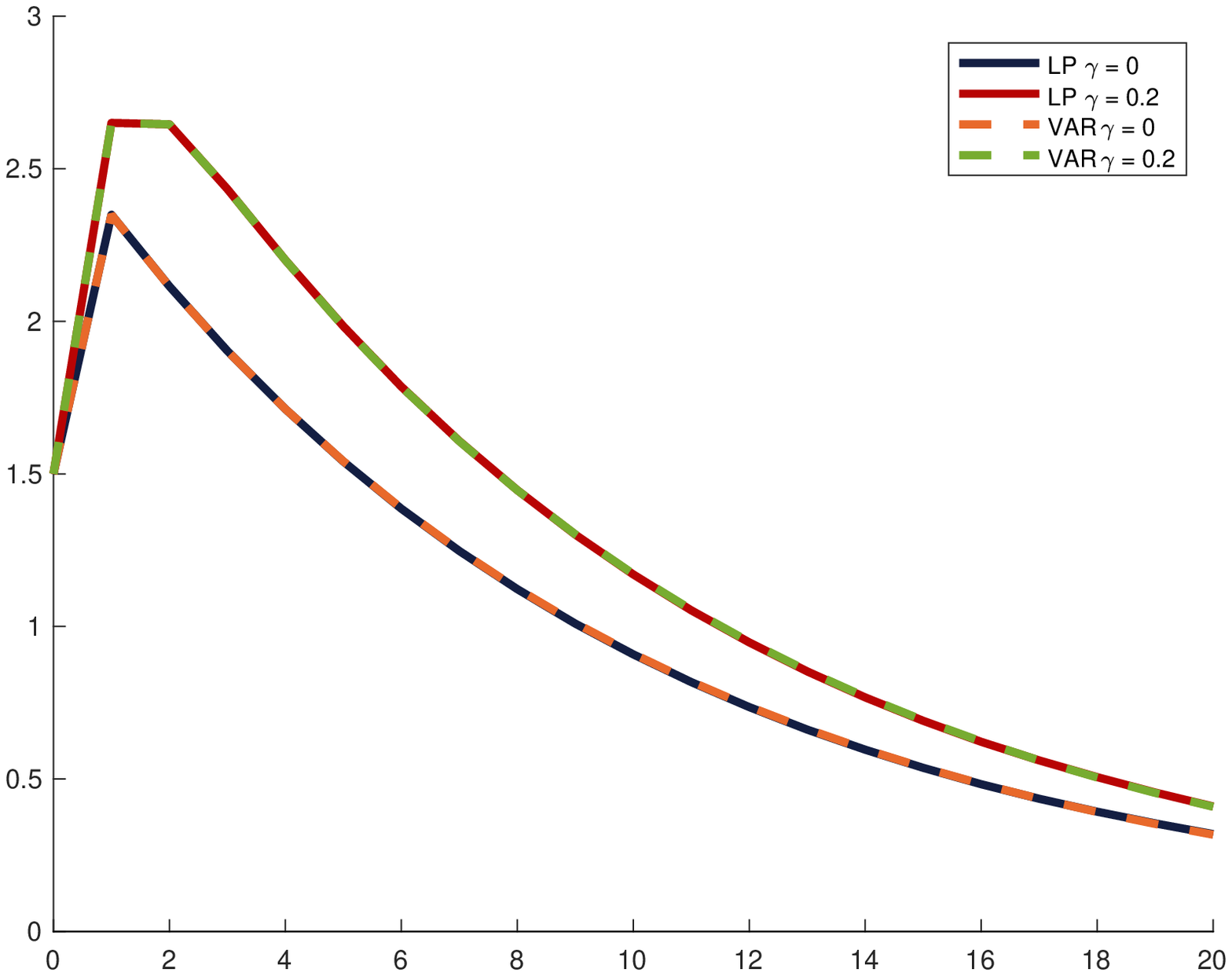}} &	 {\includegraphics[width=8cm]{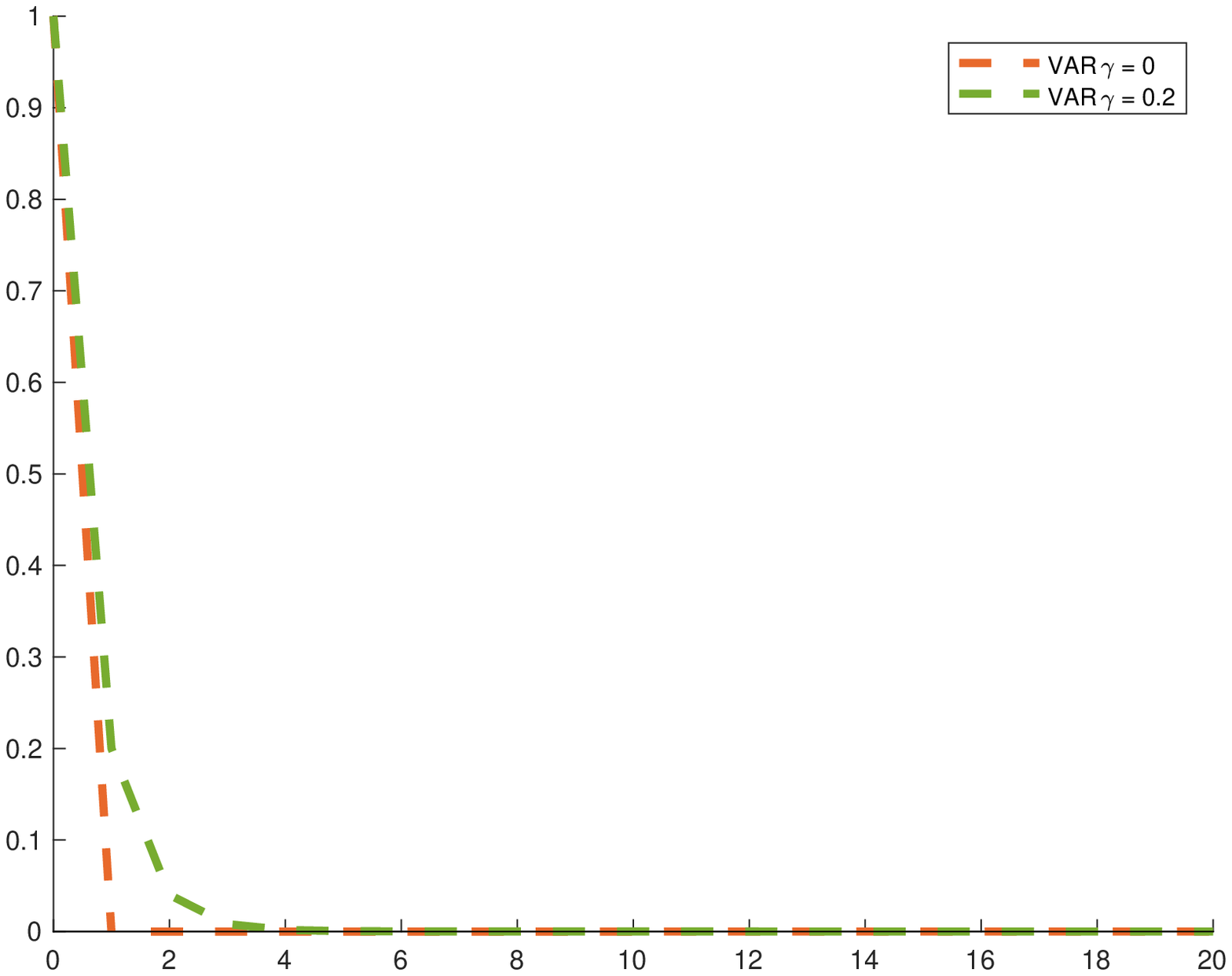}}
\end{tabular}
\footnotesize{The figure shows the VAR responses of $y_t$ (Panel A) and $x_t$ (Panel B) to  $\varepsilon^x_t$ estimated from \eqref{eq:appVAR}, under different assumptions of the persistence parameter $\gamma$: dashed orange lines for responses when there is no persistence ($\gamma=0$) and dashed green lines for responses when there is persistence ($\gamma=0.2$). For reference, Panel A also displays responses from the same DGP estimated using LPs for the cases of $\gamma=0$ (solid blue line) and $\gamma=0.2$ (solid red line).}
\end{figure}

The results illustrate that, regardless of the value $\gamma$, a VAR that considers $x_t$ as endogenous variable and LPs estimate the same impulse responses. When there is some persistence in the shock $x_t$, both the VAR (that considers $x_t$ as an endogenous variable) and LPs include a dynamic effect due to the persistence of $x_t$.

Importantly, when $x_t$ displays persistence, the response function estimated by a VAR will vary depending on whether $x_t$ is included as an endogenous variable (as shown before) or an exogenous one (e.g., with a distributed lag or moving average structure, as in \citet{mertens2012empirical} or \citet{favero2012tax}). To show this point, Figure~\ref{fig:IRF-VARx} displays the response of $y_t$ when (i)  $x_t$ is included as an endogenous variable in the VAR, or (ii) when estimating a regression of $y_t$ on $x_t$ and a lag of both variables. As it can be seen, considering $x_t$ as an exogenous regressor always delivers the same dynamic responses as if $x_t$ were serially uncorrelated, regardless of the actual value of $\gamma$. 

The discussion above highlights that the result regarding the equivalence of LPs and DLMs under no serial correlation of the shock can be generalized to multivariate settings of VARs with the shock included as an endogenous variable (in the case of LPs) or as an exogenous variable (in the case of DLMs). 

\begin{figure} 
\caption{Responses when the shock is included as an endogenous or exogenous variable in a VAR}\label{fig:IRF-VARx}
\begin{center}
		{\includegraphics[width=10cm]{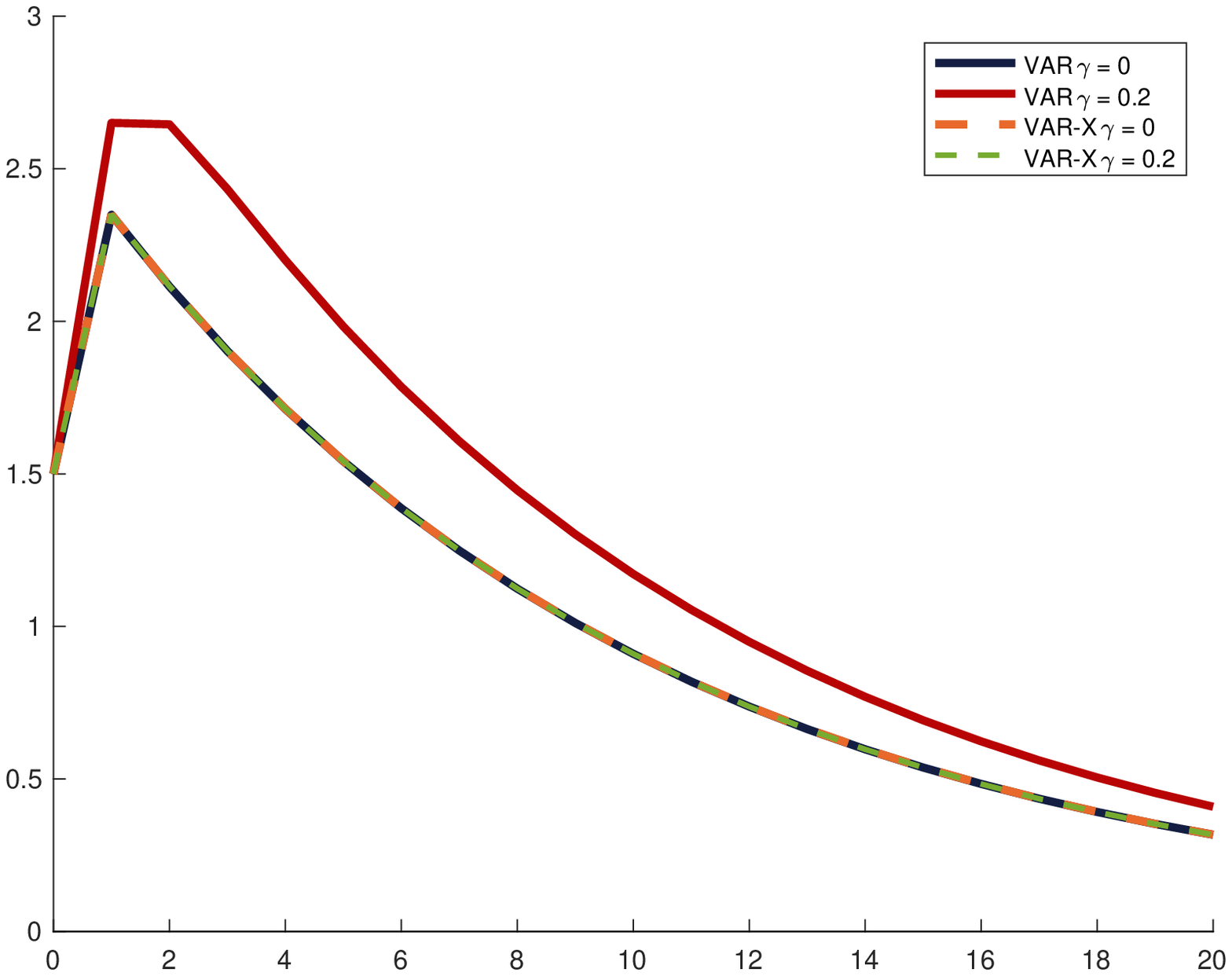}}	
\end{center}
\footnotesize{The figure shows the VAR responses of $y_t$ to $\varepsilon^x_t$ estimated from \eqref{eq:appVAR} under different assumptions of the persistence parameter $\gamma$ and two different specifications: solid lines display the responses when the shock is included as an endogenous variable in the VAR and dashed line shows the responses when the same shock is included as an exogenous variable in the VAR.}
\end{figure}

\subsection{Local projections with instrumental variables}\label{sec:IV}

Recently, there has been an increased attention to the use of external sources of variation as instruments in LPs.\footnote{See \citet{ramey2018government} for an example. Related to this, \citet{ramey2016macroeconomic} discusses the distinction between shock, innovation, and instrument. \citet{barnichon2020identifying} show how independently identified shocks can be used as instruments to estimate the coefficients of structural forward looking macroeconomic equations.} In this section, we investigate how persistence may affect the estimation of dynamic effects when using instrumental variables in local projections (LP-IV).

\citet{stock2018identification} provide the conditions under which a researcher can exploit external variation to estimate impulse response functions. A valid instrument $z_t$  should be both relevant and contemporaneously exogenous, that is, $z_t$ should not be correlated to any shock in the system except with the one that the researcher is interested in. Lastly, \citet{stock2018identification} impose a restriction called lead/lag exogeneity, which implies that the instrument should not be correlated with any lead or lag of any of the shocks in the system. 

Consider the following data generating process:
\begin{eqnarray} \label{eq:LP-IV}
		y_t &=& \beta g_{t} + u_t \nonumber \\
		u_t &=& m_t + a_t \nonumber \\
		g_t &=& \lambda x_t +  (1-\lambda)m_t  \\
		z_t &=& x_t + \nu_t  \nonumber\\
		x_t &=& \gamma x_{t-1} + \varepsilon_t, \nonumber
\end{eqnarray}
	
where $a_t $, $\nu_t$, $m_t$  and $\varepsilon_t$ follow independent $\mathcal{N}(0,1)$ distributions. A researcher may be interested in estimating the dynamic effects of variable $g_t$ on $y_t$ (e.g., the effects of government spending on output). However, $g_t$ is endogenous due to the presence of an omitted variable $m_t$. The researcher may have the availability of an instrument $z_t$,  which is contemporaneously exogenous by construction and relevant when $\lambda \neq 0$. This instrument, since it depends directly on the shock $x_t$,  displays persistence when $\gamma \neq 0$. When there is persistence in the instrument (and the shock), the lead/lag exogeneity condition mentioned above is not satisfied. To illustrate this point, we simulate system~\eqref{eq:LP-IV} setting $\beta=2$ (and different values of $\lambda$ and $\gamma$) for 100 million periods, and estimate the dynamic effects of $g_t$ on $y_t$ using LP.

We first consider the case of $\gamma=0$ and $\lambda=1$, that is, there is no problem of endogeneity or persistence. The estimated effect of $g_t$ on $y_t$ recovered by LPs is represented by a solid blue line in Panel A of Figure~\ref{fig:LP-IV}. As expected, the contemporaneous impact of government spending on output is equal to 2. When considering $\lambda=0.5$ (but still no persistence, i.e., $\gamma=0$), LPs that employ OLS will deliver biased estimates of the contemporaneous effect of $g_t$ (solid red line). The difference between the red and the blue lines in the first period is a measure of the endogeneity bias. The problem of endogeneity can be addressed by using $z_t$ as an instrument for $g_t$ to recover the exogenous variation in government spending (given by $x_t$). This result (still considering $\gamma=0$) is represented by the dashed  grey line in Panel A of Figure \ref{fig:LP-IV}, which shows how the use  of LP-IV  can overcome the presence of endogeneity, delivering a response function identical to the benchmark case without omitted variables bias.

\begin{figure} \caption{LPs with instrumental variables}
\label{fig:LP-IV}
\vspace{0.5cm}
\begin{tabular}{c c}
Panel A) $\gamma=0$ & Panel B) $\gamma = 0.2 $ \\ 
		{\includegraphics[width=8cm]{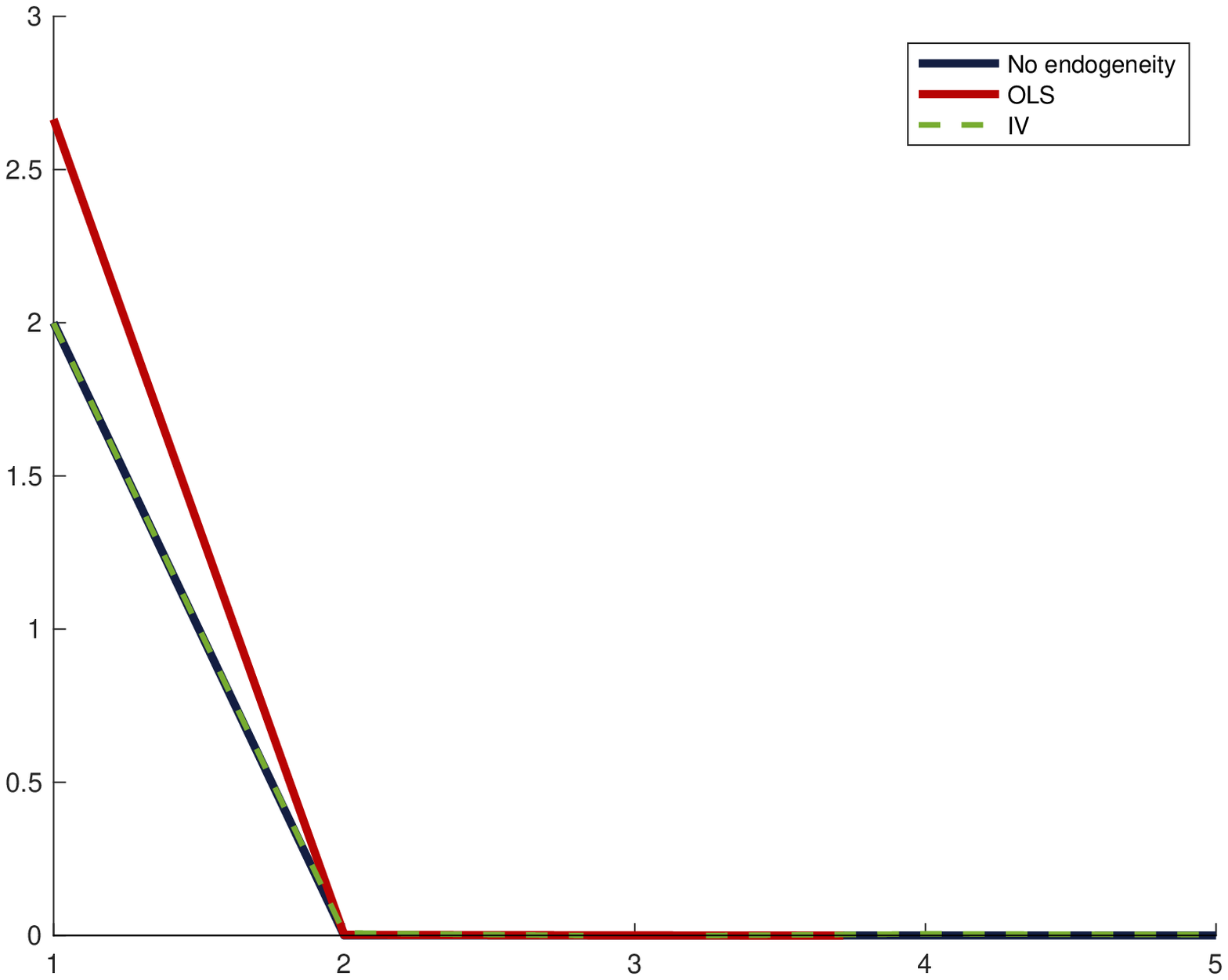}} &	 {\includegraphics[width=8cm]{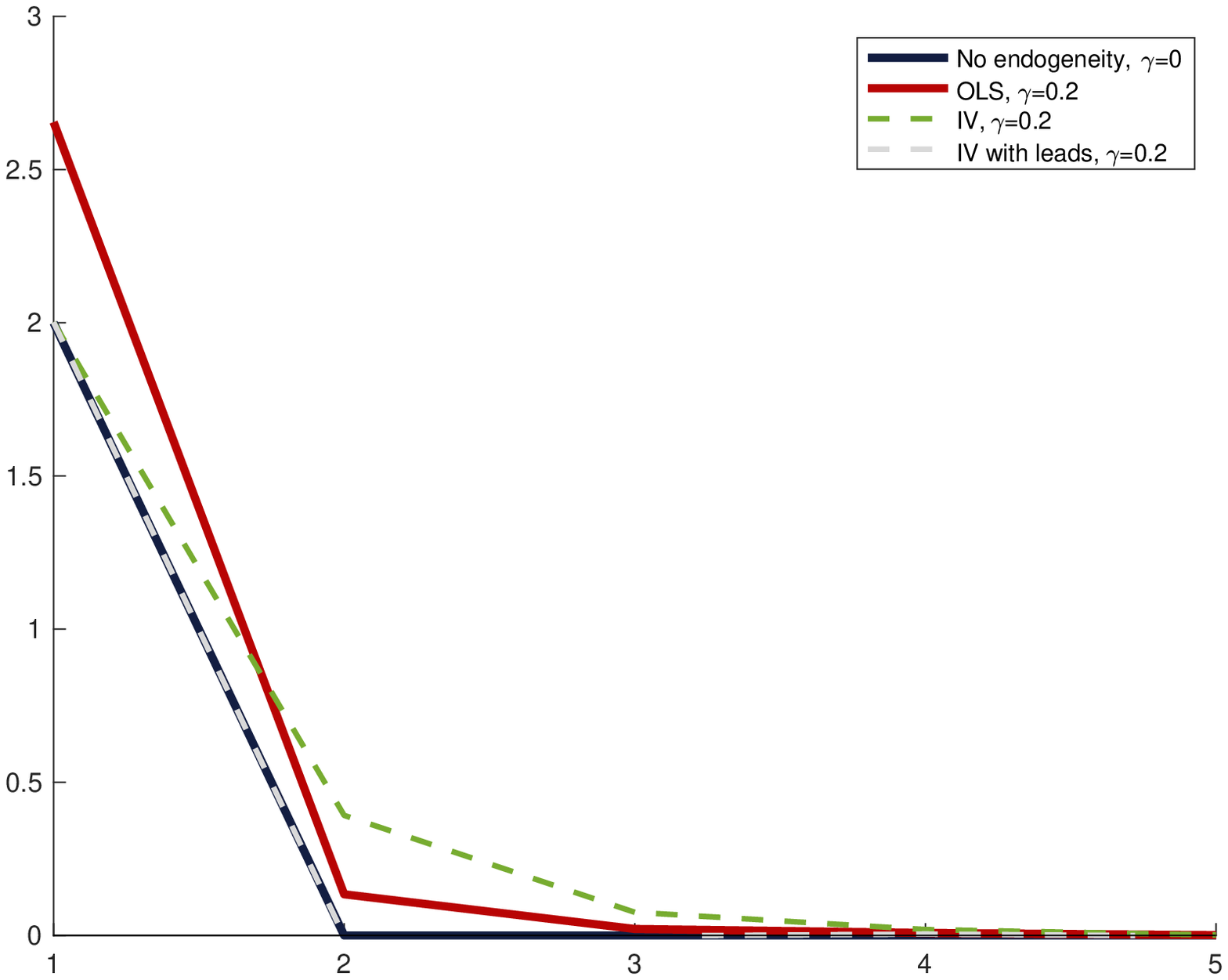}}
\end{tabular}
\footnotesize This figure shows the response of a simulated outcome variable to a shock  using local projections with instruments, with an underlying DGP given by system~\eqref{eq:LP-IV} and calibrated for different degrees of persistence in the shock ($\gamma=0$ in panel A and $\gamma=0.2$ in Panel B). In both panels, red lines refers to estimation using LPs estimated using OLS and green dashed lines refer to LPs estimated using instrumental variables, when the DGP generates endogeneity. For reference, the blue solid lines (in both panels) display responses when the DGP does not generate persistence or endogeneity. In Panel B, the grey pointed line displays responses estimated using instrumental variables in LPs and including leads of the shock. 
\end{figure}

Next, we repeat the previous exercise but now we allow for persistence in the instrument (due to persistence of the shock); in particular, we set $\gamma=0.2$. The results are shown in Panel B of Figure \ref{fig:LP-IV} (we still represent, in solid blue line, the benchmark case of $\gamma=\lambda=1$ for reference). When there is endogeneity and persistence, LPs estimates of the dynamic effects of $g_t$ are affected by both an endogeneity bias on impact, and by  the effect of persistence in the instrument during the rest of the response horizon (as shown in the previous section). This result is displayed by the solid red line in Panel B of Figure \ref{fig:LP-IV}, which is different from zero after impact. Now consider estimating the dynamic effects using LP-IV with an instrument $z_t$ (that displays persistence). The results (dashed green line) show that the use of the instrument addresses the problem of endogeneity (on impact, the effect from the LP-IV estimates is able to recover the true effect of $\beta=2$). However, the dynamic effect from the rest of the response horizon still reflects the presence of persistence. 

As discussed above, persistence in the instrument violates the lead/lag exogeneity condition. \citet{stock2018identification} state that, in general, this condition could be satisfied by the inclusion of further controls in the LP-IV regression. If the source of persistence is strictly restricted to the instrument, \citet{stock2018identification} show that the lead/lag exogeneity condition could be reestablished by including lags of the instrument. However, in cases like system~\eqref{eq:LP-IV}, where the instrument inherits its persistence from the shock, lags of the instrument will not satisfy the lead-lag exogeneity condition.  We build on the  intuition from \citet{stock2018identification} and adapt it to the problem of persistence  by including leads of the instrument in the set of exogenous variables in the LP-IV estimates. The results, shown in dashed grey lines in Panel B of Figure \ref{fig:LP-IV}, corroborate this intuition: despite the presence of both endogeneity and persistence, enhancing the LP-IV estimates with leads of the shock allows to recover the dynamic effects \emph{as if} the instrument were not serially correlated, i.e., $\mathcal{R}(h)^{*}$.

In sum, the presence of persistence can potentially violate the lead-lag exogeneity assumption, invalidating inference under LP-IV. The solution to reestablish this condition will depend on the source of persistence in the model. When the instrument inherits its persistence from the shock, our  proposed solution builds on the general intuition from \citet{stock2018identification}, showing that the inclusion of leads of the instrument can deliver valid inference under LP-IV.

\subsection{Alternative simulation: using the persistence from an actual shock} \label{app:simulreal}

In this subsection we compute the impulse response of a simulated variable $y_t$ to a shock $x_t$ with the following DGP:
\begin{equation} \label{eq:sim_realX}
y_t =  \rho y_{t-1} +  B_0 x_t + B_1 x_{t-1} + u_t, 
\end{equation}

\noindent where $x_t$ is the actual government spending  shock from \citet{ramey2018government} as shown in Panel D of Figure~\ref{fig:series}. $u_t$ is a random variable following $u_t \sim \mathcal{N}(0,\,1)$. We set $\rho=0.9$, $B_0=1.5$, and $B_0=1$.

Equation~\eqref{eq:sim_realX} is simulated for $497$ periods (the length of \citet{ramey2018government}'s shock), and we then compute the relevant IRFs. We repeat this process 10,000 times, and compute the average impulse responses across all repetitions. The results are shown in Figure~\ref{fig:simul_realX}. 

\begin{figure}[ht] 
\caption{Simulated responses using LPs with persistence from an actual shock}\label{fig:simul_realX}
\begin{center}
		{\includegraphics[width=10cm]{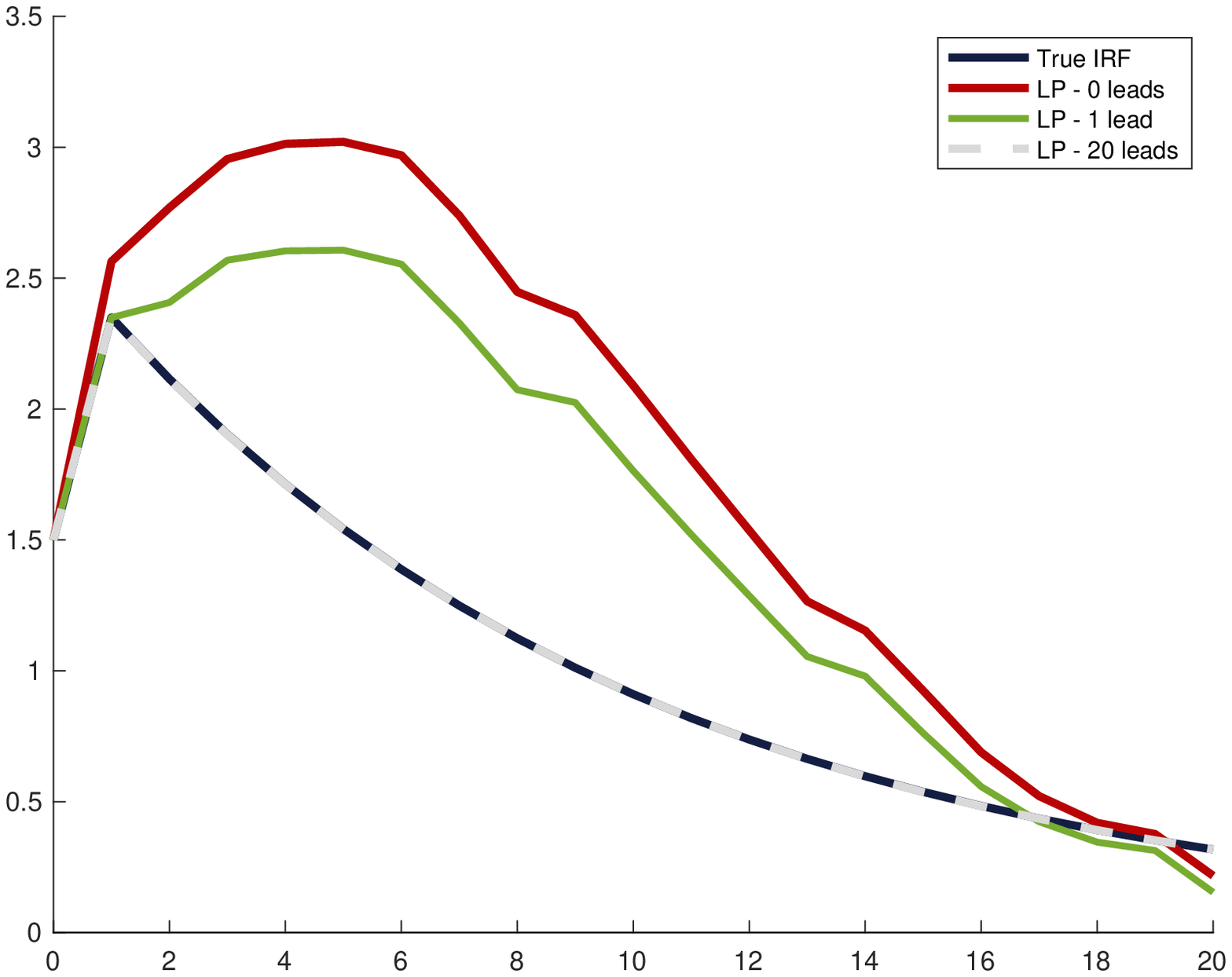}}	
\end{center}
\footnotesize This figure shows the response of a simulated outcome variable to  the government spending  shock from \citet{ramey2018government}. The dark blue line is the theoretical impulse-response to a shock that shows no persistence. The red line shows the LPs estimation of the impulse-response to the \citet{ramey2018government} without including any lead. Green line repeats the same estimation adding one lead. Dashed grey line shows the response when including 20 leads.
\end{figure}

When computing the dynamic response with standard LPs (i.e., without including any lead), the estimates diverge from the expected response when the shock has no persistence (distance between red and dark blue lines in Figure~\ref{fig:simul_realX}). Adding one lead improves the estimates, bringing the impulse-response into line with the theoretical response in the first period (green line). The accuracy of the impulse-response converges to the theoretical response when more leads are included. When we include as many leads as periods in the response horizon (20), the dynamic response estimated from LPs using the actual shock (with persistence) is equivalent to the response to a non-serially correlated shock (dashed grey line).

\subsection{Responses in LPs using variables adjusted for serial correlation}  
\label{sec:app_lags}
An apparent potential alternative to the use of leads proposed in the main text might be to adjust the shock $x_t$ so that it does not display persistence (e.g., by regressing $x_t$ on its own lags and using the resulting residual). Once the persistence is removed, one may expect the dynamic responses not to include the effect due to the persistence of the shock. However, this is not the case in a LPs setting, as we show next.

Consider the case where we obtain  a variable adjusted for serial correlation: ${\varepsilon}_t = x_t - {\gamma} x_{t-1}$, as shown in equation~\eqref{eq:epshat}. Then, $\varepsilon_t$ can be used as substitute of the original shock $x_t$. Assuming $B_1=0$ in system~\eqref{eq:DGPsimul} (for simplicity) consider the following series of LPs:

\begin{equation}
\label{eq:genregLP}
y_{t+h} = \rho_h y_{t-1} + \lambda_h  {\varepsilon}_{t}  +\xi_{t+h}.
\end{equation}

To obtain the dynamic responses of $y_t$ to the shock ${\varepsilon}_t$ (adjusted for persistence), we rewrite the first equation in system~\eqref{eq:DGPsimul} as a function of ${\varepsilon}_t$ and compute the relevant partial derivatives. For the cases of $h=0$ and $h=1$ these are:
\begin{eqnarray}\label{eq:genregLParray}
	\lambda_0 =\frac{\partial y_{t+1}}{\partial {\varepsilon}_t} &=& B_0 \nonumber \\
	\lambda_1 =  \frac{\partial y_{t+1}}{\partial {\varepsilon}_t} &=& \rho \frac{\partial y_t}{\partial {\varepsilon}_t} + B_0 \frac{\partial x_{t+1}}{\partial{\varepsilon}_t} = \rho B_0 + B_0 {\gamma} = B_0 ({\gamma}+\rho).
\end{eqnarray}

	That is, even after correcting for the persistence in shock $x_t$, conventional LPs yield responses $\mathcal{R}(h)$, i.e.,  still containing the effect of persistence of the shock. 

While this result may seem counter-intuitive, it arises from the fact that LPs do not have an explicit dynamic structure as a DLM . Hence, removing the persistence from $x_t$ does not eliminate its effect on $y_{t+1}$, $y_{t+2}$, etc.

To empirically show this point, we simulate series of $y_t$ and $x_t$ following system~\eqref{eq:DGPsimul} and the calibration used in Section~\ref{sec:simul} (we now allow $B_1\neq0$). We then obtain the residuals $\hat{\varepsilon}_t$ as an estimate of $\varepsilon_t$ described above and estimate the following equation:

\begin{equation} \label{eq:genregLPleadsSIMUL}
y_{t+h}=\rho y_{t-1} + \lambda_{h,0} \hat{\varepsilon}_t +  \lambda_{h,1} \hat{\varepsilon}_{t-1} + \xi_{t+h}.
\end{equation}

\begin{figure} 
\caption{Simulated responses using $\hat{\varepsilon}_t$}\label{fig:simul_genreg}
\begin{center}
		{\includegraphics[width=10cm]{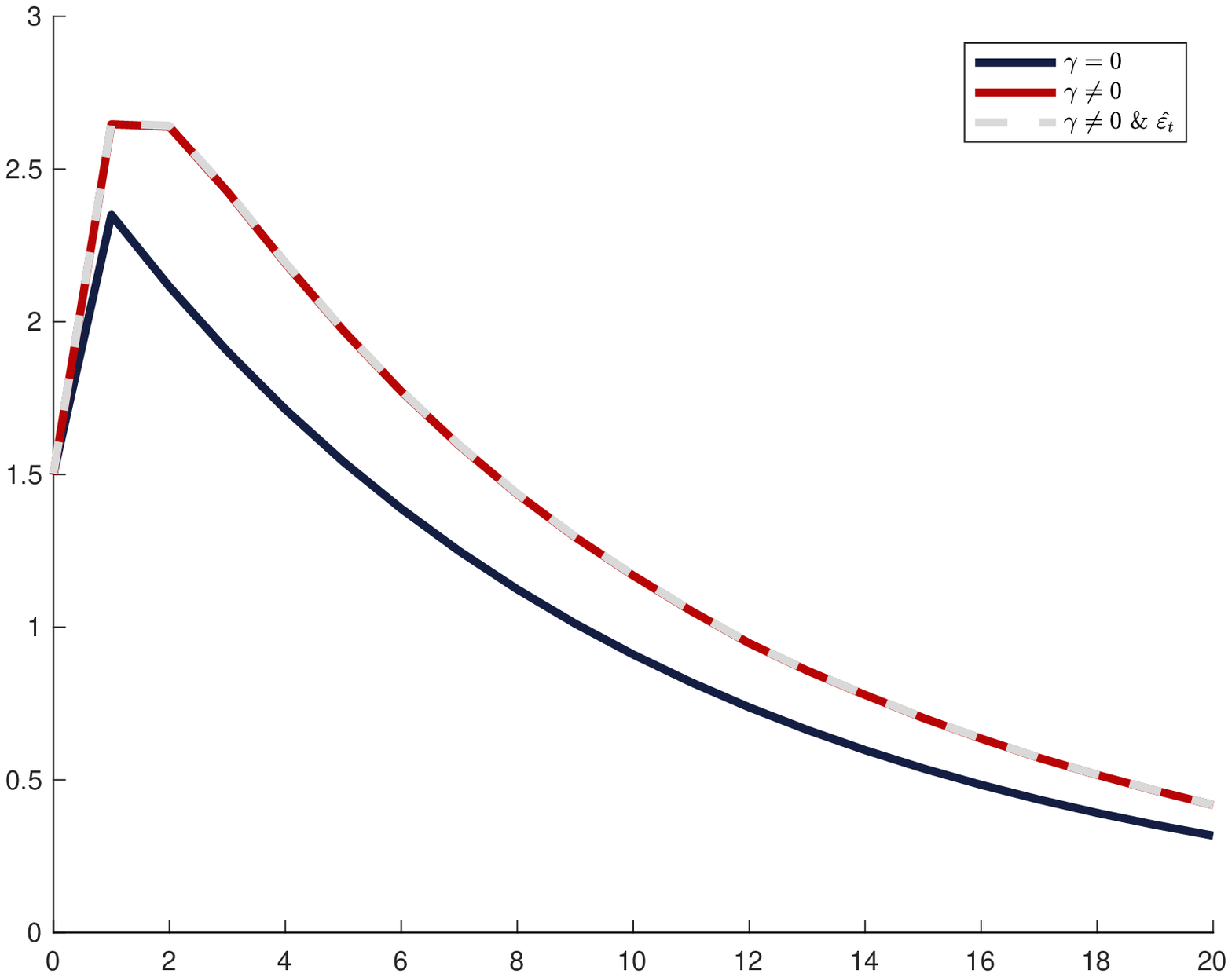}}	
\end{center}
\footnotesize This figure shows the response of a simulated outcome variable to a shock with different degrees of persistence. The dark blue line shows the results of estimating equation~\eqref{eq:genregLPleadsSIMUL} assuming $\gamma = 0$ in equation~\eqref{eq:DGPsimul}. The red line shows the same estimation when $\gamma = 0.2$. The dashed grey line shows the response when including a predicted regressor where persistence has been removed as explanatory variable (as in equation~\eqref{eq:genregLP}).
\end{figure}

Results are shown in Figure~\ref{fig:simul_genreg}. The simulations corroborate the above results and we find that the use of a variable adjusted for serial correlation as $\hat{\varepsilon}_t$ in equation~\eqref{eq:genregLP} fails to retrieve an impulse response as the one obtained when $\gamma = 0$ in equation~\eqref{eq:DGPsimul}.

\clearpage
\newpage
\section{Additional empirical applications} \label{app:monetary}
\setcounter{figure}{0}
\setcounter{table}{0}
\setcounter{equation}{0}
\setcounter{footnote}{0}
\renewcommand{\thefigure}{C\arabic{figure}}
\renewcommand{\thesubsection}{C.\arabic{subsection}}
\renewcommand{\theequation}{C.\arabic{equation}}
\renewcommand{\thetable}{C.\arabic{table}}


\subsection{\citet{guajardo2014expansionary}} 

In this subsection we explore the relevance of our results in the context of episodes of  fiscal consolidation, as produced in \citet{guajardo2014expansionary}. 
The authors  employ a panel of OECD economies to analyze the response of  economic activity to discretionary changes in fiscal policy motivated by a desire to reduce the budget deficit and not correlated with the short-term economic outlook.\footnote{A detailed description of these shocks can be found in \citet{pescatori2011new}.} As mentioned in Table~\ref{tab:survey}, this measure of fiscal changes  exhibits some degree of persistence.\footnote{Regressions of the fiscal consolidations measure (expressed as \% of GDP) on its own lags and including time and country fixed effects reveal persistence in the previous two or three years (depending on the number of lags included). Intuitively, some degree of persistence is expected in these series since they often involved multi-year plans, as noted in \citet{alesina2015output} and \citet{alesina2017effects}.} 

To explore the effects of persistence in this context, we compute the responses estimating a series of LPs:\footnote{Note that \citet{guajardo2014expansionary} do not construct responses using LPs and hence their computed responses do not show the effect of persistence, as noted in the previous section. There are, however, a number of studies that employ their fiscal consolidations dataset with LPs (see, for example,  \citet{barnichon2017understanding} or \citet{goujard2017cross}).}


\begin{equation} \label{eq:GLP2016}
y_{i,t+h}=\mu_{h,i} + \lambda_{h,t} + \beta_{h,0} shock_{i,t} +\sum_{f=1}^h  \beta_{h,f} shock_{i,t+f} +  \bm \beta_{h,s} \bm X_{i,t} + \xi_{i,t+h},
\end{equation}

\noindent where $y_{i,t}$ is a measure of economic activity (either private consumption or real GDP), $\mu_{h,i}$ and $\lambda_{h,t}$ represent country and time fixed effects, respectively, and $\bm X_{it}$ is a vector of variables that includes a lag of the shock, output, and private consumption, and a deterministic trend. In our setting, responses to the fiscal shocks are given by the estimates of coefficients $\beta_{h,0}$ for different horizons $h$.

We first estimate equation~\eqref{eq:GLP2016} by setting $\beta_{h,f}=0$ $\forall h,f$. The results, shown in black solid lines in Figure~\ref{fig:Guajardo_respGDP} qualitatively replicate the benchmark results of \citet{guajardo2014expansionary}, with a fiscal consolidation shock significantly reducing output during the first 6 years.\footnote{\citet{guajardo2014expansionary} focus on the dynamic effects of output and private consumption during 6 years after the shock. We also compute results for private consumption, shown in Figure~\ref{fig:Guajardo_respCONS} in Appendix~\ref{app:morefigs}. As in the original paper, we also find a significant reduction in this variable during the first 6 years after a consolidation shock.}

\begin{figure}
\caption{Output response to a fiscal consolidation shock, with and without leads} \label{fig:Guajardo_respGDP}
\begin{center}
		{\includegraphics[width=10cm]{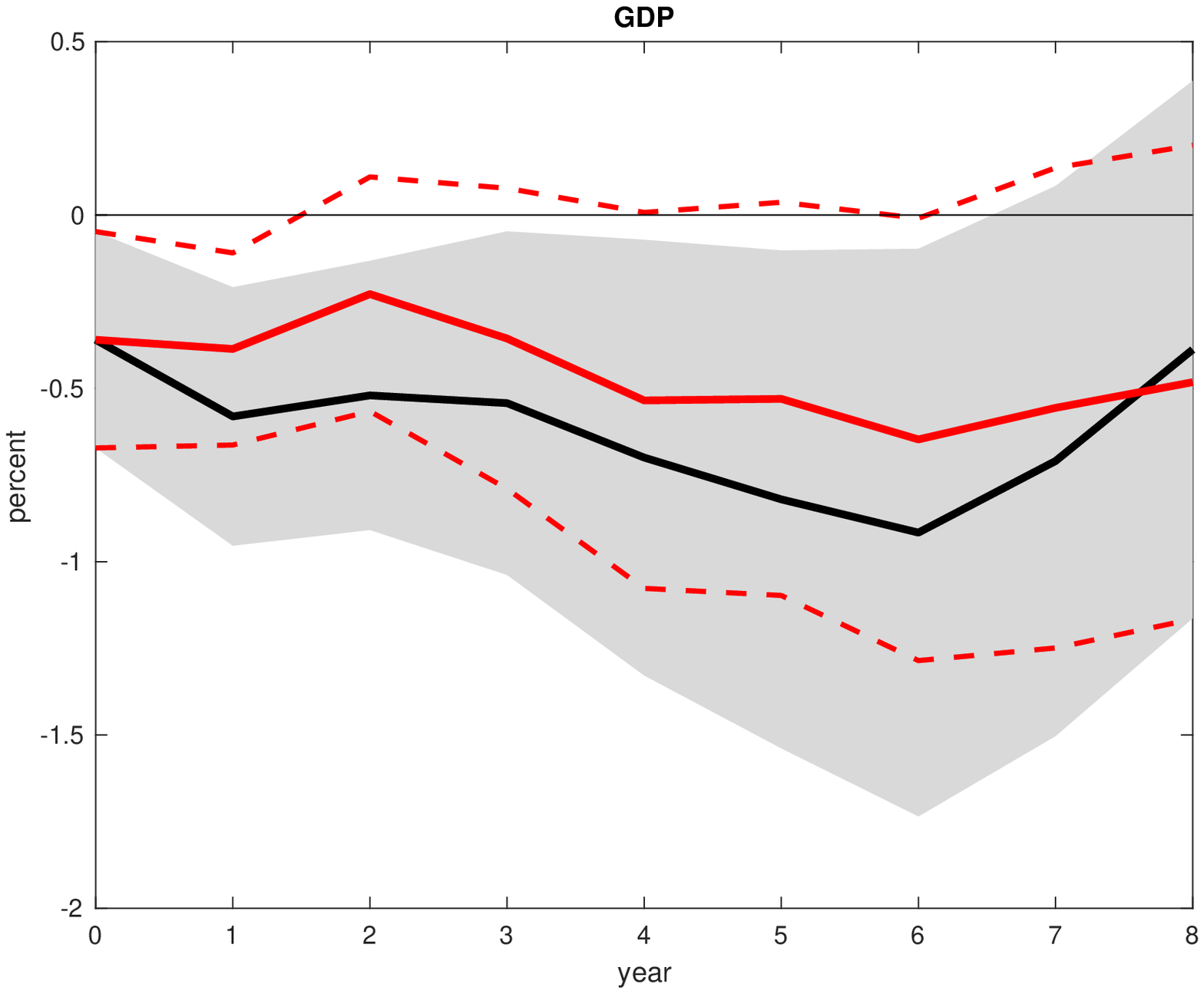}}	
\end{center}
\footnotesize Black lines show the results from equation~\eqref{eq:GLP2016} with output as dependent variable and setting $\beta_{h,f}=0$ $\forall h,f$, i.e., without including any leads  of the shock. Grey areas represent  90\% Newey-West confidence intervals for these estimates (as in \citet{guajardo2014expansionary}). Red solid lines represent the results of estimations when allowing  $\beta_{h,f} \neq 0$ and including $h$ leads of the consolidations variable.
\end{figure}

Next, we estimate equation~\eqref{eq:GLP2016} but allow $\beta_{h,f} \neq 0$ (red lines in Figure~\ref{fig:Guajardo_respGDP}). Three points are worth noting regarding these results. First, when accounting for the effects of persistence, the point estimates are smaller in absolute value. On average, the new responses are 35\% lower during the first six years after the shock. Two years after a fiscal consolidation, output is almost 60\% smaller when accounting for persistence (-0.2 \textit{vs} -0.5). 

Second, when including leads of the shock, the estimates are more precise, which translates into smaller confidence intervals (set at 90\% as in the original paper of \citet{guajardo2014expansionary}). During the first six years, these intervals are about 20\% smaller on average in the specifications that include leads of the shock. 

Third, these narrower intervals now include zero for most of the response horizon. Ignoring the persistence of the shock would lead to the conclusion that the output contraction after a fiscal consolidation is significant throughout the six years after the shock. However, when accounting for persistence, the effect of the shock is significant only during the first year after the shock, while it seems less plausible to conclude that the effect is statistically different from zero during the rest of the response horizon. 

This exercise suggests that the policy implications from fiscal consolidations may be different when estimating $\mathcal{R}(h)$ \textit{vs.} $\mathcal{R}(h)^{*}$.


\subsection{\citet{romer2010macroeconomic}}


What happens when including leads of non-persistent shocks? In this section we conduct a \textit{placebo} test based on \citet{romer2010macroeconomic}, who investigate the output effects of legislated tax changes. \citet{romer2010macroeconomic} identify exogenous changes in tax revenues by classifying fiscal reforms according to their motivation (i.e., whether or not they are the response to changing macroeconomic conditions). As discussed in Section \ref{sec:evidence}, it is the only shock considered here for which we unambiguously fail to reject the null hypothesis of no persistence. Hence, the inclusion of leads of the shock should not have a discernible impact on the estimation of dynamic responses. Beyond corroborating the previous statement, this subsection shows that the unnecessary inclusion of leads does not negatively affect inference in this application.

We estimate the response of output to exogenous tax changes following \citet{romer2010macroeconomic}.  We adapt the original estimation from the authors to the LPs setting:\footnote{Adding controls such as lags of output or the own shock do not affect the obtained results shown next.}

\begin{equation} \label{eq:RR2010}
\frac{ y_{t+h}-y_{t-1}}{y_{t-1}}= \beta_{h,0} shock_{t} +  \sum_{f=1}^h \beta_{h,f} shock_{t+f} + \xi_{t+h}.
\end{equation}

In our first exercise, we set $\beta_{h,f}=0$ $\forall h,f$ in equation~\eqref{eq:RR2010} to replicate the results from \citet{romer2010macroeconomic}. The results are shown Figure~\ref{fig:RomerRomer2010} (black lines). The response of output is similar to that in \citet{romer2010macroeconomic}: it falls persistently after a tax hike of 1\% of GDP, with a peak effect reached in the 10th quarter.\footnote{The difference with the original estimations from \citet{romer2010macroeconomic} are only quantitative: the peak tax multiplier is about 3 in the 10th quarter. Our estimations suggest a peak multiplier of 2.25 also reached in the same quarter.}

\begin{figure}
\caption{Response of output to \citet{romer2010macroeconomic} tax shocks, with and without leads}\label{fig:RomerRomer2010}
\begin{center}
		{\includegraphics[width=10cm]{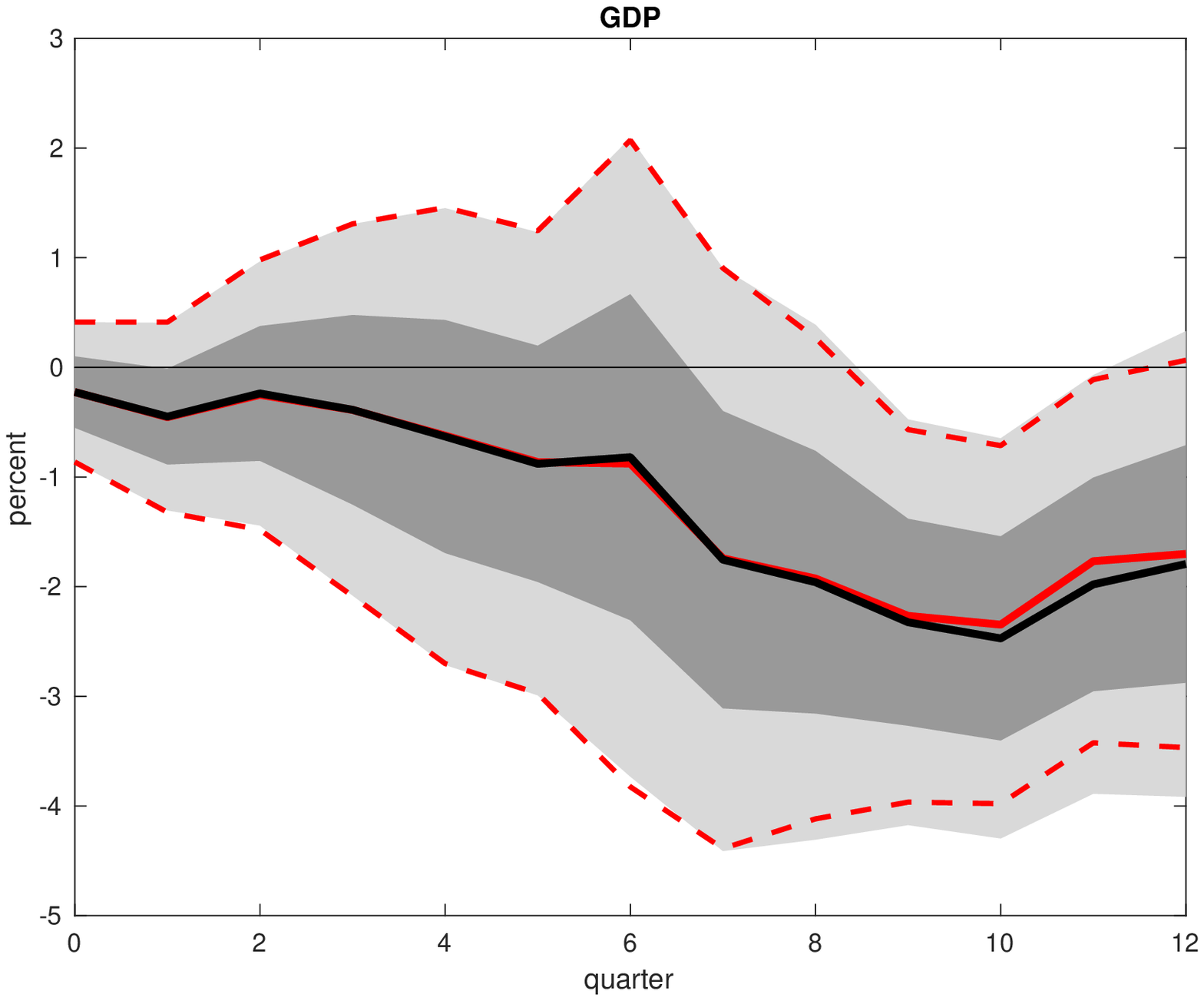}}
\end{center}
\footnotesize Black solid line shows the responses to a tax shock estimated from equation\eqref{eq:RR2010} with  $\beta_{h,f}=0$, i.e., without including any lead. Grey areas represent 68 and 95\% Newey-West confidence intervals for these estimates. Red solid line shows the responses to a tax shock estimated from equation~\eqref{eq:RR2010} with  $\beta_{h,f} \neq 0$ and including $h$ leads of the shock. Red dashed lines represent 95\% Newey-West confidence intervals for these estimates.
\end{figure}

Next, we allow for $\beta_{h,f}\neq0$. The results, shown in Figure~\ref{fig:RomerRomer2010} (red lines),  suggest that the inclusion of leads does not significantly affect the results. The point estimations with and without leads of the shock overlap each other for most of the response horizon and only diverge slightly during the quarters 8 to 11th.

While, given the results of Table~\ref{tab:survey} we should not expect a change in the point estimates (which we have corroborated) the same cannot be say about issues regarding inference. However, Figure~\ref{fig:RomerRomer2010} shows that confidence bands are not distinguishable between both specifications during the first seven quarters and differ only slightly afterwards.

In sum, this placebo exercise is reassuring in that the inclusion of leads only matters when the explanatory variable displays some persistence. These results suggest that including leads in LPs is a conservative way to address the effects of persistence when there is a suspicion that the shock is persistent and the researcher wants to identify  $\mathcal{R}(h)^{*}$.\footnote{See \citet{alloza2020local} for another example that adds leads to LPs using a non-persistent shock. Similarly to the evidence provided in this section, they also show that adding leads does not affect inference.}

\subsection{ \citet{romer2004measure}}

We now consider the measure of monetary policy shocks produced by  \citet{romer2004measure}. The authors identify exogenous monetary policy changes following a three-step procedure. First, they follow narrative methods to identify the Federal Reserve's intentions for the federal funds rate around FOMC meetings. Second, they regress the resulting measure on the Federal Reserve's internal forecasts (Greenbook) to account for all relevant information used by the Fed. Lastly, the series is aggregated from FOMC frequency to monthly frequency.

As shown in Table~\ref{tab:survey}, the resulting measure displays some degree of  persistence.\footnote{The degree of persistence is higher when using the updated series produced by \citet{coibion2012monetary}.} Interestingly, the correlogram of the series seems to show a pattern consistent with  \textit{negative} persistence (see Panel E in Figure~\ref{fig:ac}). This implies that standard LPs that do not account for persistence in the shock will identify $\mathcal{R}(h)$.

\citet{romer2004measure} estimate the response of output to the monetary policy shock using a lag-distributed regression of the log of industrial output and the measure of monetary policy shocks. Here, we  adapt the estimation to a LPs setting by following the exact data and specification from  \citet{ramey2016macroeconomic} (adapted in turn from \citet{coibion2012monetary}) for the original sample of 1969m3-1996m12):

 \begin{equation} \label{eq:RR2004}
 y_{t+h}= \beta_{h,0} shock_{t} + \bm{ \theta_h(L)  {x_t} }+ \beta_{h,f} \sum_{f=1}^h shock_{t+f} + \xi_{t+h},
\end{equation}

 \noindent where $y_t$ is either the federal funds rates, the log of industrial production, the log of consumer price index, the unemployment rate, or the log of a commodity price index, and $shock_t$ is the original \citet{romer2004measure} measure of monetary policy shocks. The regressions include a set of controls $\bm x_t$, with two lags and the contemporaneous values of all dependent variables, and two lags of the shock. By including the contemporaneous values of the the dependent variables, we are implementing the recursiveness assumption often used in VARs to identify monetary policy shocks.\footnote{This assumption implies that the monetary shock does not affect macroeconomic variables (such as output, prices, employment...) contemporaneously, and monetary variables (e.g., money stock, reserves...) do not affect the federal funds rates within a month. See \citet{christiano1999monetary} for further details. Later on, we show estimates that relax this assumption (Figure~\ref{fig:RR2004_nonrecur}).}
 
The results of estimating equation~\eqref{eq:RR2004} when we set  $\beta_{h,f}=0$ $\forall h,f$ are shown in solid black lines (with 90\% confidence bands) in Figure~\ref{fig:RR2004}. Since we employ the same data and specification, they replicate the results from \citet{ramey2016macroeconomic}  (Figure 2B). \citet{ramey2016macroeconomic} argues that the responses using LPs show more plausible dynamics than those obtained from a standard VAR (a persistent fall in industrial output and a rise in unemployment that slowly converge to 0). The drop in output after a monetary shock is broadly consistent with the original results from \citet{romer2004measure} but there are, however, two important differences. First, the trough in the response of output is reached after two years. In the estimates of Figure~\ref{fig:RR2004} and \citet{ramey2016macroeconomic}, the trough is reached after a first year and lasts for about twelve months with a slight rebound in between. Second, although both results refer to the same impulse (a realization of the policy measure of one percentage point), the magnitude of the output fall in the original estimates of \citet{romer2004measure} is substantially bigger than when using LPs (-4.3 \textit{vs} -1.7). 
 
 \begin{figure}
\caption{Responses to monetary policy shock from  \citet{romer2004measure}, with and without leads}\label{fig:RR2004}
\begin{center}
		{\includegraphics[width=10cm]{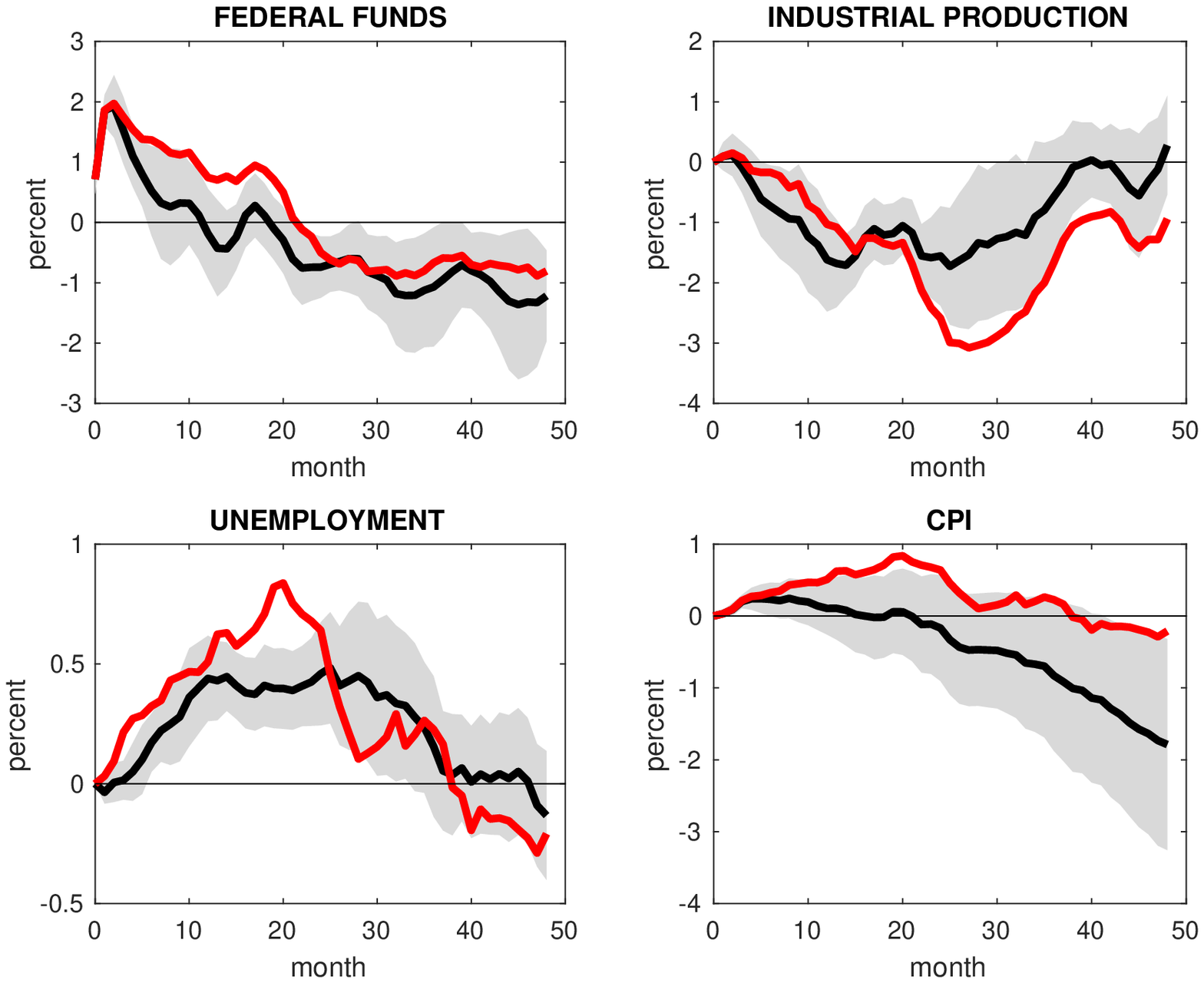}}	
\end{center}
\footnotesize Black solid lines refer to a benchmark specification that preserves the recursive assumption and does not include leads of the shock. Grey areas show 90\% Newey-West confidence intervals. Red solid lines include $h$ leads of the monetary shocks. \end{figure}


Next, we investigate whether accounting for the persistence in  the shock has an effect on the dynamic responses. We re-estimate equation~\eqref{eq:RR2004}, but allowing  $\beta_{h,f} \neq 0$. The results are shown in red lines in Figure~\ref{fig:RR2004}. We observe that the dynamics of output are closer to the original estimates of  \citet{romer2004measure}: a continuous drop in output  that reaches the trough after the second year. Furthermore, the magnitude of the fall is now substantially higher (-3.1)  and closer to the results from  \citet{romer2004measure}. Another noticeable difference is that the effects on unemployment and the initial positive reaction on prices (the so-called \textit{price puzzle}) are now larger. All in all, the results from Figure~\ref{fig:RR2004} suggest that accounting for the persistence of the monetary policy shock can lead to larger estimates of the dynamic responses.

\citet{ramey2016macroeconomic} also investigates the role of the recursiveness assumption  in the dynamic responses (the inclusion of contemporaneous values for some variables in the LPs estimation to replicate the identification in a VAR). She finds that relaxing this assumption results in weird dynamics of unemployment in the short run. We replicate these results by dropping the contemporaneous values in  $\bm x_t$ in equation~\eqref{eq:RR2004} and setting $\beta_{h,f}=0$. We indeed find that unemployment rate significantly drops in the first months after a monetary policy contraction (black solid lines in Figure~\ref{fig:RR2004_nonrecur}). We investigate whether these strange dynamics may be the result of the persistence in the monetary policy shock. We estimate again equation~\eqref{eq:RR2004}  relaxing both the recursiveness assumption and allowing  $\beta_{h,f} \neq 0$. The results  (red solid lines in Figure~\ref{fig:RR2004_nonrecur}) are very similar to those from Figure~\ref{fig:RR2004}. Interestingly, unemployment responds positively to the monetary policy contraction. 

\begin{figure}[ht]
\caption{Responses to monetary policy shock from  \citet{romer2004measure} with no recursive assumption, with and without leads}\label{fig:RR2004_nonrecur}
\begin{center}
		{\includegraphics[width=10cm]{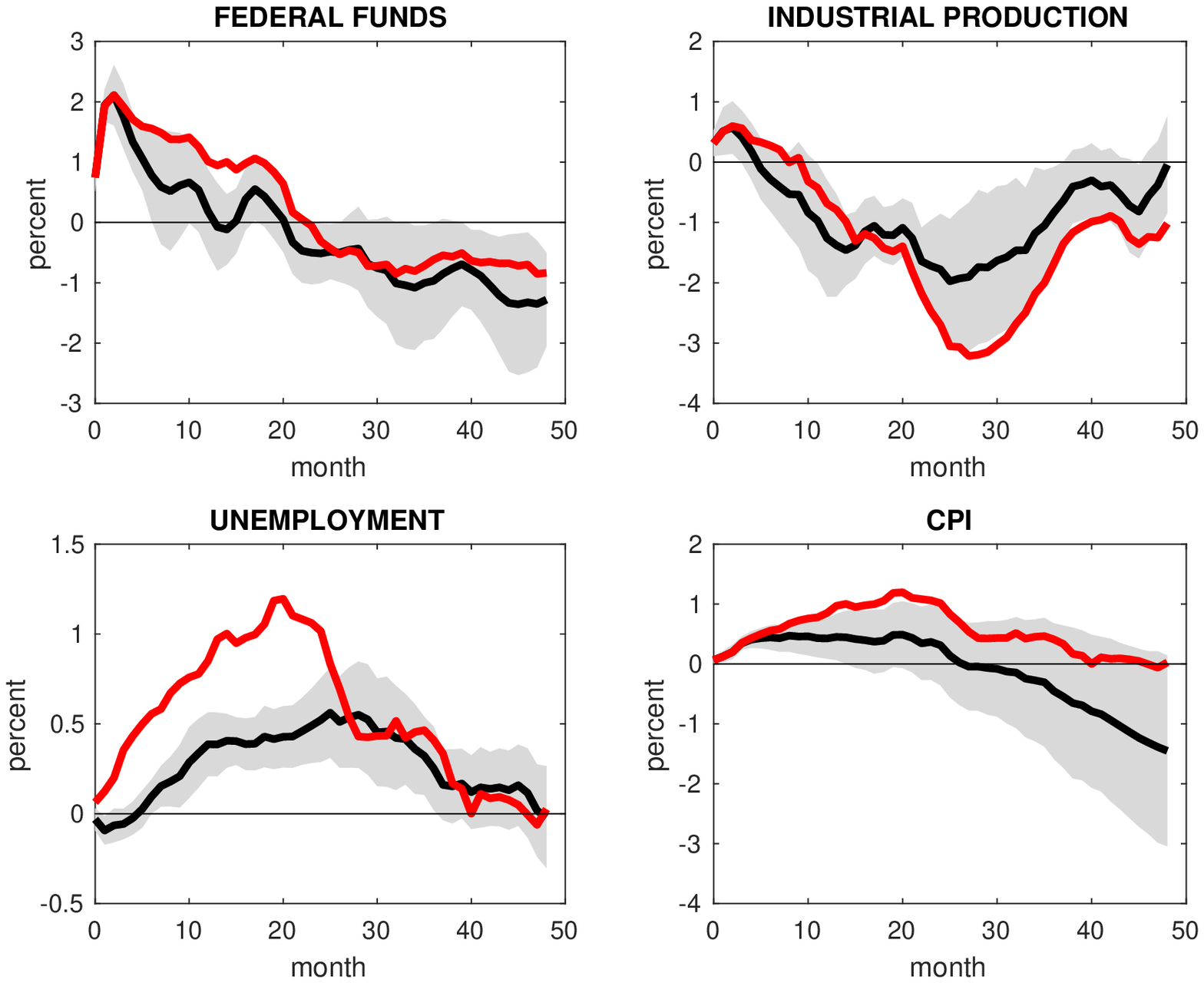}}	
\end{center}
\footnotesize Black solid lines refer to a benchmark specification that relaxes the recursive assumption and does not include leads of the shock. Grey areas show 90\% Newey-West confidence intervals. Red solid lines include $h$ leads of the monetary shocks. 
\end{figure}


\subsection{ \citet{gertler2015monetary}}

In this section we explore another application of the effects of monetary policy shocks based on \citet{gertler2015monetary}. The authors identify exogenous changes in monetary policy by looking at variations in the 3-month-ahead futures of the federal funds within a 30-minute window of a FOMC announcement.\footnote{This scheme is often denoted as High-Frequency Identification (HFI). See \citet{ramey2016macroeconomic} for a comparison with other identification procedures.} By relying on this identification scheme, rather than on standard timing assumptions (e.g., \citet{christiano1999monetary}), the authors are able to explore the effects on measures of financial market frictions or other variables that are often assumed to be contemporaneously invariant to a monetary policy shock.

As shown in Table~\ref{tab:survey}, this measure of monetary policy shock displays some persistence. This was first noted by \citet{ramey2016macroeconomic}, who highlights that the procedure followed by \citet{gertler2015monetary} to convert FOMC shocks (expressed at FOMC frequency) to monthly frequency introduces this serial correlation.\footnote{In particular, \citet{gertler2015monetary}  cumulate the surprises on any FOMC days during the last 31 days, effectively introducing a first-order moving-average structure. This is a variation of the procedure followed by \citet{romer2004measure} and that also results in a measure of monetary policy shocks that displays persistence.}

\citet{gertler2015monetary} embedded the identified monetary  policy shocks in a VAR, using the measure of monetary policy surprises as an instrument of the residuals in the VAR. Here we explore what consequences the persistence of the shock might have if the researcher were to use standard LPs (and estimate  $\mathcal{R}(h)$).

To do so, we implement the following specification, suggested by  \citet{ramey2016macroeconomic}:

\begin{equation} \label{eq:GK2015}
 y_{t+h}= \beta_{h,0} shock_{t} + \bm{ \theta_h(L)  {x_t} }+ \beta_{h,f} \sum_{f=1}^{min\{h,12\}} shock_{t+f} + \xi_{t+h},
\end{equation}

 \noindent where $y_t$ is either the 1-year government bond rate, the log of industrial production, the excess bond premium spread from \citet{gilchrist2012spreads}, or the log of consumer price index, and $shock_t$ is the measure of monetary policy shocks from \citet{gertler2015monetary}. The regressions also include a set of controls $\bm x_t$, with two lags and the contemporaneous values of all dependent variables, and two lags of the shock.\footnote{Note that while the inclusion of lags of the shocks are meant to account for persistence in the shock, our analysis from Section~\ref{sec:econometric} shows that they are not effective for this role. In our results, the inclusion of lags of the shock did not have any noticeable effect. We keep them here in order to replicate the results from \citet{ramey2016macroeconomic}.  } Following \citet{ramey2016macroeconomic}, we estimate equation~\eqref{eq:GK2015} for a sample of 1991m1-2012m6.\footnote{ \citet{gertler2015monetary} proceed in two steps: they first estimate the dynamic coefficients and residuals from a VAR during the period 1979-2012. Then they estimate the contemporaneous effects of monetary policy using both the residuals from the previous step and the monetary policy instrument in a proxy VAR during 1991-2012.} Given this reduced sample, we limit the number of leads introduced in the estimation to a maximum of 12 (i.e., we use $h$ leads for $h <12$ and $12$ leads for longer horizons).\footnote{We also preserve the sample at the end of the period until 2012m06 (which will be otherwise reduced when including leads) by considering values of the leads of the shock equal to 0 for the last 12 periods. Although this is not important for our results, it allows us to compare our estimates to those from \citet{ramey2016macroeconomic}. }

The results of these estimations are shown in  Figure~\ref{fig:GK2015} for two cases: setting $\beta_{h,f} = 0$ $\forall h,f$ (black solid lines with 68 and 90\% confidence intervals) and allowing $\beta_{h,f} \neq 0$ (red lines). Benchmark estimations that do not account for persistence reproduce the results from \citet{ramey2016macroeconomic} (Figure 3B). She notes that LPs using directly the \citet{gertler2015monetary} instrument as an explanatory variable give rise to puzzling results, namely: a sluggish response of the policy rate, output increases after the monetary expansion, and the credit spread and prices do not show significant dynamics during most of the response horizon. \citet{ramey2016macroeconomic} suggests that the persistence exhibited by \citet{gertler2015monetary} and potential predictability of the series could be sources of concern. When we incorporate leads of the shock in the estimation of equation~\eqref{eq:GK2015} (red lines in Figure~\ref{fig:GK2015}) we see that both the response of the government bond rate and industrial production seem to be overestimated when persistence is not accounted for (by contrast, the results do not change much for the excess bond premium and prices).\footnote{Alternatively, \citet{ramey2016macroeconomic}  concludes that these differences may be due to the fact that the reduced-form parameters (used to construct the impulse responses) are estimated for a longer sample (1970-2012 instead of 1991-2012) or to potential misspecification of the original VAR estimates due to the rising importance of forward guidance, which may lead to a problem of non-fundamentalness in the VAR. } 

\begin{figure}
\caption{Responses to monetary policy shock from  \citet{gertler2015monetary}, with and without leads}\label{fig:GK2015}
\begin{center}
		{\includegraphics[width=10cm]{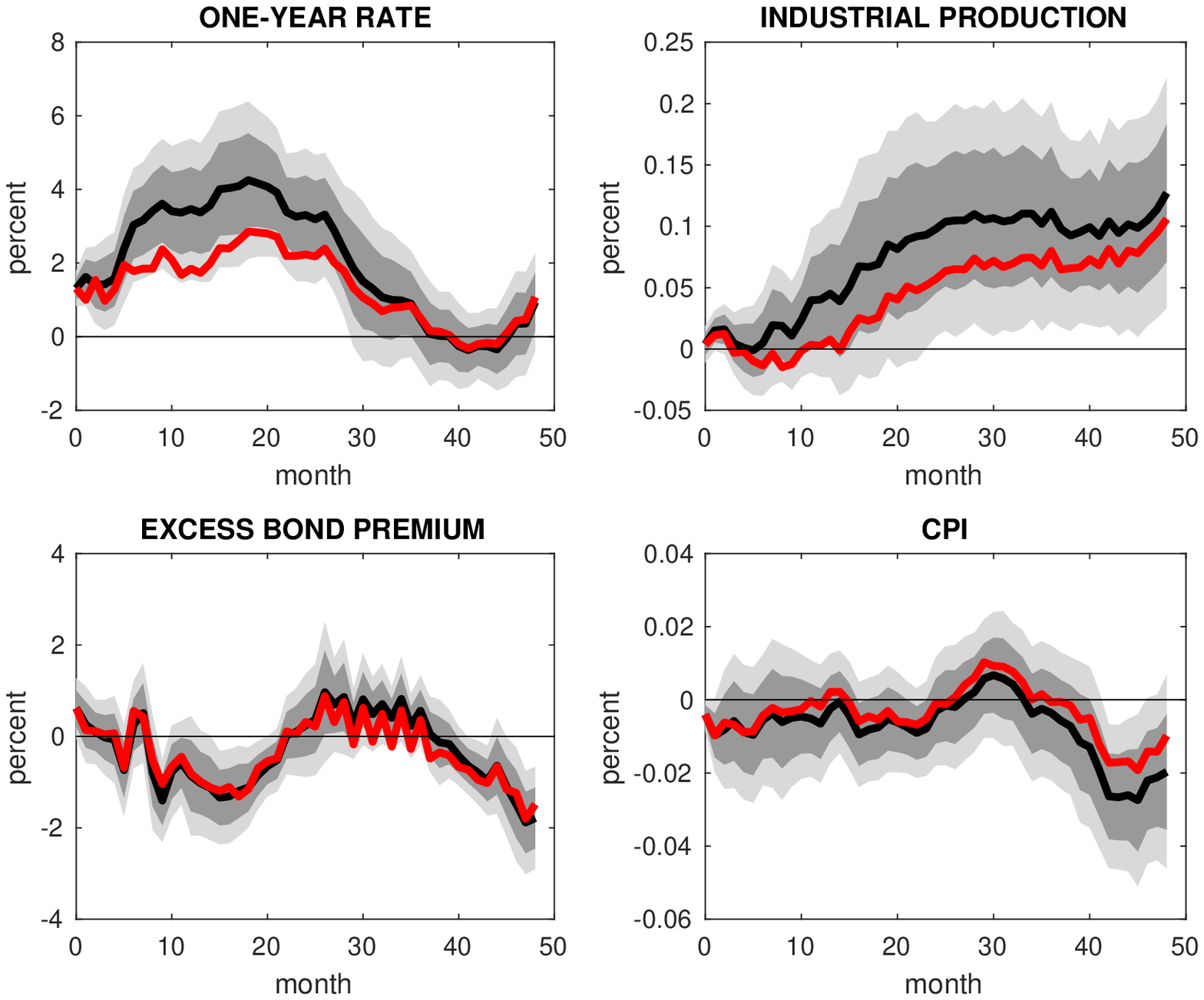}}	
\end{center}
\footnotesize Black solid lines refer to a benchmark specification that does not include leads of the shock. Grey areas show 68 and 90\% Newey-West confidence intervals. Red solid lines include $h$ leads of the monetary shocks up to horizon $h=12$, after then, the number of leads is kept to $12$. 
\end{figure}

\newpage
\section{Additional Tables and Figures}\label{app:morefigs}
\setcounter{figure}{0}
\setcounter{table}{0}
\setcounter{equation}{0}
\setcounter{footnote}{0}
\renewcommand{\thefigure}{D\arabic{figure}}
\renewcommand{\thesubsection}{D.\arabic{subsection}}
\renewcommand{\theequation}{D.\arabic{equation}}
\renewcommand{\thetable}{D.\arabic{table}}

\begin{table}[ht]
\caption{Robustness: different lag structures for  tests} 
\label{tab:survey_rob}
\begin{center}
{\small
\begin{tabular}{ ccccccc } 
 \hline
 \\[-0.5em]
&5 lags&10 lags&20 lags&40 lags&60 lags  \\ 
\\[-0.5em]
\hline
\\[-0.5em]
\citet{arezki2017news}&175.944&175.953&176.049&177.903&177.907\\ \vspace{0.2cm}
&(0.000)&(0.000)&(0.000)&(0.000)&(0.000)\\ 
\citet{cloyne2013tax}&11.365&21.521&40.041&98.751&120.270\\ \vspace{0.2cm}
&(0.045)&(0.018)&(0.005)&(0.000)&(0.000)\\
\citet{cloyne2016monetary}&17.723&20.771&47.357&84.422&103.001\\ \vspace{0.2cm}
&(0.003)&(0.023)&(0.001)&(0.000)&(0.001)\\
\citet{gertler2015monetary}&53.802&84.284&106.133&124.568&131.030\\ \vspace{0.2cm}
&(0.000)&(0.000)&(0.000)&(0.000)&(0.000)\\
\citet{guajardo2014expansionary}&160.740&173.315&182.866&185.810&185.810\\ \vspace{0.2cm}
&(0.000)&(0.000)&(0.000)&(0.000)&(0.000)\\
\citet{ramey2018government}&79.298&89.916&104.414&182.950&190.974\\ \vspace{0.2cm}
&(0.000)&(0.000)&(0.000)&(0.000)&(0.000)\\
\citet{romer2004measure}&15.536&23.965&43.824&53.758&64.576\\ \vspace{0.2cm}
&(0.008)&(0.008)&(0.002)&(0.072)&(0.320)\\
\citet{romer2010macroeconomic}&1.578&3.080&6.562&19.023&24.783\\ \vspace{0.2cm}
&(0.904)&(0.980)&(0.998)&(0.998)&(1.000)\\
 \hline
\end{tabular}
 }
\end{center}
\vspace{-0.2cm}
 {\footnotesize The columns report the values of a \citet{box1970distribution} test (with \citet{ljung1978measure} correction) including different lags. P-values are shown in brackets.  in \citet{arezki2017news} and \citet{guajardo2014expansionary} is tested using a generalized version of the autocorrelation test proposed by \citet{arellano1991tests} that specifies the null hypothesis of no autocorrelation at a given lag order.}
\end{table}

\begin{figure}[!htb]
\caption{Autocorrelograms} 
\label{fig:ac}
\begin{center}
\begin{tabular}{ cc } 
Panel A: \citet{cloyne2013tax} & Panel B: \citet{cloyne2016monetary} \\
{\includegraphics[width=7.5cm]{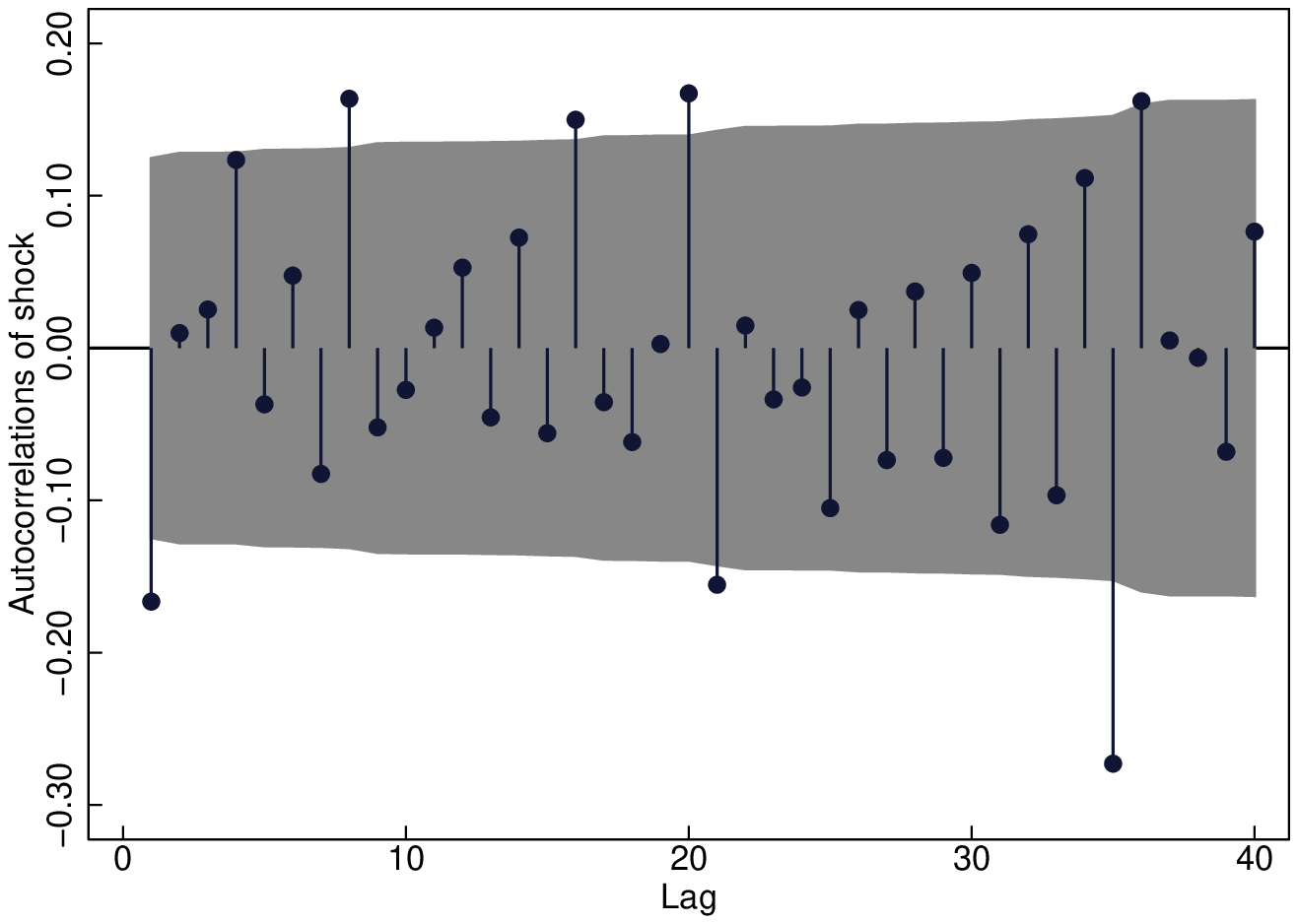}}	 & 	{\includegraphics[width=7.5cm]{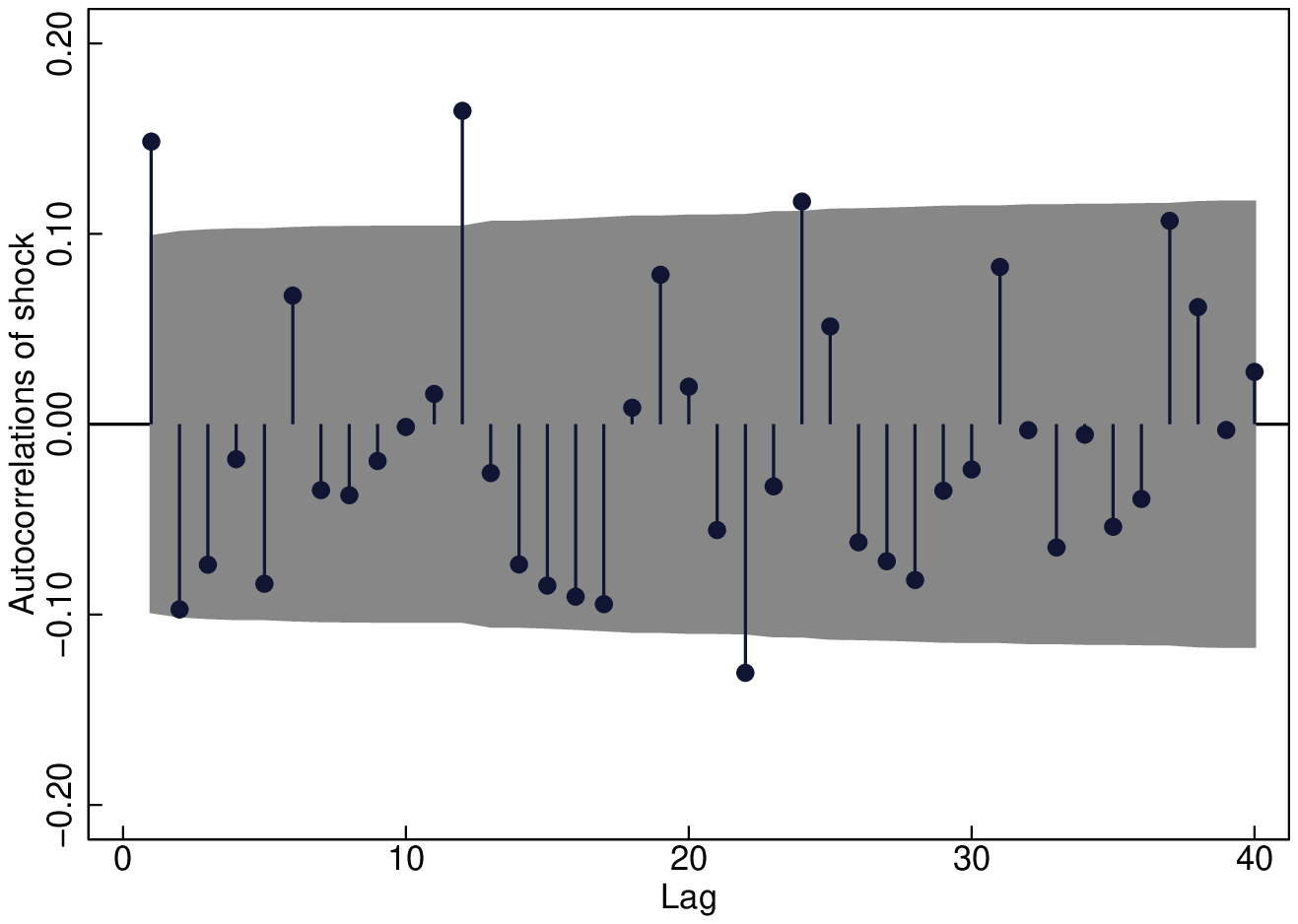}}	\\ 
Panel C: \citet{gertler2015monetary} & Panel D: \citet{ramey2018government} \\
{\includegraphics[width=7.5cm]{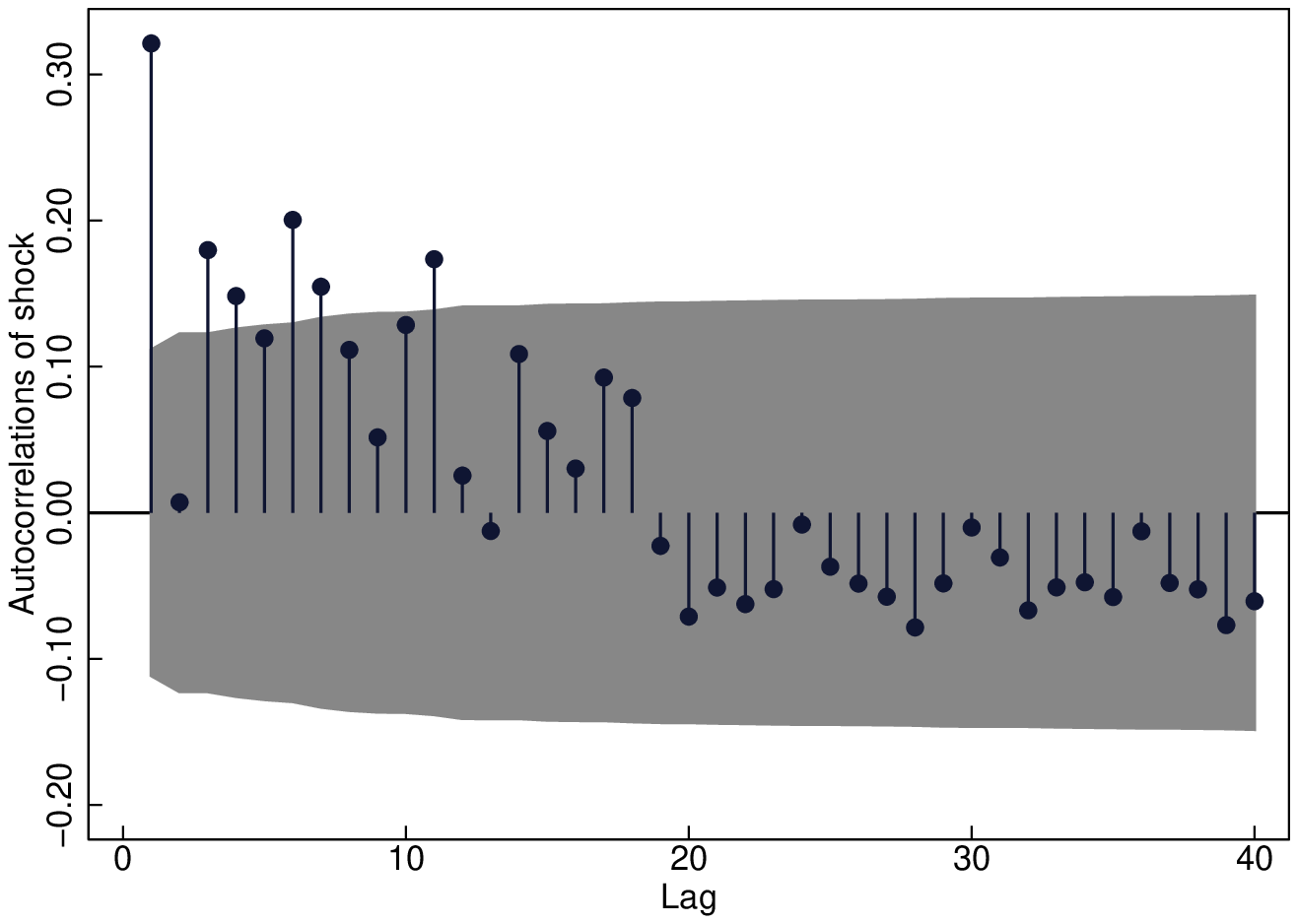}} & {\includegraphics[width=7.5cm]{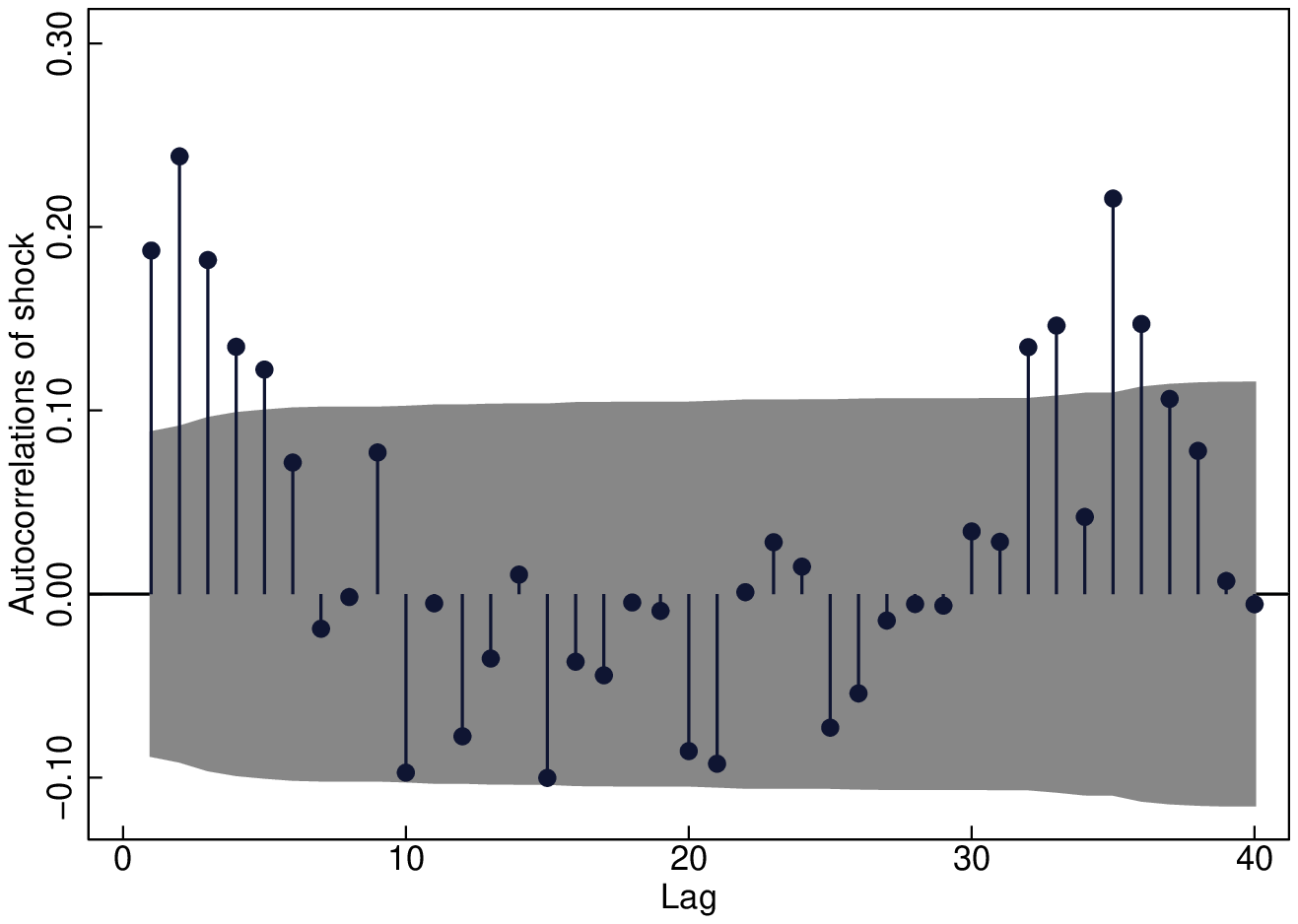}} \\
Panel E: \citet{romer2004measure} & Panel F: \citet{romer2010macroeconomic} \\
{\includegraphics[width=7.5cm]{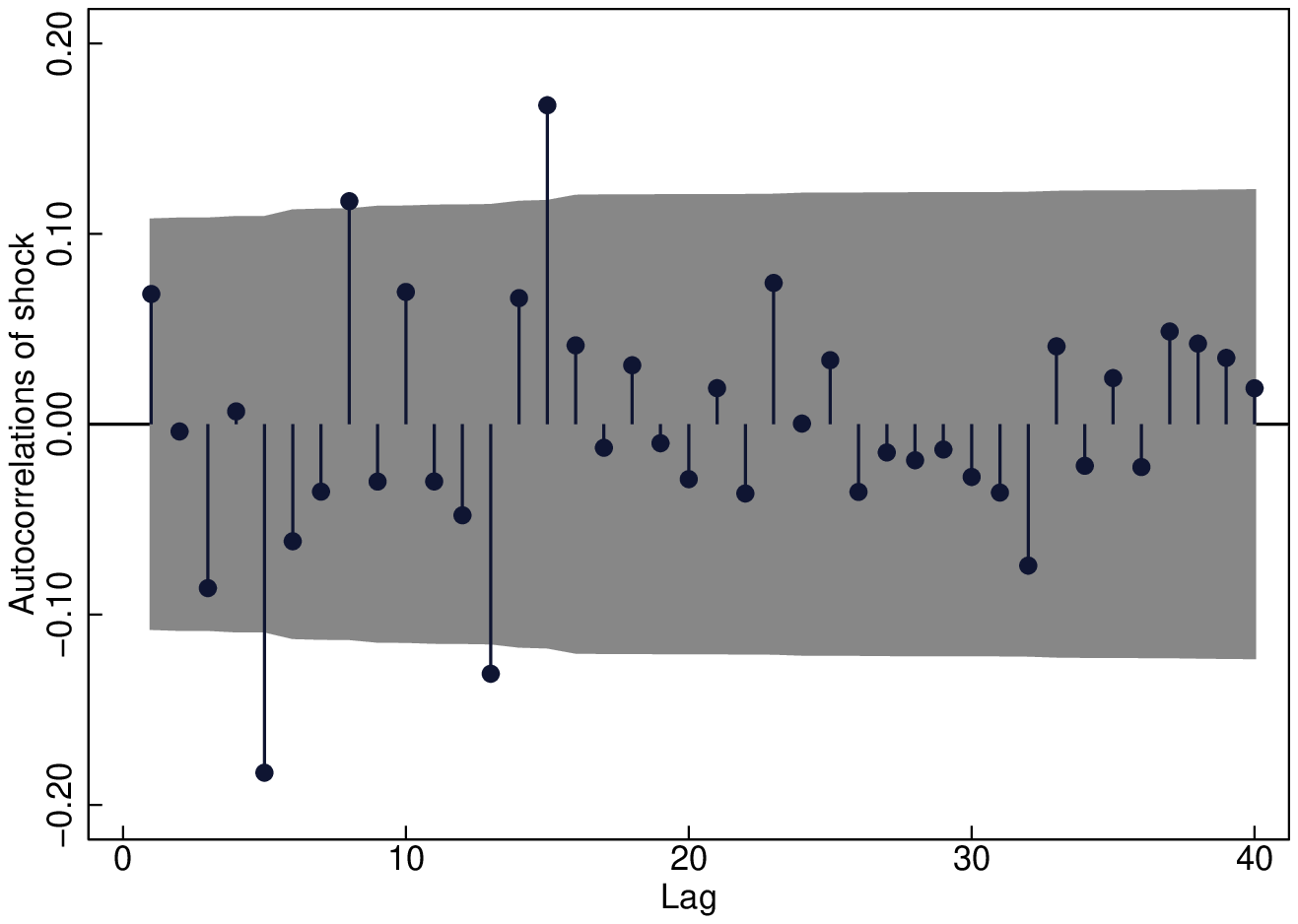}} & {\includegraphics[width=7.5cm]{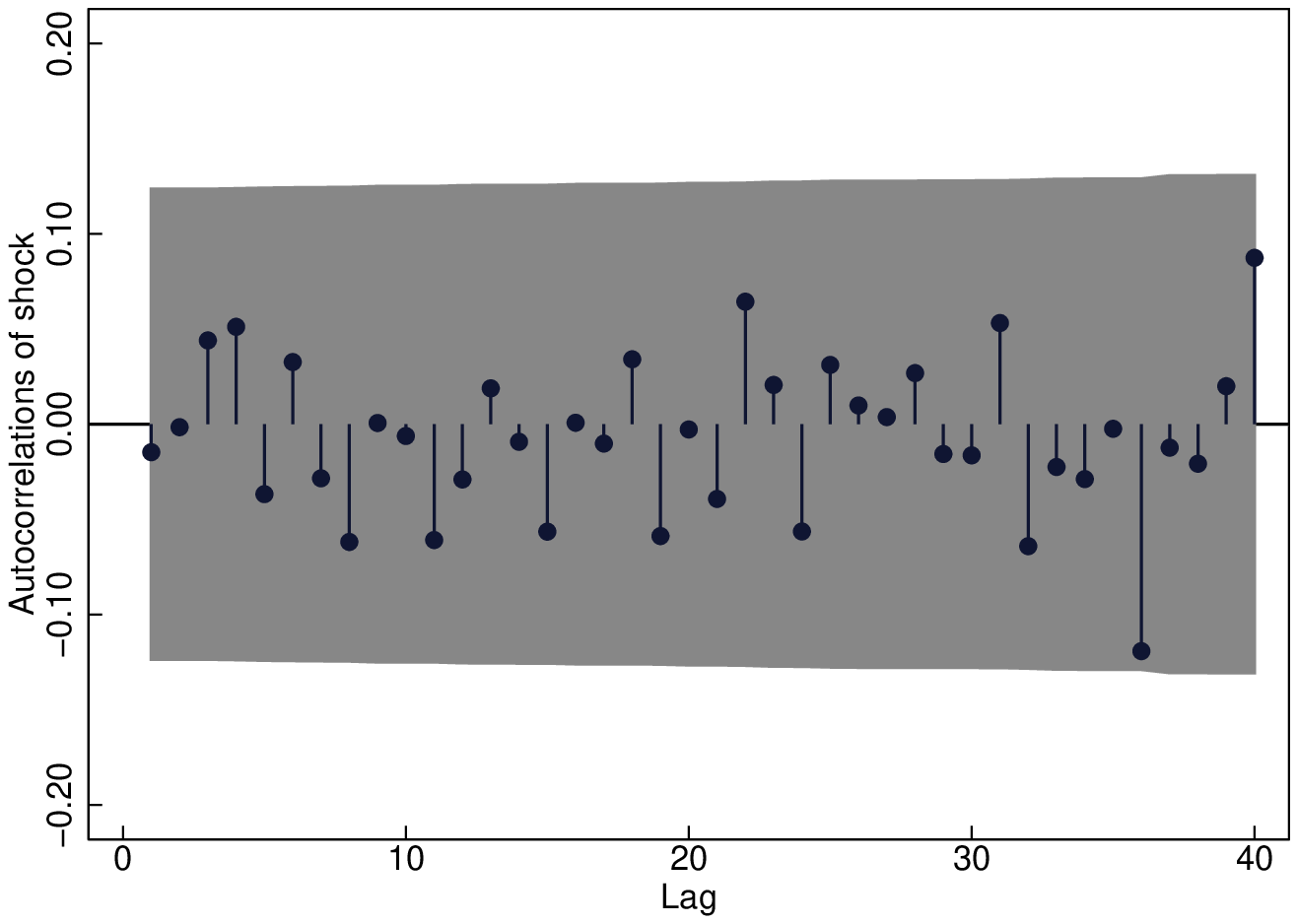}} \\
\end{tabular}
\end{center}
 {\footnotesize 95\% confidence intervals computed using Bartlett's formula for MA(q) processes. }
\end{figure}

\begin{figure}[!htb]
\caption{Time series of shocks} 
\label{fig:series}
\begin{center}
\begin{tabular}{ cc } 
Panel A: \citet{cloyne2013tax} & Panel B: \citet{cloyne2016monetary} \\
{\includegraphics[width=7.5cm]{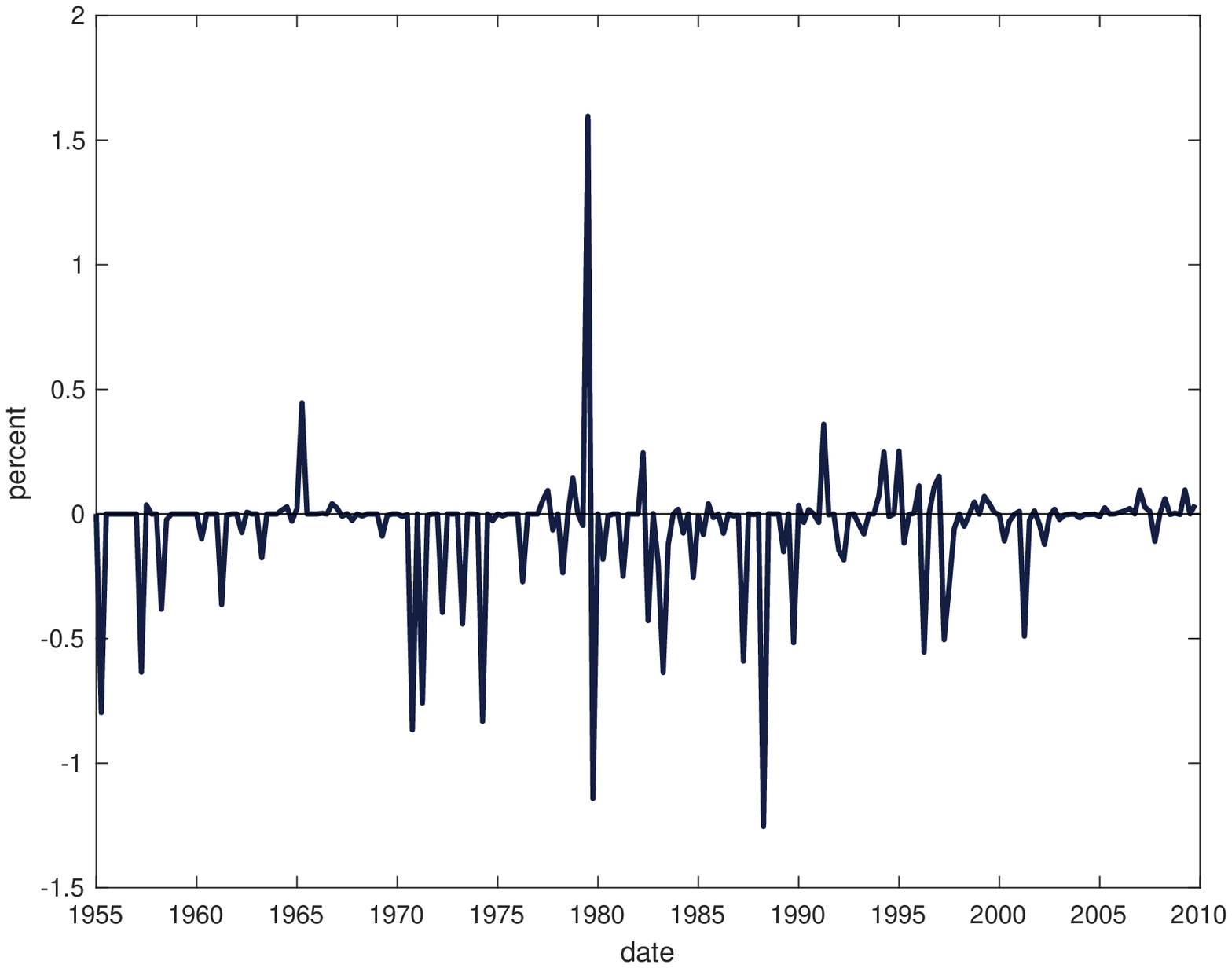}}	 & 	{\includegraphics[width=7.5cm]{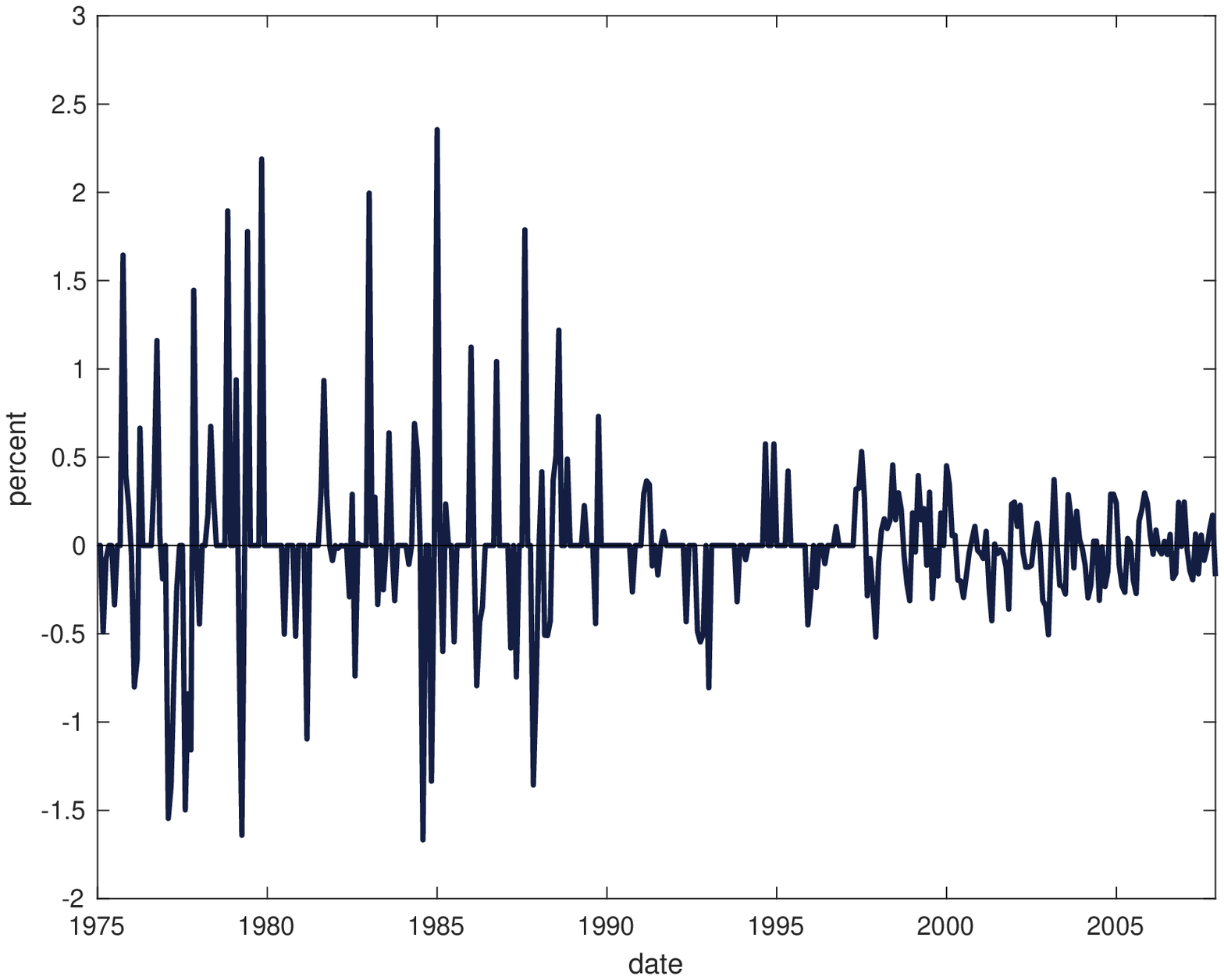}}	\\ 
Panel C: \citet{gertler2015monetary} & Panel D: \citet{ramey2018government} \\
{\includegraphics[width=7.5cm]{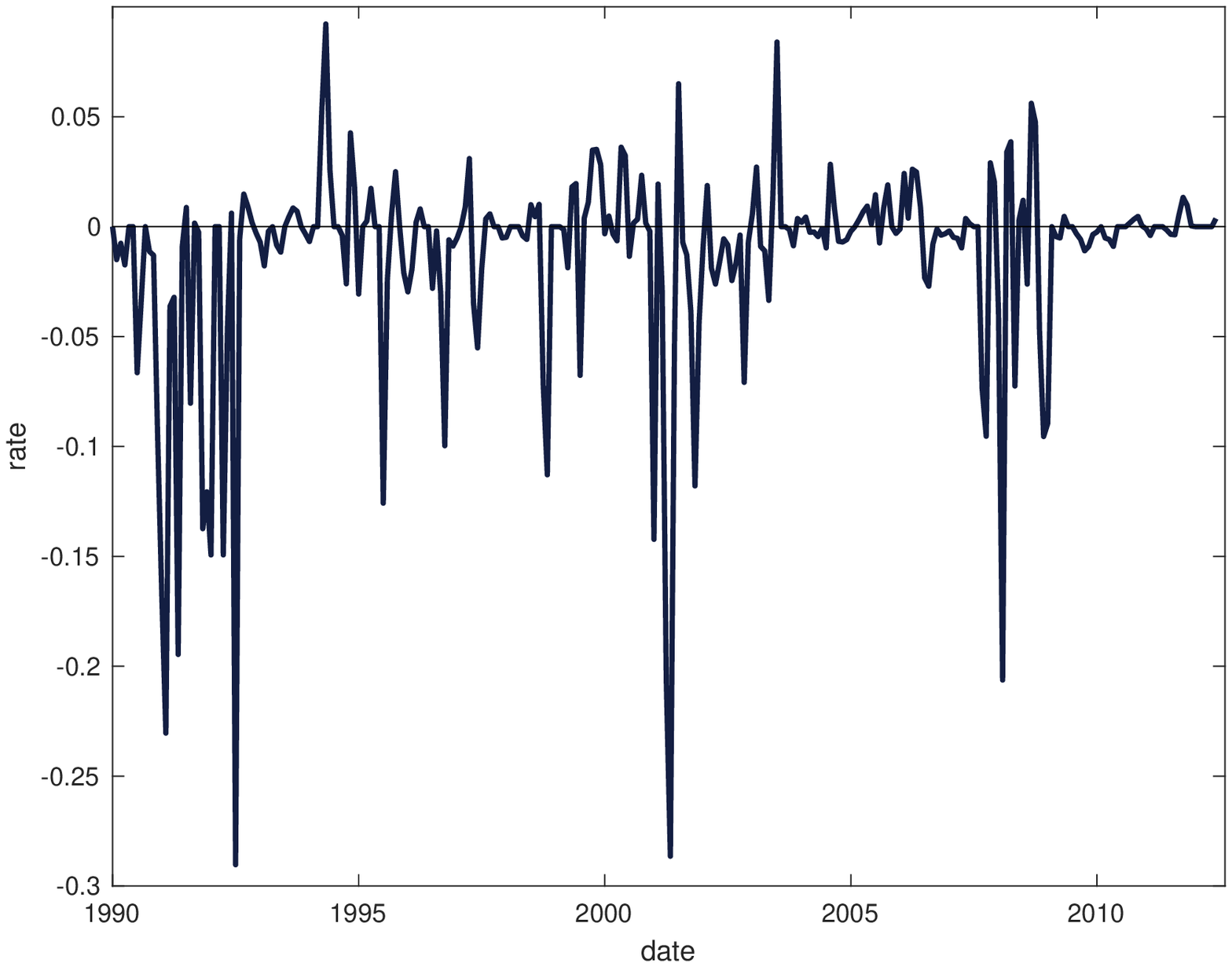}} & {\includegraphics[width=7.5cm]{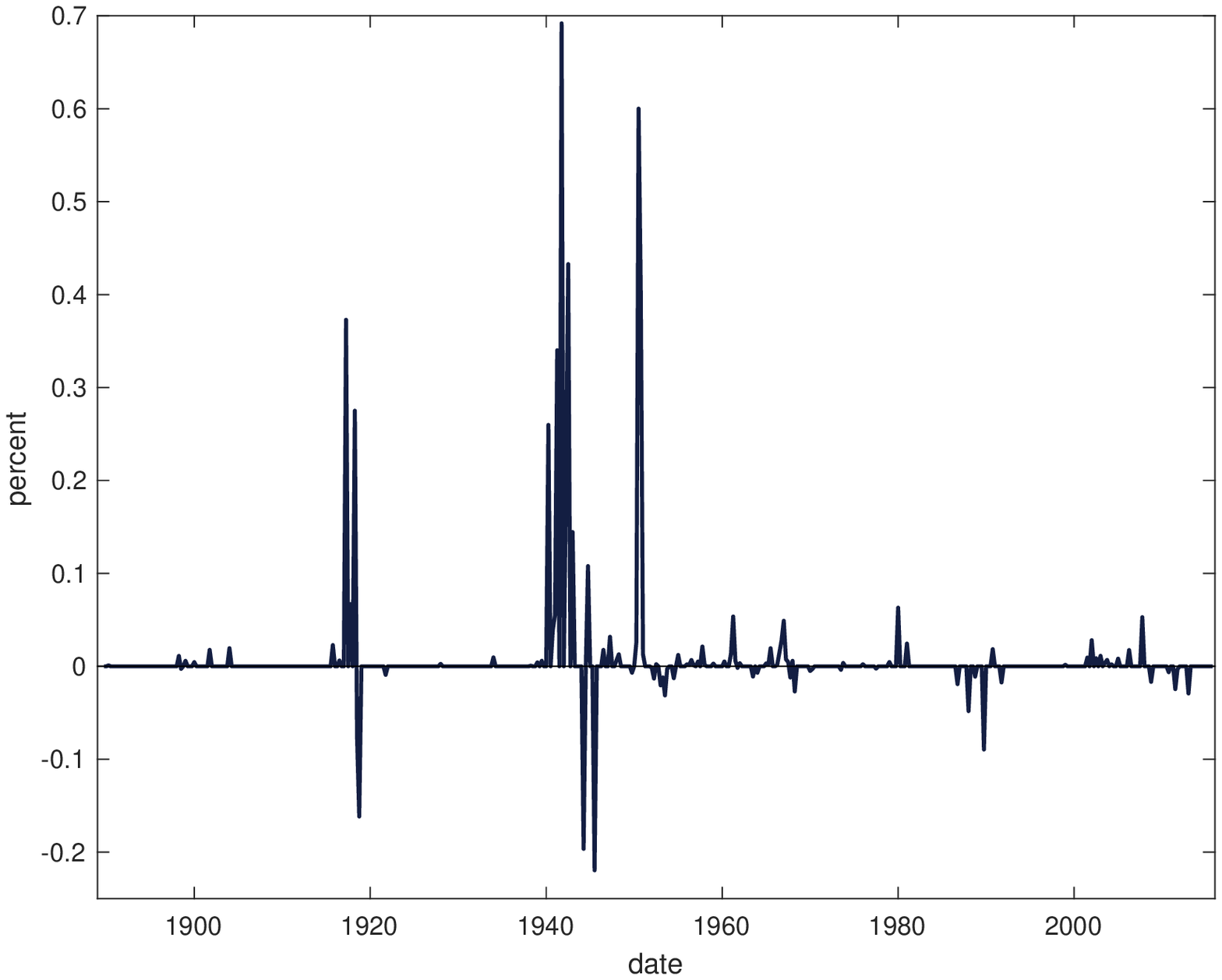}} \\
Panel E: \citet{romer2004measure} & Panel F: \citet{romer2010macroeconomic} \\
{\includegraphics[width=7.5cm]{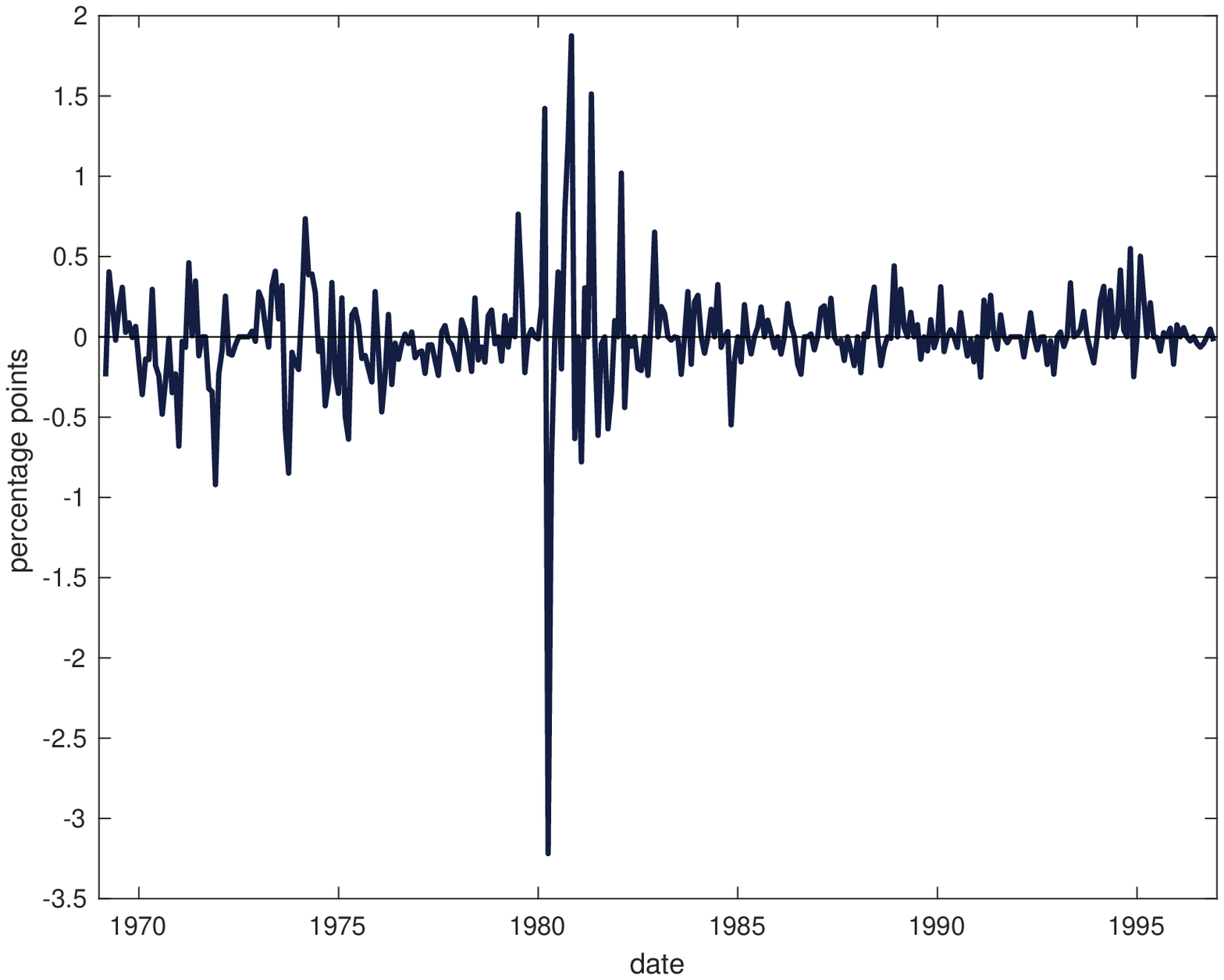}} & {\includegraphics[width=7.5cm]{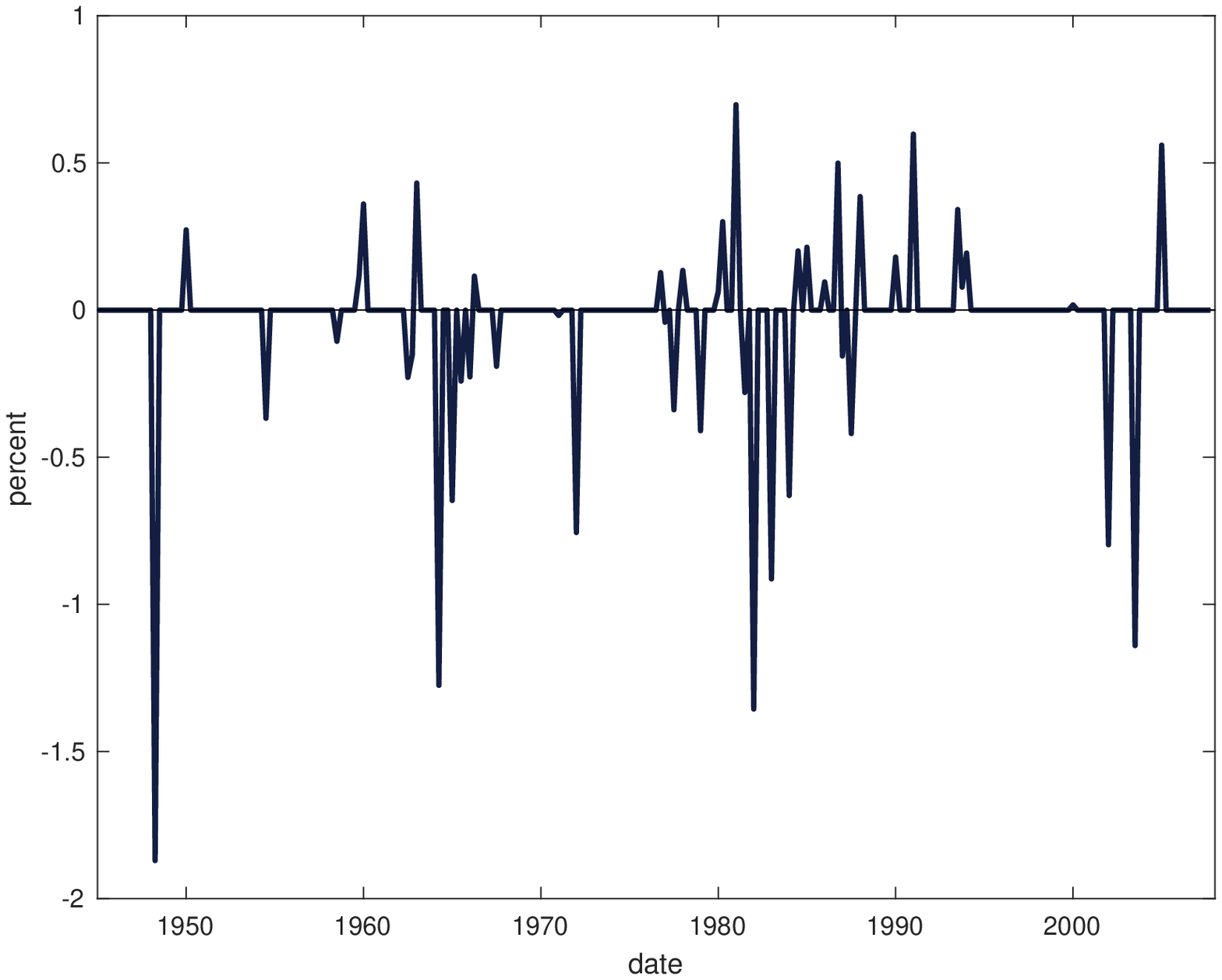}} \\
\end{tabular}
\end{center}
 {\footnotesize  }
\end{figure}

\begin{figure}[!htb]
\caption{Output and government spending responses, with and without leads} \label{fig:Ramey_responses_doubleCI}
\begin{center}
		{\includegraphics[width=10cm]{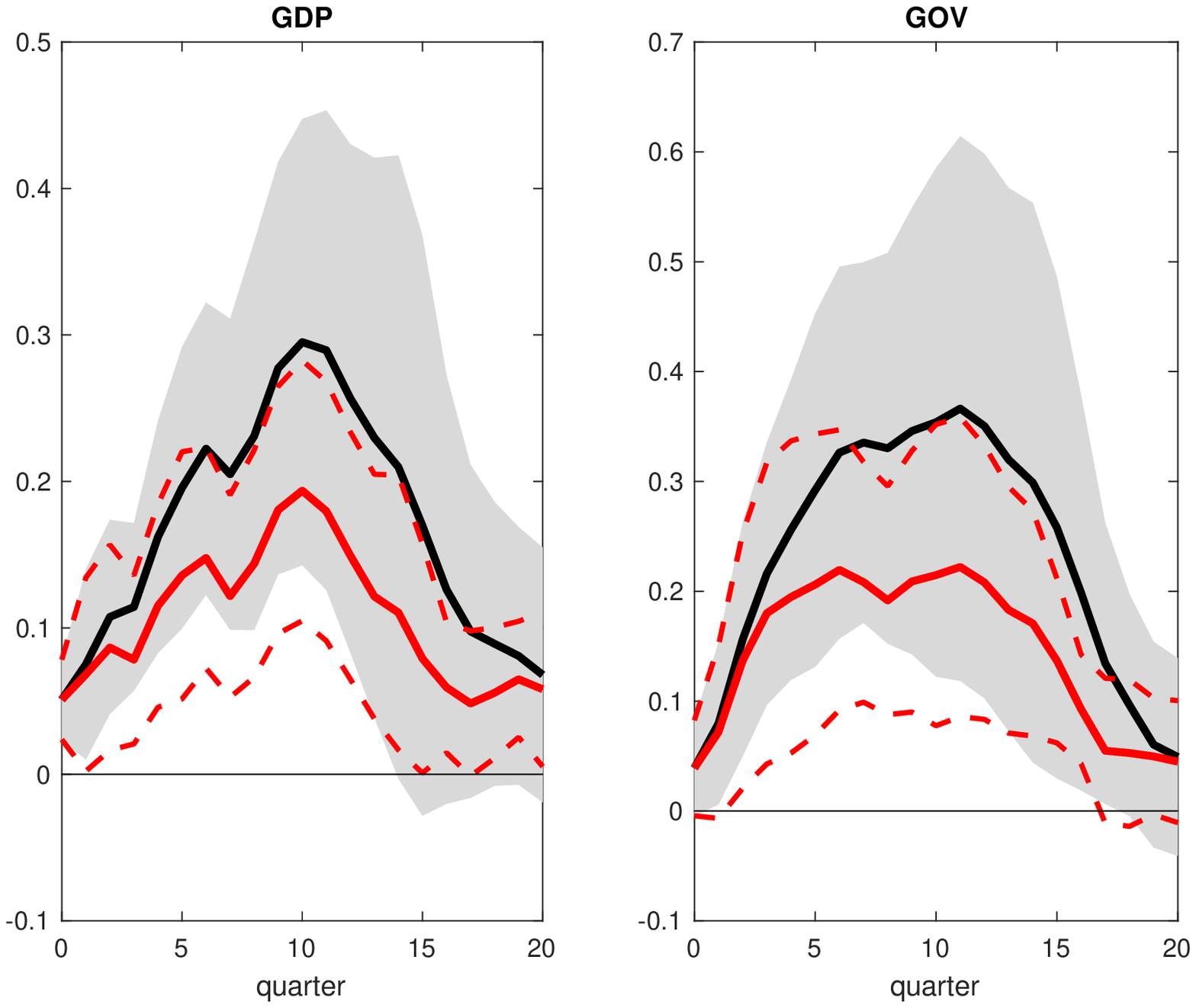}}	
\end{center}
\footnotesize Black lines show the results of estimating the system~\eqref{eq:RameyLP} without including any lead (as in \citet{ramey2018government}). Grey areas represent 68 and 95\% Newey-West confidence intervals for these estimates. Red solid lines represent the results of estimations when including $h$ leads of the \citet{ramey2018government} news variable (with 95\% Newey-West confidence intervals).
\end{figure}

\begin{figure}[!htb]
\caption{Government spending multiplier, with and without leads} \label{fig:Ramey_multiplier}
\begin{center}
		{\includegraphics[width=10cm]{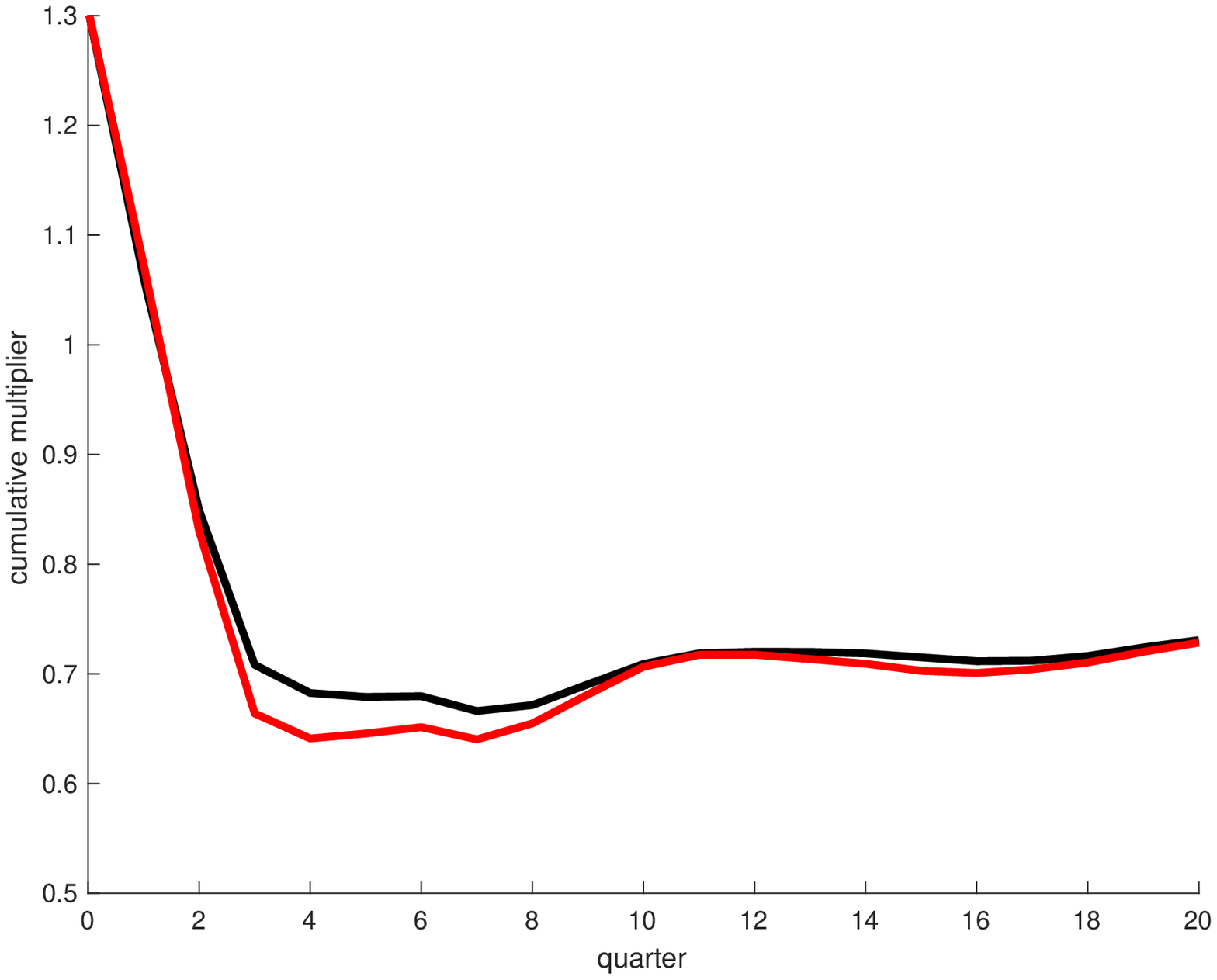}}	
\end{center}
\footnotesize Black lines show the cumulative multiplier without including any lead. Red solid lines represent the estimates of the cumulative multiplier when including a number of leads of the \citet{ramey2018government}) news variable that increase with the response horizon.
\end{figure}

\begin{figure}[!htb] \caption{Government spending multiplier during expansions and recessions, with and without leads}  \label{fig:G_nonli_mults_withSTATE}
\begin{center}
		{\includegraphics[width=10cm]{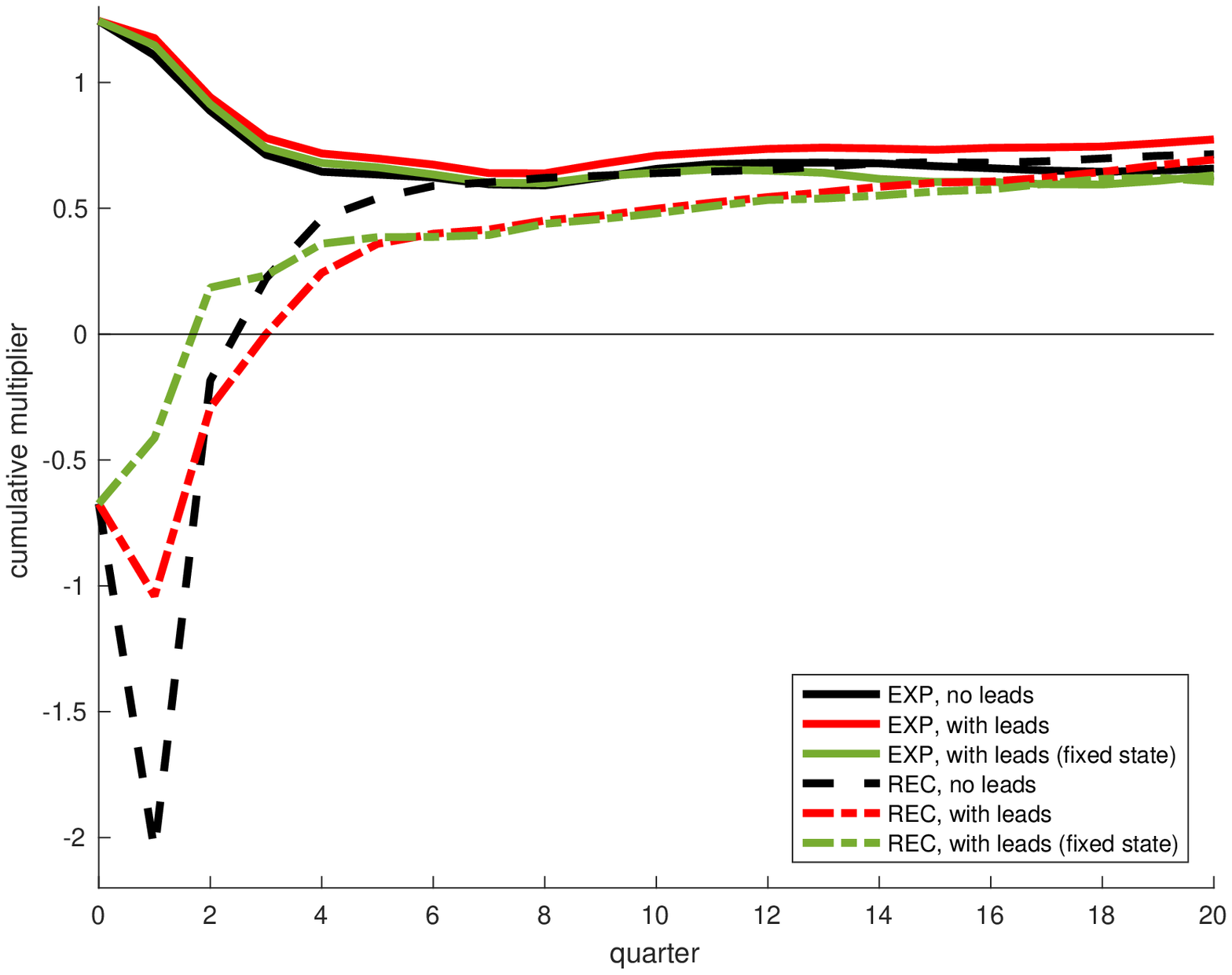}}	
\end{center}
\footnotesize The black solid and dashed lines show the cumulative multiplier during periods of expansion and recession, respectively,  without including any lead (as in \citet{ramey2018government}). The red solid and dashed lines show the cumulative multiplier during periods of expansion and recession, respectively, when including leads of the shocks and the state. Green solid and dashed lines refer to estimates of the expansion and recession multipliers, respectively, when including leads of the shock and the regime.
\end{figure}

\begin{figure}
\caption{Private consumption response to a fiscal consolidation shock, with and without leads} \label{fig:Guajardo_respCONS}
\begin{center}
		{\includegraphics[width=10cm]{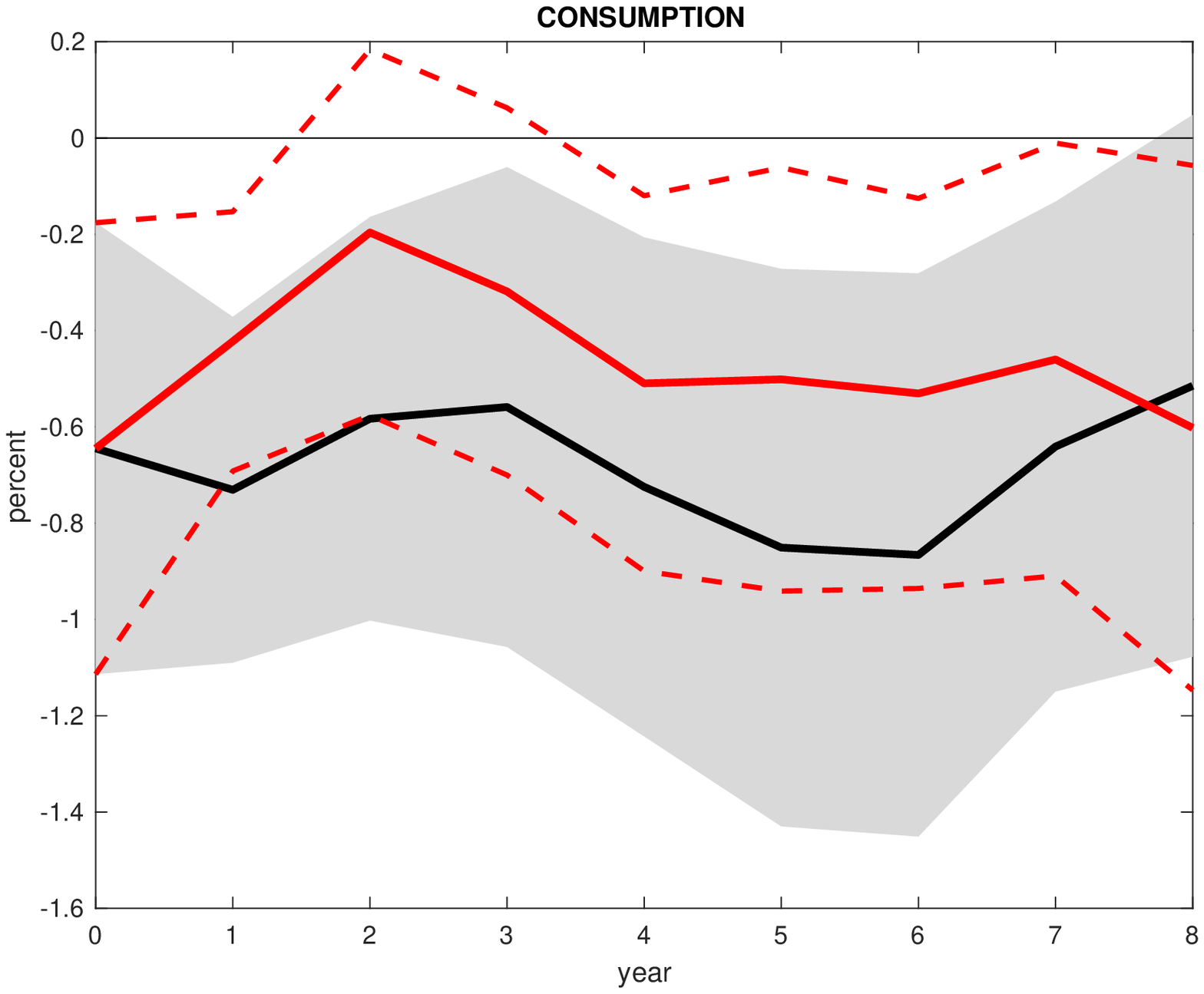}}	
\end{center}
\footnotesize Black lines show the results from equation~\eqref{eq:GLP2016} with private consumption as dependent variable and setting $\beta_{h,f}=0$, i.e., without including any lead  of the shock. Grey areas represent  90\% Newey-West confidence intervals for these estimates (save interval as reported in \citet{guajardo2014expansionary}). Red solid lines represent the results of estimations when allowing  $\beta_{h,f} \neq 0$ and including $h$ leads of the consolidations variable.
\end{figure}

%

\end{appendices}
\end{document}